\def\noeditingmarks{} % Comment this out, and it will show comments. Uncomment it, and it will hide them.
\newcommand{\G}{\mathbb{G}}
\newcommand{\Adv}{\mathcal{A}}
\newcommand{\PRF}{\textrm{PRF}}
\newcommand{\PRG}{\textrm{PRG}}
\newcommand{\GenGraph}{\textsc{GenGraph}}
\newcommand{\FindNeighbors}{\textsc{FindNeighbors}}
\newcommand{\ChooseSet}{\textsc{ChooseSet}}
\newcommand{\AsymEnc}{\textrm{AsymEnc}}
\newcommand{\AsymDec}{\textrm{AsymDec}}
\newcommand{\AsymGen}{\textrm{AsymGen}}
\newcommand{\SymGen}{\textrm{SymAuthGen}}
\newcommand{\SymEnc}{\textrm{SymAuthEnc}}
\newcommand{\SymDec}{\textrm{SymAuthDec}}
\newcommand{\Share}{\emph{\textrm{Share}}\xspace}
\newcommand{\Recon}{\emph{\textrm{Recon}}\xspace}
\newcommand{\DKG}{\textrm{DKG}}
\newcommand{\QUAL}{\textrm{QUAL}}
\newcommand{\Fmal}{\mathcal{F}_{\text{mal}}}
\newcommand{\Fmalrb}{\mathcal{F}_{\text{mal-robust}}}
\newcommand{\Fsetup}{\mathcal{F}_{\text{setup}}}
\newcommand{\Frand}{\mathcal{F}_{\text{rand}}}
\newcommand{\Fsumt}{\mathcal{F}_{\text{sum}}}
\newcommand{\Sim}{\mathcal{S}}
\newcommand{\Coms}{Decryptors\xspace}
\newcommand{\coms}{decryptors\xspace}
\newcommand{\com}{decryptor\xspace} 
\newcommand{\comset}{\mathcal{D}}
\def\NAT@def@citea{\def\@citea{\NAT@separator}}
\definecolor{maroon}{cmyk}{0, 0.87, 0.68, 0.32}
\definecolor{halfgray}{gray}{0.55}
\definecolor{ipython_frame}{RGB}{207, 207, 207}
\definecolor{ipython_bg}{RGB}{247, 247, 247}
\definecolor{ipython_red}{RGB}{186, 33, 33}
\definecolor{ipython_green}{RGB}{0, 128, 0}
\definecolor{ipython_cyan}{RGB}{64, 128, 128}
\definecolor{ipython_purple}{RGB}{170, 34, 255}
\definecolor{syellow}{HTML}{B58900}
\definecolor{sorange}{HTML}{CB4B16}
\definecolor{smagenta}{HTML}{D33682}
\definecolor{sviolet}{HTML}{6C71C4}
\definecolor{sblue}{HTML}{268BD2}
\definecolor{scyan}{HTML}{2AA198}
\definecolor{sgreen}{HTML}{859900}
\algrenewcommand\algorithmicindent{1em}
\algrenewcommand{\algorithmiccomment}[1]{{\color{CadetBlue}\hfill// #1}}
\newif\ifextended
\newif\iflongbatching
\newif\ifsubmission
\newif\ifelementary
\newcommand{\pgwrapper}[3]{\begingroup \color{#1} #2: #3 \endgroup}
\newcommand{\pgwrapperb}[1]{\textbf{#1}}
   \newcommand{\pgwrapperb}[1]{}
   \newcommand{\pgwrapper}[3]{}
\def\hn{\usefont{OT1}{phv}{mc}{n}\selectfont}
\newcommand{\mpfont}{\hn\scriptsize}
\newcommand{\MPworker}[2]{{\color{#1}\vrule\vrule}{\marginpar{\color{#1}\mpfont #2}}}
    \newcommand{\MPworker}[2]{}
\newcommand{\sys}{Flamingo\xspace}  % Flamingo with F(federated) L (learning) A (aggregation)
\theoremstyle{definition}
\newtheorem{theorem}{Theorem}
\newtheorem{definition}{Definition}
\newtheorem{lemma}{Lemma}
\renewcommand*{\@fnsymbol}[1]{\ensuremath{\ifcase#1\or \star\or \dagger\or \ddagger\or
   \mathsection\or \mathparagraph\or \|\or **\or \dagger\dagger
   \or \ddagger\ddagger \else\@ctrerr\fi}}
\newcommand{\A}{\mathcal{A}}
\newcommand{\Z}{\mathbb{Z}}
\newcommand{\ceil}[1]{\lceil#1\rceil}
\def\imod#1{\allowbreak\mkern10mu({\operator@font mod}\,\,#1)}
\def\compactify{\itemsep=0in \topsep=2pt \parsep=0.00in \partopsep=0pt
\leftmargin=2em}
\let\latexusecounter=\usecounter
\newenvironment{myitemize}%
  {\begin{list}{\labelitemi}{\itemsep3pt \topsep3pt \parsep0.00in
  \partopsep=3pt \leftmargin1.2em}}%
  {\end{list}}
\newenvironment{myitemize2}%
  {\begin{list}{\labelitemi}{\itemsep1pt \topsep2pt \parsep0.00in
  \partopsep=1pt \leftmargin1.2em}}%
  {\end{list}}
  {\begin{list}{\labelitemi}{\itemsep2pt \topsep2pt \parsep0.00in
  \partopsep=0pt \leftmargin1.2em}}%
  {\end{list}}
  {\begin{list}{\threequartdash}{\itemsep3pt \topsep3pt \parsep0.00in
  \partopsep=3pt \leftmargin1.5em}}%
  {\end{list}}
\def\compactsortof{\itemsep=0in \topsep=2pt \parsep=0.00in \partopsep=0pt
\leftmargin=1.7em}
\newenvironment{myenumerate2}
  {\def\usecounter{\compactsortof\latexusecounter}
   \begin{enumerate}}
  {\end{enumerate}\let\usecounter=\latexusecounter}
\def\compactsqueeze{\itemsep=0pt \topsep0pt \parsep=0ex \partopsep=0pt
\leftmargin=1.63em}
\def\discretionaryslash{\discretionary{/}{}{/}}
{\catcode`\/\active
\gdef\URLprepare{\catcode`\/\active\let/\discretionaryslash
        \def~{\char`\~}}}%
\def\URL{\bgroup\URLprepare\realURL}%
\def\realURL#1{\tt #1\egroup}%
\newcommand{\heading}[1]{\vspace{1ex}\noindent\textbf{#1}}
\begin{document}

% this controls contents in short/long version 
% long version: eprint
% short version: camera ready
\newboolean{longver}
\setboolean{longver}{true}

\title{\sys: Multi-Round Single-Server Secure Aggregation \\ with Applications to Private Federated Learning}

\author{Yiping Ma$^\star$ \quad  
        Jess Woods$^\star$ \quad  
        Sebastian Angel$^{\star\dagger}$ \quad  
        Antigoni Polychroniadou$^\ddagger$ \quad  
        Tal Rabin$^{\star}$ \\
        {\normalsize \em $^\star$University of Pennsylvania \quad $^\dagger$Microsoft Research 
        \quad $^\ddagger$J.P. Morgan AI Research $\&$ AlgoCRYPT CoE}}

\date{}

\pagestyle{plain}

\maketitle

\ifthenelse{\boolean{longver}}{%
\thispagestyle{plain}
}{
\pagenumbering{gobble}
}

\begin{abstract}
This paper introduces \emph{\sys}, a system for \emph{secure aggregation} of data across a large set of clients.
In secure aggregation, a server sums up the private inputs of clients and obtains the 
  result without learning anything about the individual inputs beyond what is implied 
  by the final sum.
\sys{} focuses on the multi-round setting found in federated learning in which many 
  consecutive summations (averages) of model weights are performed to derive a good model. Previous protocols, such as Bell et al. (CCS '20), have been designed for a single round and are adapted to the federated learning setting by repeating the protocol multiple times. 
\sys{} eliminates the need for the per-round setup of previous protocols, and has a new lightweight
  dropout resilience protocol to ensure that if clients leave in the middle of a sum the server can still obtain a meaningful result. 
Furthermore, \sys{} introduces a new way to locally choose the 
  so-called client neighborhood introduced by Bell et al. 
These techniques help \sys{} reduce the number of 
  interactions between clients and the server, resulting in a significant reduction in the end-to-end runtime for a full training session over prior work.

We implement and evaluate \sys{} and show that it can securely train a neural network on the (Extended) MNIST and CIFAR-100 datasets, and the model converges without a loss in accuracy, compared to a non-private federated learning system.

\end{abstract}

\section{Introduction}\label{s:intro}
In \emph{federated learning}, a server wants to train a model using 
  data owned by many clients (e.g., millions of mobile devices).
In each \emph{round} of the training, the server randomly selects a subset of clients, 
  and sends them the current model's weights. 
Each selected client updates the model's weights by running a prescribed training algorithm
  on its data locally, and then sends the updated weights to the server.
The server updates the model by averaging the collected weights.
The training takes multiple such rounds until the model converges.

This distributed training pattern is introduced with the goal of providing a 
  critical privacy guarantee in training---the raw data never leaves 
  the clients' devices. 
However, prior works~\cite{melis18exploiting, zhu19deep} show that the 
  individual weights still leak information about the raw data,
  which highlights the need for a mechanism that can securely aggregate the weights
  computed by client devices~\cite{mcmahan17communication, yuan20federated}.
This is precisely an instance of \emph{secure aggregation}. 

Many protocols and systems for secure aggregation have been proposed, e.g., 
  in the scenarios of private error reporting and statistics
  collection~\cite{shi11privacy, chan11privacy, corrigan-gibbs17prio,
  anderson21aggregate, boneh21lightweight, zhong22ibex}.
However, secure aggregation in federated learning, due to its specific model,
  faces unprecedented challenges: a large number of clients, high-dimensional 
  input vectors (e.g., model weights), multiple rounds of aggregation prior
  to model convergence, and unstable devices (i.e., some devices might go 
  offline prior to completing the protocol).
It is therefore difficult to directly apply these protocols in a black-box way
  and still get good guarantees and performance.

Recent works~\cite{bonawitz17practical, bell20secure, so22lightsecagg} propose
  secure aggregation tailored to federated learning scenarios.
In particular, a state-of-the-art protocol~\cite{bell20secure} (which we call
  BBGLR) can handle one aggregation with thousands of clients and high-dimensional
  input vectors, while tolerating devices dropping out at any point during 
  their execution.
The drawback of these protocols is that they only examine one round
  of aggregation in the 
  full training process, i.e., a selection of a 
  subset of the clients and a sum over their inputs.

Utilizing the BBGLR protocol (or its variants) multiple times to do summations in the full 
  training of a model incurs high costs.
Specifically, these protocols follow the pattern of having each client establish
  input-independent secrets with several other clients 
  in a \emph{setup} phase, and then 
  computing a single sum in a \emph{collection} phase using the secrets.
These secrets cannot be reused for privacy reasons.
Consequently, for each round of aggregation in the training process, 
  one needs to perform an expensive fresh setup.
Furthermore, in each \emph{step} (client-server round trip) of the setup and 
  the collection phases, the server has to interact with all of the clients. 
In the setting of federated learning, such interactions are especially costly 
  as clients may be geographically
  distributed and may have limited compute and battery power or varying network conditions. 

In this paper we propose \emph{\sys{}}, the first single-server secure 
  aggregation system that works well for multiple rounds of aggregation 
  and that can support full sessions of training in the stringent setting
  of federated learning.
At a high level, \sys{} introduces a one time setup and a collection 
  procedure for computing multiple sums, such that the secrets established in 
  the setup can be reused throughout the collection procedure.
For each summation in the collection procedure,
  clients mask their input values (e.g., updated weight vectors in
  federated learning) with a random mask derived from those secrets, and then 
  send the masked inputs to the server.
The server sums up all the masked inputs and obtains the final aggregate value 
  (what the server wants)  masked with the sum of all masks.
\sys{} then uses a lightweight protocol whereby a small number of 
  randomly chosen clients (which we call \emph{\coms}) interact with the 
  server to remove the masks.

The design of \sys{} significantly reduces the overall training time of a 
  federated learning model compared to running BBGLR multiple times.
First, \sys{} eliminates the need for a setup phase for each summation, 
  which reduces the number of steps in the full training session.
Second, for each summation, \sys{} only has one step that requires the server 
  to contact all of the clients (asking for their inputs); the rest of the 
  interactions are performed between the server and a few clients who serve as \coms.

Besides training efficiency, the fact that a client needs to 
  only speak once in a round reduces the model's bias towards results
  that contain only data from powerful, stably connected devices.
In \sys{}, the server contacts the clients once to collect inputs;
  in contrast, prior works have multiple client-server interaction steps 
  in the setup phase (before input collection) and thus
  filter out weak devices for summation, as staying available longer is challenging. 
Seen from a different angle, if we fix the number of clients, the failure 
  probability of a client in a given step, and the network conditions, \sys's results 
  are of higher quality, as they are computed over more inputs than prior works.

In summary, \sys's technical innovations are:
\begin{myitemize2}
  \item \emph{Lightweight dropout resilience.} A new mechanism to 
    achieve dropout resilience in which the server only contacts a 
    small number of clients to remove the masks. All other clients are free to leave after the one step in which they submit their inputs
    without harming the results.
  \item \emph{Reusable secrets.} A new way to generate masks that allows the
    reuse of secrets across rounds of aggregation.
  \item \emph{Per-round graphs.} A new graph generation procedure
    that allows clients in \sys{} to unilaterally determine (without the help
    of the server as in prior work) which pairwise masks they should use in any 
    given round.
\end{myitemize2}

These advancements translate into significant performance 
  improvements~(\S\ref{s:eval:mnist}).
For a 10-round pure summation task, \sys{} is 3$\times$ faster than BBGLR (this
  includes \sys's one-time setup cost), and includes the contribution of more 
  clients in its result.
When training a neural network on the Extended MNIST
dataset, \sys{} takes about 40 minutes to converge while BBGLR needs roughly 3.5
hours to reach the same training accuracy.

\section{Problem Statement}\label{s:pb}

Secure aggregation is useful in a variety of domains:
  collecting, in a privacy-preserving way, error
  reports~\cite{corrigan-gibbs17prio,boneh21lightweight}, usage statistics~\cite{shi11privacy, chan11privacy}, 
  and ad conversions~\cite{zhong22ibex, anderson21aggregate}; it has even
  been shown to be a key building block for computing private 
  auctions~\cite{zhong23addax}.
But one key application is \emph{secure federated learning}, whereby a server wishes
  to train a model on data that belongs to many clients, but the clients do not wish 
  to share their data (or other intermediate information such as weights that 
  might leak their data~\cite{zhu19deep}).
To accomplish this, each client receives the original model from the server and computes 
  new \emph{private} weights based on their own data.
The server and the clients then engage in a secure aggregation protocol that helps the server
  obtain the sum of the clients' private weights, without learning anything
  about individual weights beyond what is implied by their sum.
The server then normalizes the sum to obtain the average weights which 
  represent the new state of the model.
The server repeats this process until the training converges.

This process is formalized as follows. 
Let $[z]$ denote the set of integers $\{1, 2, \ldots, z\}$, and let 
  $\vec{x}$ denote a vector; all operations on vectors are component-wise.  
A total number of $N$ clients are fixed before the training starts.
Each client is indexed by a number in $[N]$.
The training process consists of $T$ rounds.
In each round $t \in [T]$, a set of clients is randomly sampled 
  from the $N$ clients, denoted as $S_t$.
Each client $i \in S_t$ has an input vector, $\vec{x}_{i,t}$, for round $t$. (A client may be selected
in multiple rounds and may have different inputs in each round.) In each round, the server wants to securely compute the sum of the $|S_t|$ input vectors, 
  $\sum_{i\in S_t} \vec{x}_{i,t}$. 
  
In practical deployments of federated learning, a complete sum is hard to guarantee, 
as some clients may drop out in the middle of the aggregation process and
the server must continue the protocol without waiting for them to come back (otherwise
the server might be blocked for an unacceptable amount of time).
So the real goal is to compute the sum of the input vectors from the largest possible 
subset of $S_t$; we elaborate on this in the next few sections.

\subsection{Target deployment scenario}
Based on a recent survey of
federated learning deployments~\cite{kairouz21advances}, common 
  parameters are as follows.
$N$ is in the range of $100$K--$10$M clients, where $|S_t| = 50$--$5{,}000$ 
  clients are chosen to participate in a given round $t$.
The total number of rounds $T$ for a full training session is $500$--$10{,}000$. 
Input weights ($\vec{x}_{i,t}$) have typically on the order of 1K--500K entries for the datasets we surveyed~\cite{krizhevskycifar, krizhevsky2009learning, caldas2018leaf, caldasleafgithub}.

Clients in these systems are heterogeneous devices with varying degrees of 
  reliability (e.g., cellphones, servers) and 
  can stop responding due to device or network failure.

\subsection{Communication model}\label{s:comm-model}
Each client communicates with the server through a private and authenticated channel.
Messages sent from clients to other clients are forwarded via the server,
  and are end-to-end encrypted and authenticated.

\subsection{Failure and threat model}\label{s:threat}
We model failures of two kinds: (1) honest clients that disconnect
  or are too slow to respond as a result of unstable network conditions, power loss, etc; 
  and (2) arbitrary actions by an adversary that controls the server and a 
  bounded fraction of the clients.
We describe each of these below. 

\heading{Dropouts.} 
In each round of summation, the server interacts with the clients (or a subset 
  of them) several times (several \emph{steps}) to compute a sum. 
If a server contacts a set of clients during a step and some of the clients are 
  unable to respond in a timely manner, the server has no choice but to keep 
  going; the stragglers are dropped and do \emph{not} participate in the 
  rest of the steps for this summation round. 
In practice, the fraction of dropouts in the set depends on the client response time distribution
  and a server timeout (e.g., one second); longer timeouts mean lower fraction of dropouts. 

There are two types of clients in the system: regular clients that
  provide their input, and \emph{\coms}, who are special clients whose 
  job is to help the server recover the final result. 
We upper bound the fraction of regular clients that drop out
  in any given aggregation round by $\delta$,
  and upper bound the fraction of \coms 
  that drop out in any given aggregation round by $\delta_D$.

\heading{Adversary.}
We assume a static, malicious adversary that corrupts the server 
  and up to an $\eta$ fraction of the total $N$ clients.
That is, the adversary compromises $N \eta$ clients independent of
  the protocol execution and the corrupted set stays the same
  throughout the entire execution (i.e., all rounds).
Note that malicious clients can obviously choose to drop out during protocol 
  execution, but to make our security definition and analysis clear, 
  we consider the dropout of malicious clients separate from, and in addition to, 
  the dropout of honest clients.

Similarly to our dropout model, we distinguish between the fraction
  of corrupted regular clients ($\eta_{S_t}$) and corrupted \coms ($\eta_D$).
Both depend on $\eta$ but also on how \sys{} samples 
  regular clients and \coms from the set of all $N$ clients.
We defer the details to Appendix~\ref{app:failure-model}, but briefly, 
  $\eta_{S_t} \approx \eta$; 
  and given a (statistical) security parameter $\kappa$, $\eta_D$ is upper bounded by $\kappa \eta$ with probability $2^{-\Theta(\kappa)}$.

\heading{Threshold requirement.}
The minimum requirement for \sys{} to work is $\delta_D +  \eta_D < 1/3$.  
For a target security parameter $\kappa$, we show in
Appendix~\ref{app:param} how to select other parameters 
  for \sys{} to satisfy the above requirement and result in minimal asymptotic costs.

\heading{Comparison to prior works.}
BBGLR and other works~\cite{bonawitz17practical, bell20secure}
  also have a static, malicious adversary but only for a single round of aggregation. In fact, in Section~\ref{s:gaps} we show that their protocol cannot be
  naturally extended to multiple aggregations that withstands a malicious adversary
  throughout.

\subsection{Properties}\label{s:properties}

\sys{} is a secure aggregation system that achieves the following 
  properties under the above threat and failure model.
We give informal definitions here, and defer the formal definitions to Section~\ref{s:formal-thms}.
\begin{myitemize2}
    \item \emph{Dropout resilience}: 
    when all parties follow the protocol, 
    the server, in each round $t$, will get a sum of inputs from all the online
    clients in $S_t$. 
    Note that this implicitly assumes that the protocol both completes all the rounds
    and outputs meaningful results.
   
    \item \emph{Security}: for each round $t$ summing over the inputs
      of clients in $S_t$, a malicious adversary learns the sum of inputs 
      from at least $(1-\delta-\eta) |S_t|$ clients. 
\end{myitemize2}

\medskip

Besides the above, we introduce a new notion that quantifies
  the quality of a single sum. 
\begin{myitemize2}
    \item \emph{Sum accuracy}: A round of summation has sum 
    accuracy $\tau$ if the final sum result contains
  the contribution of a $\tau$ fraction of the clients who are selected to 
  contribute to that round (i.e., $\tau |S_t|$).
\end{myitemize2}

\heading{Input correctness.}
In the context of federated learning, if malicious clients input bogus weights, 
  then the server could derive a bad model (it may even contain ``backdoors'' that cause
  the model to misclassify certain inputs~\cite{bagdasaryan20how}).
Ensuring correctness against this type of attack is out of the scope of this work; to our knowledge, 
  providing strong guarantees against malicious inputs remains an open problem.
Some works~\cite{roth19honeycrisp, roth20orchard, roth21mycelium, chowdhury21eiffel, angel22efficient, bell22acorn} 
  use zero-knowledge proofs to bound how much a client can bias the final 
  result, but they are unable to formally prove the absence of all possible attacks.

\heading{Comparison to prior work.}
\sys{} provides a stronger security guarantee than BBGLR.\@
In \sys, an adversary who controls the server and some clients learns
  a sum that contains inputs from at least a $1-\delta-\eta$ fraction of clients. 
In contrast, the malicious protocol in BBGLR leaks several sums:
  consider a partition of the $|S_t|$ clients,  
  where each partition set has size at least $\alpha \cdot |S_t|$;
  a malicious adversary in BBGLR learns the sum of each of the partition sets.
Concretely, for 5K clients, when both $\delta$ and $\eta$ are 0.2,
  $\alpha < 0.5$.
This means that the adversary learns the sum of two subsets. 
This follows from Definition 4.1 and Theorem 4.9 in BBGLR~\cite{bell20secure}.

\section{Background}\label{bg}

In this section, we discuss BBGLR~\cite{bell20secure}, which is the state of the
  art protocol for a single round secure aggregation in the federated learning setting. 
We borrow some ideas from this protocol, but design \sys quite differently
  in order to support a full training session.

\subsection{Cryptographic building blocks}\label{s:blocks}

We start by reviewing some standard cryptographic primitives used by BBGLR and \sys.

\heading{Pseudorandom generators.}
A PRG is a deterministic function 
  that takes a random seed in $\{0,1\}^{\lambda}$ and outputs a longer string that is computationally 
  indistinguishable from a random string ($\lambda$ is the computational security parameter).
For simplicity, whenever we use PRG with a seed that is not in $\{0,1\}^\lambda$, 
  we assume that there is a deterministic function that maps the seed to
  $\{0,1\}^\lambda$ and preserves security.
Such mapping is discussed in detail in Section~\ref{s:impl}.

\heading{Pseudorandom functions.}
A $\PRF: \mathcal{K} \times \mathcal{X} \rightarrow \mathcal{Y}$ is a family of deterministic functions indexed by a key 
  in $ \mathcal{K}$ that map an input in $\mathcal{X}$ to an output in $\mathcal{Y}$ in such a way that the indexed function 
  is computationally indistinguishable from a truly random function from $\mathcal{X}$ to $\mathcal{Y}$.
We assume a similar deterministic map for inputs as described in the PRG above.

\heading{Shamir's secret sharing.}
An $\ell$-out-of-$L$ secret sharing scheme consists of the following two
  algorithms, $\Share$ and $\Recon$.
$\Share(s, \ell, L) \rightarrow (s_1, \ldots, s_L)$ takes in a secret $s$, a threshold
$\ell$, and the number of desired shares $L$, and outputs $L$
shares $s_1, \ldots, s_L$.
$\Recon$ takes in at least $\ell + 1$ of the shares,
and output the secret $s$; i.e.,
for a set $U \subseteq [L]$ and $|U| \ge \ell+1$,
$\Recon(\{s_u\}_{u \in U}) \rightarrow s$. 
Security requires that fewer than $\ell+1$ shares reveal no information about $s$.

\heading{Diffie-Hellman key exchange.}
Let $\G$ be a group of order $q$ in which the Decisional Diffie-Hellman (DDH) problem is hard,
  and $g$ be a generator of $\G$.
Alice and Bob can safely establish a shared secret (assuming a passive adversary) as follows. 
Alice samples a secret $a \xleftarrow{\$} \Z_q$, and sets her public value to $g^a \in \G$.
Bob samples his secret $b \xleftarrow{\$} \Z_q$, and sets his public value to $g^b \in \G$.
Alice and Bob exchange the public values and raise the other party’s value to their secret, 
  i.e., $g^{ab} = {(g^a)}^b = {(g^b)}^a$.
If DDH is hard, the shared secret $g^{ab}$ is only known to Alice and Bob but no one else.

\subsection{The BBGLR protocol}\label{s:bbglr}
BBGLR is designed for computing a single sum on the inputs of a set of clients.
To apply it to the federated learning setting, we can simply assume that
  in a given round of the training process, there are $n$ clients selected 
  from a large population of size $N$.
We can then run BBGLR on these $n$ clients to compute a sum of their inputs.

The high level idea of BBGLR is for clients to derive pairwise
  random masks and to add those masks to their input vectors 
  in such a way that when all the masked input vectors across 
  all clients are added, the masks cancel out.
It consists of a \emph{setup} phase and a \emph{collection} phase.
We first describe a semi-honest version below.

\heading{Setup phase.} 
The setup phase consists of three steps: 
  (1) create a database containing public keys of all of the $n$ clients; 
  (2) create an undirected graph where vertices are clients, and each vertex
  has enough edges to satisfy certain properties;
  (3) have each client send shares of two secrets to its neighbors in the graph.
We discuss these below.

In the first step, each client $i \in [n]$ generates a secret $a_i$ and
  sends $g^{a_i}$ to the server, where $g^{a_i}$ represents client $i$'s
  public key.
The server then stores these public keys in a database.
Note that the malicious-secure version of BBGLR requires the server to be semi-honest for this particular step, or
  the existence of a trusted public key infrastructure (PKI).

In the second step, the graph is established as follows.
Each client $i \in [n]$ randomly chooses $\gamma$ other clients in $[n]$ as 
  its neighbors, and tells the server about their choices.
After the server collects all the clients' choices, it notifies each client of 
  their neighbors indexes in $[n]$ and public keys. 
The neighbors of client $i$, denoted as $A(i)$, are those corresponding to 
  vertices that have an edge with $i$ (i.e., $i$ chose them or they chose $i$).

Finally, each client $i$ uses Shamir's secret sharing to share $a_i$ and 
  an additional random value $m_i$ to its neighbors $A(i)$ (let the threshold 
  be $\ell < |A(i)|$), where the shares are end-to-end encrypted with a secure authenticated encryption scheme and sent via the server
  (\S\ref{s:comm-model}).

\heading{Collection phase.}
Client $i$  sends
  the following masked vector to the server: 
\[\small  Vec_i = \vec{x}_{i} + \underbrace{\sum_{j \in A(i), i < j} \PRG(r_{i,j} )
  - \sum_{j \in A(i), i > j} \PRG(r_{i,j} )}_{\text{pairwise mask}} 
  + \underbrace{\PRG(m_i)}_{\text{individual mask}}, \]
  where $r_{i,j} =  g^{a_i a_j}$, which can be computed by client $i$ since it has 
  the secret $a_i$ and $j$'s public key, $g^{a_j}$. (These are essentially Diffie-Hellman key exchanges between a client and its neighbors.)
Here we view the output of the $\PRG$ as a vector of integers instead
  of a binary string.
Also, we will write the pairwise mask term as $\small \sum_{j \in A(i)} \pm \PRG(r_{i,j} )$ for ease of notation.

As we mentioned earlier (\S\ref{s:pb}), clients may drop out due to unstable network 
  conditions, power loss, etc.
This means that the server may not receive some of the masked vectors 
  within an acceptable time period. 
Once the server times out, the server labels the clients whose vectors have been 
  received  as ``online''; the rest are labeled ``offline''. 
The server shares this information with all the $n$ clients.
The server then sums up all of the received vectors, which yields
  a vector that contains the masked sum.
To recover the correct sum, the server needs a way to remove the masks.
It does so by requesting for each \emph{offline} client $i$, the shares of 
  $a_i$ from $i$'s neighbors; and for each \emph{online} client $j$, the 
  shares of $m_{j}$ from $j$'s neighbors.
These shares allow the server to reconstruct either the pairwise mask or the 
  individual mask for each client.
As long as there are more than $\ell$ neighbors that send the requested shares,
  the server can successfully remove the masks and obtain the sum. 
This gives the dropout resilience property of BBGLR.

One might wonder the reason for having the individual mask $m_i$, since
  the pairwise mask already masks the input.
To see the necessity of having $m_i$, assume that it is not added,
  i.e., $\small Vec_i = \vec{x}_i + \sum \pm \PRG(r_{ij})$.
Suppose client $i$'s message is sent but not received on time. 
Thus, the server reconstructs $i$'s pairwise mask $\small \sum \pm \PRG(r_{ij})$. 
Then, $i$'s vector $Vec_i$ arrives at the server.
The server can then subtract the pairwise mask from $Vec_i$ to learn $\vec{x}_i$. 
The individual mask $m_i$ prevents this.

\heading{Preventing attacks in fault recovery.}
The above protocol only works in the semi-honest setting.
There are two major attacks that a malicious adversary can perform. 
First, a malicious server can give inconsistent dropout information
  to honest clients and recover both the pairwise and individual masks.
For example, suppose client $i$ has neighbors $j_1, \ldots, j_\gamma$,
  and a malicious server lies to the neighbors of $j_1, \ldots, j_\gamma$
  that $j_1, \ldots, j_\gamma$ have dropped out
  (when they actually have not).
In response, their neighbors, including $i$, will provide the server with the information it needs to
  reconstruct $a_{j_1}, \ldots, a_{j_\gamma}$, thereby deriving all the 
  pairwise secrets $r_{i, j_1}, \ldots, r_{i, j_\gamma}$.
At the same time, the server can tell $j_1, \ldots,
  j_\gamma$ that $i$ was online and request the shares of $m_i$.
This gives the server both the pairwise mask and the individual mask of client 
  $i$, violating $i$'s privacy.
To prevent this, BBGLR has a consistency check step performed among all neighbors of 
  each client to reach an agreement on which nodes actually dropped out.
In this case, $i$ would have learned that none of its neighbors dropped out
  and would have refused to give the shares of their pairwise mask.

Second, malicious clients can submit a share that is different than the share that they received from their neighbors.
This could lead to reconstruction failure at the server, or to the server 
  deriving a completely different secret.
BBGLR fixes the latter issue by having the clients hash their secrets and 
  send these hashes to the server when they send their input vectors; 
  however, reconstruction could still fail because of an insufficient threshold in error correction\footnote{To apply a standard error correction algorithm such as Berlekamp-Welch in this setting, the polynomial degree should be at most $\gamma/3$.
  Definition 4.2 in BBGLR implies that the polynomial degree may be larger than required for error correction.}.

In sum, the full protocol of BBGLR that withstands a malicious adversary (assuming 
  a PKI or a trusted server during setup) has six steps in total: three steps 
  for the setup and three steps for computing the sum. 

\subsection{Using BBGLR for federated learning}\label{s:gaps}
BBGLR works well for \emph{one} round of training, 
   but when many rounds are required, several issues arise.
First, in federated learning the set of clients chosen to participate in 
  a round changes, so a new graph needs to be derived and new secrets must be 
  shared.
Even if the graph stays the same, the server cannot reuse the secrets from the 
  setup in previous rounds as the masks are in fact one-time pads that cannot 
  be applied again.
This means that we must run the setup phase for each round, which incurs a 
  high latency since the setup contains three steps involving all the clients. 

Moreover, BBGLR's threat model does not naturally extend to
  multi-round aggregation.
It either needs a semi-honest server or a PKI during the first step of 
  the protocol.
If we assume the former, then this means the adversary
  has to be semi-honest during the exact time of setup in each round, 
  which is practically impossible to guarantee.
If we use a PKI, none of the keys can be reused (for the above reasons);
  as a result, all of the keys in the PKI need to be updated for each round, which
  is costly.

\section{Efficient Multi-Round Secure Aggregation}\label{s:overview}
Flamingo supports multi-round aggregation without redoing the setup for each 
  round and withstands a malicious adversary throughout.
The assumptions required are: (1) in the setup, all parties are provided with the 
  same random seed from a trusted source (e.g., a distributed randomness 
  beacon~\cite{das22spurt}); and (2) a PKI %correct public key directory 
  (e.g., a key transparency log~\cite{chase19seemless, melara15coniks, leung22aardvark,
  tomescu19transparency, tyagi22versa, tzialla22transparency, hu21merkle2}).
Below we describe the high-level ideas underpinning \emph{\sys} (\S\ref{s:intuition}) and
  then we give the full protocol (\S\ref{s:detail:setup} and \S\ref{s:detail:collection}).

\subsection{High-level ideas}\label{s:intuition}
\sys{} has three key ideas: 
  
\heading{(1) Lightweight dropout-resilience.}
Instead of asking clients to secret share $a_i$ and $m_i$ for their masks with all of their
  neighbors, we let each client encrypt---in a special way---the PRG seeds of their pairwise and 
  individual masks, append the resulting ciphertexts to their masked 
  input vectors, and submit them to the server in a single step.
Then, with the help of a special set of $L$ clients that we call \emph{\coms},
  the server can decrypt one of the two seeds associated with each
  masked input vector, but not both.
In effect, this achieves a similar fault-tolerance property as 
  BBGLR~(\S\ref{s:bbglr}), but with a different mechanism.

The crucial building block enabling this new protocol is 
  \emph{threshold decryption}~\cite{gennaro08threshold, desmedt89threshold,
  shoup02securing}, in which clients can encrypt data with a system-wide known
  public key $PK$, in such a way that the resulting ciphertexts can only be 
  decrypted with a secret key $SK$ that is secret shared among \coms in \sys{}. 
Not only does this mechanism hide the full secret key from every party in the system,
  but the \coms can decrypt a ciphertext without ever having to 
  interact with each other.
Specifically, the server in \sys{} sends the ciphertext (pealed from the submitted vector) to each of the
  \coms, obtains back some threshold $\ell$ out of $L$ responses, and
  locally combines the $\ell$ responses which produce the corresponding
  plaintext.
Our instantiation of threshold decryption is based on the ElGamal cryptosystem
  and Shamir's secret sharing; we describe it in Section~\ref{s:distributed-decryption}.  
\ifthenelse{\boolean{longver}}{%
Technically one can also instantiate the threshold decryption with other protocols,
  but we choose ElGamal cryptosystem because it enables efficient proof of decryption
  (for malicious security, \S\ref{s:detail:setup}) 
  and simple distributed key generation. 
}{
}

One key technical challenge that we had to overcome when designing this 
  protocol is figuring out how to secret share the key $SK$ among the \coms{}.
To our knowledge, existing efficient distributed key generation (DKG) 
  protocols~\cite{pedersen91threshold, canny04practical, gennaro06secure} assume a broadcast
  channel or reliable point-to-point channels, whereas
  our communication model is that of a star topology where all messages 
  are proxied by a potential adversary (controlling the server) that can drop them. 
There are also asynchronous DKG protocols~\cite{kokoriskogias20asynchronous, das22practical, abraham23bingo}, but   standard asynchronous communication model assumes eventual delivery of messages which is not the case in our      setting. 
In fact, we can relax the guarantees of DKG and Section~\ref{s:detail:setup} gives an 
  extension of a discrete-log-based DKG protocol~\cite{gennaro06secure} (since we 
  target ElGamal threshold decryption) that works in the star-topology communication model.

In sum, the above approach gives a dropout-resilient protocol for a single summation with two steps: 
  first, each client sends their masked vector and the ciphertexts of the PRG seeds; second, the server uses
  distributed decryption to recover the seeds (and the masks) for dropout clients (we discuss
  how to ensure that \coms agree on which of the two seeds to decrypt in
  \S\ref{s:detail:collection}).
This design improves the run time over BBGLR by
  eliminating the need to involve all the clients to remove the masks---the
  server only needs to wait until it has collected enough shares from the decryptors,
  instead of waiting for almost all the shares to arrive.
Furthermore, the communication overhead of appending several 
  small ciphertexts (64 bytes each) to a large input vector (hundreds of KBs) is minimal.

\heading{(2) Reusable secrets.}
\sys's objective is to get rid of the setup phase for each round of aggregation.
Before we discuss \sys's approach, consider what would happen if we were to 
  naively run the setup phase in BBGLR once, followed by running the 
  collection procedure multiple times.
First, we are immediately limited to performing all of the aggregation tasks
  on the same set of clients, since BBGLR establishes the graph of neighbors
  during the setup phase.
This is problematic since federated learning often chooses different sets of
  clients for each round of aggregation~(\S\ref{s:pb}).
Second, clients' inputs are exposed.
To see why, suppose that client $i$ drops out in round~1 but not in 2.
In round~1, the server reconstructs $r_{i,j}$ for $j\in A(i)$ to unmask the sum. 
In round~2, client $i$ sends $\small \vec{x}_i +  \sum_{j\in A(i)} \pm \PRG(r_{i,j}) + \PRG(m_i)$ 
  and the server reconstructs $m_i$ by asking $i$'s neighbors for their shares.
Since all the $r_{i,j}$ are reconstructed in round 1 and are reused in round 2, 
  the server can derive both masks.

The above example shows that the seeds should be new and independent 
in  each round.
We accomplish this with a simple solution that adds a level of indirection.
\sys{} treats $r_{i,j}$ as a long-term secret and lets the clients apply a $\PRF$ to 
  generate a new seed for each pairwise mask. % between clients $i$ and $j$.
Specifically, clients $i$ computes the PRG seed for pairwise mask in round $t$ 
  as $h_{i,j, t} := \PRF(r_{i,j}, t)$ for all $j\in A(i)$. 
Note that client $j$ will compute the same $h_{i,j,t}$ as it agrees with $i$ on $r_{i,j}$. 
In addition, each client also generates a fresh seed $m_{i,t}$ for the individual mask in round $t$.
Consequently, combined with idea (1), each client uses $PK$ to encrypt the per-round seeds, 
  $\{h_{i,j,t}\}_{j\in A(i)}$ and $m_{i,t}$.
Then, the server recovers one of the two for each client. We later describe an optimization where clients do not 
  encrypt $m_{i,t}$ with $PK$ (\S\ref{s:detail:collection}). 

A nice property of $r_{i,j}$ being a long-term secret is that \sys{} 
  can avoid performing all the Diffie-Hellman 
  key exchanges between graph neighbors (proxied through the server). \sys{} relies instead on an external PKI or a verifiable public key directory 
  such as CONIKS~\cite{melara15coniks} and its successors (which are a common building block for bootstrapping end-to-end encrypted systems).

We note that this simple technique cannot be applied to BBGLR 
  to obtain a secure multi-round protocol.\@
It is possible in \sys{} precisely because clients encrypt
  their per-round seeds for pairwise masks directly so the server never needs to compute
  these seeds from the long-term pairwise secrets. 
In contrast, in BBGLR, clients derive pairwise secrets ($g^{a_i, a_j}$)
  during the setup phase.
When client $i$ drops out, the server collects enough shares to
  reconstruct $a_i$ and compute the pairwise secrets,
  $g^{a_i, a_j}$, for all online neighbors $j$ of client $i$.
Even if we use a \PRF{} here, the server already has the
  pairwise secret; so it can run the \PRF{} for any round and conduct
  the attacks described earlier.

 \begin{figure}
 \footnotesize{
  \begin{algorithmic}[1]
    \State Parameters: $\epsilon$. \Comment{the probability that an edge is added}
    
    \Function{ChooseSet}{$v, t, n_t, N$}{}
        \State $S_t \leftarrow \emptyset$.
        \State  $v^*_t :=\PRF(v, t)$.
        \While{$|S_t| < n_t$} 
            \State Parse $\log N$ bits from $\PRG(v^*_t)$ as $i$, add $i$ to $ S_t$. 
        \EndWhile
        \State Output $S_t$.
    \EndFunction
    
    \Function{GenGraph}{$v, t, S_t$}{}
        \State $G_t \leftarrow n_t\times n_t$ empty matrix; $\rho
        \leftarrow \log (1/\epsilon)$.
        \For {$i \in S_t, j \in S_t$}
                \State Let $v'$ be the first $\rho$ bits of $\PRF(v, (i, j)) $. 
                \If {$v' = 0^\rho$}
                    set $G_t(i, j) := 1$
                \EndIf
            \EndFor
        \State Output $G_t$.
    \EndFunction
  
    \Function{FindNeighbors}{$ v, S_t, i$}{}
        \State $A_t(i) \leftarrow \emptyset$;  $\rho \leftarrow \log (1/\epsilon)$.
        \For {$j \in S_t$}
                \State Let $v'$ be the first $\rho$ bits of $\PRF(v, (i, j)) $. 
                \If {$v' = 0^\rho$}
                    add $j$ to $A_t(i)$.
                \EndIf
        \EndFor
        \For {$j \in S_t$}
                \State Let $v'$ be the first $\rho$ bits of $\PRF(v, (j, i)) $. 
                \If {$v' = 0^\rho$}
                    add $j$ to $A_t(i)$.
                \EndIf
        \EndFor
        \State Output $A_t(i)$.
    \EndFunction
    
  \end{algorithmic}
 }
\caption{Pseudocode for generating graph $G_t$ in round $t$.}
\label{fig:gengraph}
 \end{figure}

\heading{(3) Per-round graphs.}
BBGLR uses a sparse graph instead of a fully-connected graph for efficiency 
  reasons (otherwise each client would need to secret share its seeds with
  every other client).
In federated learning, however, establishing sparse graphs requires 
  per-round generation since the set $S_t$ changes in each round (some clients
  are chosen again, some are new~\cite{lai21oort}). 
A naive way to address this is to let all clients in $[N]$ establish a 
  big graph $G$ with $N$ nodes in the setup phase: each client in $[N]$ sends 
  its choice of $\gamma$ neighbors to the server, and the server sends to 
  each client the corresponding neighbors.
Then, in each round $t$, the corresponding subgraph $G_t$ consists of clients 
  in $S_t$ and the edges among clients in $S_t$.
 
However, this solution is unsatisfactory.
If one uses a small $\gamma$ (e.g., $\log N$), $G_t$ might not be connected 
  and might even have isolated nodes (leaking a client's input vector 
  since it has no pairwise masks); if one uses a large $\gamma$ (e.g., the 
  extreme case being $N$), $G_t$ will not be sparse and the communication
  cost for the server will be high (e.g., $O(N^2)$).

\sys{} introduces a new approach for establishing the graph with a communication
  cost independent of $\gamma$.
The graph for each round $t$ is generated by a random string 
  $v\in \{0,1\}^\lambda$ known to all participants (obtained from a randomness beacon or a trusted setup).
Figure~\ref{fig:gengraph} lists the procedure.
$\ChooseSet(v, t, n_t, N)$ determines the set of clients involved in round $t$, 
   namely $S_t \subseteq [N]$ with size $n_t$.
The server computes $G_t \leftarrow \GenGraph(v, t, S_t)$ as the graph in 
  round $t$ among clients in $S_t$.
A client $i \in S_t$ can find its neighbors in $G_t$ without materializing the 
  whole graph using $\FindNeighbors(v, S_t, i)$.
In this way, the neighbors of $i$ can be locally generated.
We choose a proper $\epsilon$ such that in each round, the graph is connected with high probability (details in \S\ref{s:security}).
\ifthenelse{\boolean{longver}}
{We note that this technique might be of independent interest.}
{This technique is of independent interest.}
   
\begin{figure*}
\centering
 \includegraphics[width=\linewidth]{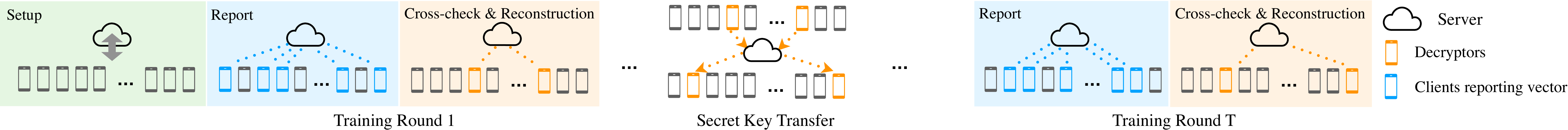}
  \caption{\small Workflow of Flamingo. 
  The server first does a setup for all clients in the system.
  In each round $t$ of training, the server securely aggregates the masked input vectors in
  the report step; in the cross-check and reconstruction steps, the server communicates with 
  a small set of randomly chosen clients who serve 
  as \coms. The \coms are chosen independently from the set $S_t$ that provides inputs in a given round. 
  Every $R$ rounds, the \coms switch and the old \coms transfers shares of $SK$ to new \coms.}
  \label{fig:flow}
\end{figure*}

\medskip

The above ideas taken together eliminate the need for per-round setup, 
  which improves the overall run time of multi-round aggregation over BBGLR.\@ 
Figure~\ref{fig:flow} depicts the overall protocol, and the next sections
  describe each part.

\subsection{Types of keys at PKI}\label{s:detail:pki}
Before giving our protocol, we need to specify what types of keys the PKI needs to store.
The keys depend on the cryptographic primitives that we use 
  (signature schemes, symmetric encryption and ElGamal encryption); for ease of reading,
  we formally give these primitives in Appendix~\ref{app:primitives}.

The PKI stores three types of keys for all clients in $[N]$:
  \begin{myitemize2}
      \item $g^{a_i}$ of client $i$ for its secret $a_i$; this is for client $j$ to derive
        the pairwise secret $r_{i,j}$ with client $i$ by computing $(g^{a_j})^{a_i}$. 
      
      \item $g^{b_i}$ of client $i$ for deriving a symmetric encryption key $k_{i,j}$ for 
      an authenticated encryption scheme $\SymEnc$ (Definition~\ref{def:authenc}); this scheme is used when a client sends messages to another client via the server.
      Later when we say client $i$ sends a message to client $j$ via the server in the protocol,
      we implicitly assume the messages are encrypted using $k_{i,j}$.
      
      \item $pk_i$ of client $i$ for verifying $i$'s signature on messages signed by $sk_i$. 
  \end{myitemize2}

\subsection{Setup phase}\label{s:detail:setup}
The setup phase consists of two parts: (1) distributing a 
  random seed $v\in \{0,1\}^\lambda$ to all participants, and (2) selecting 
  a random subset of clients as \coms{} and distribute the shares of the secret key
  of an asymmetric encryption scheme $\AsymEnc$. In
  our context, $\AsymEnc$ is the ElGamal cryptosystem's encryption function~(Definition~\ref{def:elgamal}).

As we mentioned earlier, the first part can be done through a trusted source of 
  randomness, or by leveraging a randomness beacon that is already 
  deployed, such as Cloudflare's~\cite{cloudflare-beacon}.  
The second part can be done by selecting a set of $L$ clients as \coms, $\mathcal{D}$, 
  using the random seed $v$ ($\ChooseSet$), and then running 
  a DKG protocol among them.
We use a discrete-log based DKG protocol~\cite{gennaro06secure} (which we call
  GJKR-DKG) since it is compatible with the ElGamal cryptosystem. 
However, this DKG does not work under our communication model and requires 
  some changes and relaxations, as we discuss next.

\heading{DKG with an untrusted proxy.}
The correctness and security of the GJKR-DKG protocol relies on
  a secure broadcast channel. Our communication model does not have such a channel,
  since the server can tamper, replay or drop messages.
Below we give the high-level ideas of how we modify GJKR-DKG and Appendix~\ref{app:dkg}
  gives the full protocol.

We begin with briefly describing the GJKR-DKG protocol.
It has a threshold of $1/2$, which means that at most half of the 
  participants can be dishonest; the remaining must perform the 
  following steps correctly:
  (1) Each party $i$ generates a random value $s_i$ and acts as a dealer
  to distribute the shares of $s_i$ (party $j$ gets $s_{i,j}$).
  (2) Each party $j$ verifies the received shares (we defer how the 
  verification is done to Appendix~\ref{app:dkg}).
  If the share from the party $i$ fails the verification, 
  $j$ broadcasts a complaint against party $i$.
  (3) Party $i$ broadcasts, for each complaint from party $j$, the $s_{i,j}$ 
  for verification. (4) Each party disqualifies those parties that fail the 
  verification; the rest of the parties form a set QUAL.\@
Then each party sums up the shares from QUAL to derive a share of the 
  secret key.

Given our communication model, it appears hard to guarantee the standard
  DKG correctness property, which states that if there are enough honest
  parties, at the end of the protocol the honest parties hold valid shares of a unique 
  secret key.
Instead, we relax this correctness property by allowing honest parties to 
  instead abort if the server who is proxying the messages acts maliciously.

We modify GJKR-DKG in the following ways. 
First, we assume the threshold of dishonest participants is $1/3$.
Second, all of the messages are signed; honest parties abort if they do not 
  receive the prescribed messages. 
Third, we add another step before each client decides on the eventual set QUAL: 
  all parties sign their QUAL set and send it to the server;
  the server sends the signed QUALs to all the parties.
Each party then checks whether it receives $2\ell+1$ or more valid signed QUAL 
  sets that are the same. 
If so, then the QUAL set defines a secret key; otherwise the party aborts.
We give the detailed algorithms and the corresponding proofs 
  in Appendix~\ref{app:dkg}. 
Note that the relaxation from GJKR-DKG is that we allow parties to abort (so no secret key is shared at the end), 
  and this is reasonable particularly in the federated learning
  setting because the server will not get the result if it misbehaves.

\medskip

\heading{Decryptors run DKG.}
At the end of our DKG, a \emph{subset} of the selected \coms{} 
  will hold the shares of the secret key $SK$.
The generated $PK$ is signed by the \coms and sent to all of the $N$ clients 
  by the server; the signing prevents the server from distributing different 
  public keys or distributing a public key generated from a set of malicious clients. 
Each client checks if it received $2\ell+1$ valid signed $PK$s from the set of \coms determined
  by the random seed (from beacon); if not, the client aborts.
In Appendix~\ref{app:full-mal}, we provide the pseudocode for the entire setup 
  protocol $\Pi_{\text{setup}}$ (Fig.~\ref{fig:protocol-setup}).

\subsection{Collection phase}\label{s:detail:collection}
The collection phase consists of $T$ rounds; each round $t$ has three 
  steps: report, cross-check, and reconstruction.
Below we describe each step and we defer the full protocol $\Pi_{\text{sum}}$ 
  to Figure~\ref{fig:collection} in Appendix~\ref{app:full-mal}.  
The cryptographic primitives we use here ($\SymEnc$ and $\AsymEnc$) are formally given in Appendix~\ref{app:primitives}.

\heading{Report.}
In round $t$, the server uses the random value $v$ (obtained from the setup) to select 
  the set of clients $S_t \subseteq [N]$ of size $n_t$ by running
  $\ChooseSet(v, t, n_t, N)$.
It then establishes the graph $G_t \leftarrow \GenGraph(v, t, S_t)$
  as specified in Figure~\ref{fig:gengraph}.
We denote the neighbors of $i$ as $A_t(i) \subseteq S_t$.
The server asks each client $i \in S_t$ to send a message consisting of
  the following three things:
  \begin{myenumerate2}  
    \item $Vec_i = \small \vec{x}_{i,t} + \sum_{j \in A_t(i)} \pm
    \PRG(h_{i,j,t}) + \PRG(m_{i,t})$,
  where $h_{i,j,t}$ is computed as $\PRF(r_{i,j}, t)$ and $r_{i,j}$ is derived 
  from the key directory by computing $(g^{a_j})^{a_i}$; $m_{i,t}$ is freshly 
  generated in round $t$ by client $i$.

    \item $L$ symmetric ciphertexts: $\SymEnc(k_{i,u}, m_{i,u,t})$ for all $u\in \comset$, 
    where $m_{i,u,t}$ is the share of $m_{i,t}$ 
    meant for $u$ (i.e., $\Share(m_{i,t}, \ell, L) \rightarrow \{m_{i,u,t}\}_{u\in \comset}$),
    and $k_{i,u}$ is the symmetric encryption key shared between client $i$ and \com $u$
    (they can derive $k_{i,u}$ from the PKI); 

    \item $|A_t(i)|$ ElGamal ciphertexts: $ \AsymEnc (PK, h_{i,j,t})$ for all $j \in A_t(i)$.
  \end{myenumerate2}

The above way of using symmetric encryption for individual masks
  and public-key encryption for pairwise masks is 
  for balancing computation and communication in practice. 
Technically, one can also encrypt the shares of $h_{i,j,t}$ with symmetric
  authenticated encryption as well (eliminating public-key operations),
  but it increases client communication---for each client, 
  the number of ciphertexts appended to the vector is $ |A(i)|\cdot L$. 
This is, for example, 1,600 when $L$ and $|A(i)|$ are both 40.
On the other hand, if one encrypts both the pairwise and individual masks 
  using only public-key encryption, then the number of expensive public key
  operations for reconstructing secrets is proportional to $n_t$;
  whereas it is only proportional to the number of dropouts in our proposed
  approach.
In practice, the number of dropouts is much smaller than $n_t$, hence the savings.

\heading{Cross-check.}
The server needs to recover $m_{i,t}$ for online clients, 
  and $h_{i,j,t}$ for clients that drop out.
To do so, the server labels the clients as ``offline'' or ``online'' and
  asks the \coms to recover the corresponding masks. 
For BBGLR, we described how this step involves most clients during
  the fault recovery process and highlighted an issue where a malicious 
  server can send inconsistent labels to clients and recover both 
  the pairwise mask and individual mask for some target client (\S\ref{s:bbglr}).
\sys{} also needs to handle this type of attack (the server tells 
  some of honest decryptors to decrypt $m_{i,t}$ and 
  other honest decryptors to decrypt $h_{i,j,t}$, and utilizes the malicious decryptors
  to reconstruct both), but it only 
  needs to involve \coms. 
In detail, each \com signs the online/offline labels of the $n_t$ clients
  (each client can only be labeled either offline or online),
  and sends them to the other \coms (via the server). 
Each \com checks it received $2 L / 3 $ or more valid signed
  labels (recall from \S\ref{s:threat} that $\delta_D + \eta_D < 1/3$).
If so, each \com further checks that:
\begin{myenumerate2}
    \item The number of online clients is at least $(1-\delta) n_t$; 
    \item All the online clients in the graph are connected;
    \item Each online client $i$ has at least $k$ online neighbors\footnote{This can be efficiently done in an inverse way of checking how many offline neighbors that each online client has, assuming dropout rate is small.}, such that $\eta^k < 2^{-\kappa}$
          ($\eta$ and $\kappa$ are defined as in \S\ref{s:threat}).
\end{myenumerate2}
If any of the above checks fail, the \com aborts. 
This step ensures either all the honest \coms agree on a valid offline/online label assignment 
  and consequently the server gets the result, or the honest decryptors abort and the server gets nothing.

\heading{Reconstruction.}
The server collects all the ciphertexts to be decrypted:
  the ciphertexts of $m_{i,u,t}$ (symmetric encryption) for the online clients, 
  and the ciphertexts of $h_{i,j,t}$ (public-key encryption) for the offline 
  clients. 
Then the server sends the ciphertexts to all the \coms who perform
  either a symmetric decryption or the threshold ElGamal decryption according to their agreed-upon labels.

The choice of using \coms{} to check the graph and reconstruct all the secrets is based on 
  an important observation in federated learning: the number of clients involved in one round, $n_t$,
  is much smaller than the input vector length~\cite{kairouz21advances}. 
Therefore, the asymptotic costs at a \com{} (which are proportional to $n_t$)
  are actually smaller than the size of an input weight vector.

\subsection{Malicious labeling across rounds}
The server, controlled by a malicious adversary, can ask for the decryption of $h_{i,j,t}$ in round $t$,
  and then in some other round $t'$, the server can ask for the decryption of $m_{i,t}$ (but not $m_{i,t'}$,
  if the server does not care about obtaining a result in round $t'$).
This allows the server to recover $\vec{x}_{i,t}$ in the clear.
To prevent this attack, honest \coms need to know the round for which a 
  ciphertext is sent.
For symmetric ciphertext, the client appends the round number $t$ to
  the plaintext (e.g., $m_{i,u,t} || t$) and uses authenticated encryption; for public-key ciphertexts, the client appends $t$ to the ciphertext $c$
  and signs the tuple $(c, t)$ (the verification key is in the PKI). 
Note that a malicious adversary can still fool enough honest \coms{} 
  into thinking it is round $t$ while it is in fact $t'$.
To prevent this, \coms{} also include the round number 
  in the online/offline labels and sign them. 
The cross-check (\S\ref{s:detail:collection}) guarantees that the \coms{} agree on the round number.

\subsection{Load balancing across \coms}\label{s:load-balancing}
In each summation, a client who is not a \com only sends a single vector.
This is nearly optimal since even if the aggregation is non-private the client 
  has to send the vector (but without the additional small ciphertexts).
The \coms, however, have additional computation and communication 
  in order to help with the result reconstruction.  
This causes a load imbalance in the system and could be unfair since a client 
  selected to be a \com has to do more work than regular clients.

In \sys{}, the \com responsibility shifts across time.
Every $R$ rounds, the current \coms transfer 
  their shares of $SK$ to a new set of randomly selected clients who 
  serve as the new \coms.
To ensure security, the shares of $SK$ have to be modified in a particular 
  way during the transition, as otherwise the adversary may control some
  malicious \coms before the transition and some malicious \coms after the transition, 
  and thus may obtain enough shares to reconstruct $SK$.
We address this by relying on prior proactive secret sharing 
  techniques~\cite{Herzberg95proactive, gennaro08threshold, krishna19churp};
  they additionally enable \sys{} to change the number of \coms and the 
  threshold as needed.
In Appendix~\ref{app:transfer}, we provide details of the transition protocol used in \sys{}.

A final clarification is that \coms{} who dropped out (e.g., due to power loss) 
  at one round can come back later and participate in another round (e.g.,
  when power is resumed).
The decryption always succeeds since we require that less than $1/3$ deryptors are dropped out or malicious at any step (\S\ref{s:security}). 
The secret key transition is purely for system load balancing---neither 
  dropout resilience nor security relies on the parameter $R$.

\subsection{Considerations in federated learning}
A recent work~\cite{pasquini22eluding} describes an 
  attack in the composition of federated learning and
  secure aggregation.
The idea is that the server can elude secure aggregation 
  by sending clients inconsistent models. 
For example, the server sends to client 1 model $M_1$, to client 2 model $M_2$,
  and to client 3 model $M_3$.
Each of the clients then runs the local training on the provided model.
The server chooses the models it sends to clients 1 and 2 in a special way such 
  that after clients 1 and 2 train their local models, their local weights
  will cancel out when added.
Consequently, the server will get the model weights of client 3.
The proposed defense, which works in our context without incurring any overhead, is for clients to append the hash 
of the model they receive in a given round to their 
PRG seed for that round: $\PRG(h_{i, j, t} || Hash(M))$, where $M$ is the model received from the server.
If all the clients receive the same model, the pairwise masks cancel out; otherwise, they do not.

\section{Parameter Selection and Security Analysis}\label{s:security}

The parameters of \sys{} include:
\begin{myitemize2}
    \item System parameters $N, T$ and the number of clients $n_t$ chosen in round $t\in [T]$; 
    \item Threat model parameters $\delta_D, \delta, \eta$ which are given, 
      and $\eta_{S_t}, \eta_D$ which depend on $\eta$ (their relation is
      summarized in Section~\ref{s:threat} and fully derived in
      Appendix~\ref{app:failure-model}).
    \item Security parameter $\kappa$, and the parameters that relates to security: graph generation parameter $\epsilon$, and the number of selected decryptors $L$. 
\end{myitemize2}

We discuss these parameters in detail below and state our formal lemmas
  with respect to them.

\subsection{Security of setup phase}\label{s:param:setup}
\begin{figure} 
 \begin{tcolorbox}[enhanced, boxsep=1mm, left= 0mm, right=0.5mm, title={\textbf{\footnotesize Functionality $\Fsetup$}}]
    \linespread{1.4}
    \footnotesize
    
    Parties: clients $1, \ldots, N$ and a server.
    
    \begin{myitemize2} 

        \item $\Fsetup$ samples $v \xleftarrow{\$} \{0,1\}^\lambda$.
        
        \ifthenelse{\boolean{longver}}
        {
            \item $\Fsetup$ samples a secret key and public key pair $(SK, PK)$.
            
            \Comment{When the public-key cryptosystem is instantiated by ElGamal, then $SK$ is $s \xleftarrow{\$} \Z_{q}$ and $PK=g^s$.}
        }{
            \item $\Fsetup$ samples $a \leftarrow \Z_q$ and computes $PK = g^a$.
        }
               
        \item $\Fsetup$ asks the adversary $\A$ whether it should continue or not. 
        If $\A$ replies with \texttt{abort}, $\Fsetup$ sends \texttt{abort} to all honest parties; 
        if $\A$ replies with \texttt{continue}, $\Fsetup$ sends $v$ and $PK$ to all the parties.
    \end{myitemize2}
    \end{tcolorbox}
\caption{Ideal functionality for the setup phase.} 
\label{fig:Fsetup}
\end{figure}

Let $\delta_D$ upper bound the fraction of \coms{} that 
  drop out during the setup phase; 
  note that in Section~\ref{s:threat} we let $\delta_D$ 
  upper bound the dropouts in one aggregation round and
  for simplicity here we use the same notation.
\sys's DKG requires that $\delta_D + \eta_D < 1/3$. 
Note that $\eta_D$ in fact depends on $\eta$, $L$ and $N$, but we will give the theorems using $\eta_D$
  and discuss how to choose $L$ to guarantee a desired $\eta_D$ in Appendix~\ref{app:param}.

\begin{theorem}[Security of setup phase]\label{thm:security-setup}
Assume that a PKI %a trusted key directory
and a trusted source of randomness exist, and that the DDH assumption holds.
%Let the security parameter be $\kappa$, and
Let the dropout rate of decryptors in the setup phase be bounded by
  $\delta_D$.
If $\delta_D + \eta_D < 1/3$, 
then under the communication model defined in Section~\ref{s:comm-model}, 
  protocol $\Pi_{\text{setup}}$ (Fig.~\ref{fig:protocol-setup}) securely realizes functionality
  $\Fsetup$ (Fig.~\ref{fig:Fsetup}) 
  in the presence of a malicious adversary controlling the server and $\eta$ fraction of 
  the $N$ clients. 
\end{theorem}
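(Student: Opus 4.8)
The plan is to build a simulator $\Sim$ that interacts with $\Fsetup$ on behalf of the honest parties and produces a view for the malicious adversary $\A$ (controlling the server and up to $\eta N$ clients, of which at most $\eta_D$ fraction fall into the decryptor set $\comset$) that is computationally indistinguishable from $\A$'s view in a real execution of $\Pi_{\text{setup}}$. Since the functionality has two outputs---the random seed $v$ and the ElGamal public key $PK=g^s$---and $v$ comes from a trusted beacon that $\Sim$ can read from $\Fsetup$, the entire cryptographic content is the DKG subprotocol. So the heart of the proof is a simulation argument for our modified GJKR-DKG in the star-topology communication model, reusing the security guarantees of GJKR-DKG and the modifications stated in Section~\ref{s:detail:setup} (signed messages, $1/3$ dishonest threshold, the extra signed-$\QUAL$ agreement round), with the detailed DKG protocol and its standalone proof deferred to Appendix~\ref{app:dkg}.

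First I would set up the hybrids. Hybrid $0$ is the real execution. In Hybrid $1$, replace all client-to-client authenticated-channel messages (the encrypted shares) sent between two honest parties with encryptions of zeros; indistinguishability follows from the CPA security of $\SymEnc$ together with the fact that the server only proxies ciphertexts. In Hybrid $2$, have the honest decryptors run the DKG using a simulated dealing: $\Sim$ knows the target $PK$ from $\Fsetup$ and must make the honest parties' Feldman/Pedersen commitments and revealed shares consistent with that $PK$ while the honest parties' individual secrets stay hidden---this is the standard rewinding/equivocation trick from the GJKR analysis, adapted so that all ``broadcasts'' are simulated as signed point-to-point messages routed through the adversarial server. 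Because at most $\eta_D+\delta_D<1/3$ of the decryptors are corrupted or dropped, more than $2L/3$ honest decryptors remain, so the signed-$\QUAL$ agreement check forces $\A$ either to deliver a consistent $\QUAL$ to all honest parties (yielding a well-defined secret key that $\Sim$ programs to be $s$) or to cause the honest parties to abort---which $\Sim$ detects and forwards to $\Fsetup$ as \texttt{abort}. In Hybrid $3$ (the ideal world), $\Sim$ simply outputs $v$ and $PK$ received from $\Fsetup$; here I would invoke DDH to argue that the programmed $PK$ is distributed identically to a real one, and that the secrecy of $s$ (equivalently, of the honest parties' contributions) holds against $\A$ given only the commitments and the $\le \ell$ shares $\A$ legitimately obtains. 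I would also note the signature-unforgeability argument that lets $\Sim$ treat any validly signed message from an honest party as one it generated, and that an honest party never accepts a forged $\QUAL$ or a forged $PK$ except with negligible probability.

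The main obstacle is the DKG simulation under an untrusted proxy: in the original GJKR proof the broadcast channel guarantees that all honest parties see the same complaints, the same defenses, and the same $\QUAL$, and the simulator exploits this to decide, before committing, which corrupted dealers will be disqualified. In our model the server can present different transcripts to different honest parties, so I must argue that the added signed-$\QUAL$ consistency round (requiring $2\ell+1$ matching signed $\QUAL$ sets) restores exactly the amount of agreement the simulator needs---either everyone agrees and $\Sim$ can equivocate its commitments to hit $PK$, or the discrepancy is detected and everyone aborts. Making this precise, and in particular showing that $\A$ cannot force a state where some honest parties hold shares of one key while others abort (a ``split-brain'' that would break the clean \texttt{continue}/\texttt{abort} behavior of $\Fsetup$), is the delicate step; it rests on the threshold $2\ell+1 > L - (\eta_D+\delta_D)L \ge 2L/3$ so that any two accepting honest parties' sets of $2\ell+1$ signed $\QUAL$s overlap in an honest signer, who signs only one $\QUAL$. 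Everything else---the encryption hybrid, the signature hybrid, the final DDH step---is routine and can be stated briefly with the calculations relegated to Appendix~\ref{app:dkg}.
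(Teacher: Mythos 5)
Your overall strategy matches the paper's: a simulator that obtains $v$ and $PK$ from $\Fsetup$, plays the honest decryptors inside $\Pi_{\text{DKG}}$, reuses the GJKR simulation for the dealings and commitments, relies on signature unforgeability for the proxied messages and for the final distribution of the signed $PK$ to the $N$ clients, and uses the $2\ell+1$ signed-$\QUAL$ quorum argument to force agreement (your intersection count $2(2\ell+1)-(3\ell+1)=\ell+1>\ell$ is essentially the same counting argument the paper uses to prove property C1 of Lemma~\ref{lemma:dkg}). Indeed, the paper's proof of Theorem~\ref{thm:security-setup} is a thin wrapper around that lemma: the simulator calls $\Frand$, acts as the honest decryptors, and sends \texttt{abort} to $\Fsetup$ whenever an honest decryptor would abort or whenever fewer than $2\ell+1$ valid signatures on $PK$ from the beacon-defined decryptor set are delivered; your hybrids (zeroing honest-to-honest encrypted shares, equivocating to hit the target $PK$, DDH for secrecy) inline the content that the paper defers to Appendix~\ref{app:dkg}.

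The one place your plan goes wrong is the step you yourself single out as delicate: you propose to show that $\Adv$ cannot force a state in which some honest parties hold shares of the key while others abort. That statement is false for this protocol, and the paper does not claim it. A malicious server can trivially create such a split, e.g.\ by withholding all dealing messages, or the $2\ell+1$ signed $\QUAL$ sets, from a single honest decryptor while serving everyone else faithfully; that decryptor aborts while the remaining honest decryptors complete and hold shares. The paper's relaxed correctness (property C1 of Lemma~\ref{lemma:dkg}) guarantees only that honest parties who \emph{do} finish with a share agree on the same $\QUAL$ (hence the same key), and the remark following that lemma explicitly allows honest parties to end with no share; the tension with the uniform \texttt{continue}/\texttt{abort} behavior of $\Fsetup$ is resolved by having $\Sim$ send \texttt{abort} as soon as any honest decryptor would abort. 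So you should drop that sub-goal and retain only the agreement-among-finishers property, which your quorum-intersection argument already establishes. A minor point: the GJKR simulation is a straight-line equivocation of one simulated honest dealer's Feldman commitments in the public-key extraction phase, not a rewinding argument.
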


\subsection{Security of collection phase}\label{s:param:collection}

First, we need to guarantee that each graph $G_t$,
  even \emph{after} removing the vertices corresponding to
  the $\delta+\eta$ fraction of dropout and 
  malicious clients, is still connected. 
This imposes a requirement on $\epsilon$, which we state in Lemma~\ref{lemma:graph-connectivity}. 
For simplicity, we omit the exact formula for the lower bound of $\epsilon$ and defer the details to Appendix~\ref{app:param}.

\begin{lemma}[Graph connectivity]\label{lemma:graph-connectivity}
Given a security parameter $\kappa$, and threat model parameters $\delta, \eta$ (\S\ref{s:threat}).
Let $G$ be a random graph $G(n,\epsilon)$. 
Let $\mathcal{C}, \mathcal{O}$ be two random subsets of nodes in $G$
  where $|\mathcal{O}| \le \delta n$ and $|\mathcal{C}| \le \eta  n$ 
  ($\mathcal{O}$ stands for dropout set and $\mathcal{C}$ stands for malicious set).
Let $\tilde{G}$ be the graph with nodes in $\mathcal{C}$ and $\mathcal{O}$ and the
  associated edges removed from $G$.
There exists $\epsilon^*$
  % (specified by $\kappa, \delta, \eta$) 
  such that for all $\epsilon \ge \epsilon^*$, 
 $\tilde{G}$ is connected except with probability $2^{-\kappa}$.
%  where $g$ is 
% a recursively defined function (see details in Appendix~\ref{app:param}). 
% In other words, each client needs $(\epsilon+\delta+\eta) n$ neighbors.
\end{lemma}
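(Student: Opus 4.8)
The plan is to reduce the statement to a classical Erdős–Rényi connectivity threshold, accounting for the adversarial deletion of up to $(\delta+\eta)n$ vertices. First I would observe that since $\mathcal{C}$ and $\mathcal{O}$ are chosen independently of the edges of $G(n,\epsilon)$, the residual graph $\tilde{G}$ is itself distributed as $G(m,\epsilon)$ on $m := n - |\mathcal{C}| - |\mathcal{O}| \ge (1-\delta-\eta)n$ vertices. So it suffices to show that $G(m,\epsilon)$ is connected except with probability $2^{-\kappa}$, and then to translate the required edge probability back in terms of $n$. (A subtlety worth a sentence: in \sys{} an edge is present iff the first $\rho = \log(1/\epsilon)$ bits of a PRF output are zero, so the ``edge probability'' is really $\epsilon$ up to the PRF's pseudorandomness; I would note that we argue in the idealized model where these bits are uniform, and the computational gap is absorbed elsewhere.)

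Next I would invoke the standard first-moment bound on disconnectivity of $G(m,\epsilon)$. The dominant obstruction to connectivity is isolated vertices: $\Pr[\exists \text{ isolated vertex}] \le m(1-\epsilon)^{m-1} \le m\,e^{-\epsilon(m-1)}$, and more generally the probability that there is a vertex subset of size $s$ with $1 \le s \le m/2$ and no edges leaving it is at most $\sum_{s=1}^{m/2}\binom{m}{s}(1-\epsilon)^{s(m-s)}$. A routine estimate shows this whole sum is dominated (up to a constant factor) by the $s=1$ term once $\epsilon$ is of order $(\log m + \kappa)/m$. Setting this bound to be at most $2^{-\kappa}$ gives a concrete threshold: it suffices that $\epsilon \ge c\cdot\frac{\ln m + \kappa\ln 2}{m}$ for a suitable constant $c$, and since $m \ge (1-\delta-\eta)n$ we can replace $m$ by $n$ (inflating the constant by $1/(1-\delta-\eta)$) to get an $\epsilon^*$ depending only on $n, \kappa, \delta, \eta$. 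That is the $\epsilon^*$ promised in the statement, and monotonicity in $\epsilon$ (adding edges only helps connectivity) gives the claim for all $\epsilon \ge \epsilon^*$.

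The main obstacle, and the part I would be most careful with, is making the union bound over subset sizes genuinely tight enough to land at $2^{-\kappa}$ rather than merely $o(1)$: one has to verify that $\binom{m}{s}(1-\epsilon)^{s(m-s)}$ is decreasing in $s$ over $1 \le s \le m/2$ for the chosen range of $\epsilon$, which uses $(1-\epsilon)^{m-s} \le (1-\epsilon)^{m/2} \le e^{-\epsilon m/2}$ together with $\binom{m}{s} \le (em/s)^s$; this forces the constant in $\epsilon^*$ and is where the $\kappa$-dependence really enters. A secondary point is bookkeeping the three-way randomness — the graph edges, the set $\mathcal{C}$, and the set $\mathcal{O}$ — to justify the conditioning that makes $\tilde{G}$ a clean $G(m,\epsilon)$; since the sets are sampled obliviously to the edges this is immediate, but it should be stated explicitly because in the real protocol $\mathcal{C}$ is adversarial. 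I would handle the adversarial case by noting the bound above is uniform over all choices of $\mathcal{C}$ of the given size, so a worst-case (rather than random) malicious set is covered by the same union bound at no extra cost, and then cite Appendix~\ref{app:param} for the explicit formula for $\epsilon^*$.
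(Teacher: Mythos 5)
Your proposal is correct, but it proves the lemma by a different route than the paper. You reduce to connectivity of the residual binomial graph $G(m,\epsilon)$ with $m \ge (1-\delta-\eta)n$ (the paper does this reduction too, implicitly, by inflating the neighbor count by $(\delta+\eta)n$) and then bound the disconnection probability analytically via the standard cut-set union bound $\sum_{s=1}^{\lfloor m/2\rfloor}\binom{m}{s}(1-\epsilon)^{s(m-s)}$, which yields a closed-form threshold $\epsilon^* = O\bigl((\ln n + \kappa)/n\bigr)$ with explicit $\kappa$-dependence. The paper instead invokes Gilbert's exact recursive formula for the disconnection probability of $G(n,\epsilon)$~\cite{gilbert59random} and evaluates it \emph{numerically} to select concrete values of $\epsilon$ for target failure probabilities ($10^{-6}$, $10^{-12}$); it deliberately omits any closed-form lower bound for $\epsilon^*$. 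What each buys: your union bound gives a clean existence proof of $\epsilon^*$ and makes the asymptotic dependence on $\kappa$ visible, at the cost of loose constants; the paper's numerical route gives tighter concrete parameters (hence fewer neighbors per client, which matters for cost) but no formula. One small caution: your closing remark that the same bound covers a \emph{worst-case} malicious set ``at no extra cost'' is only true for sets fixed independently of the edge randomness (which is all the lemma claims, since $\mathcal{C},\mathcal{O}$ are random/edge-oblivious); an adversary choosing $\mathcal{C}$ after seeing the graph would require a union bound over the $\binom{n}{\eta n}$ possible sets and a correspondingly larger $\epsilon^*$, so that sentence overstates what the argument delivers, though it does not affect the lemma as stated.
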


Secondly, we require $2\delta_D + \eta_D < 1/3$ to ensure that
  all online honest \coms{} reach an agreement in the cross-check step
  and the reconstruction is successful.
Note that the \coms{} in the setup phase who dropped out ($\delta_D$ fraction)
  will not have the share of $SK$;
  while the clients who drop out during a round (another $\delta_D$ fraction) in the collection phase can come back at another round,
  hence we have the above inequality.

\subsection{Main theorems}\label{s:formal-thms}

\ifthenelse{\boolean{longver}}
{The full protocol, denoted as $\Phi_T$, is the sequential execution of $\Pi_{\text{setup}}$ (Fig.~\ref{fig:protocol-setup}) followed by a $T$-round $\Pi_{\text{sum}}$ (Fig.~\ref{fig:collection}). 
We now give formal statements for the properties of \sys,
and defer the proof to Appendix~\ref{app:proofs}.
Note that as we see from the ideal functionality $\mathcal{F}_{\text{mal}}$ (Fig.~\ref{fig:F-ideal-mal-weak}), when the server is corrupted, the sum result in round $t$ is not determined by the actual dropout set $\mathcal{O}_t$, but instead a set $M_t$ chosen by the adversary (see details in Appendix~\ref{app:proof-security-flamingo}). }
{The full protocol, denoted as $\Phi_T$, is the sequential execution of 
$\Pi_{\text{setup}}$ followed by a $T$-round $\Pi_{\text{sum}}$. 
We now give formal statements for the properties of \sys{}; the proof 
is given in Appendix D in the full version~\cite{ma23flamingo}.}

\begin{figure}
 \begin{tcolorbox}[enhanced, boxsep=1mm, left= 0mm, right=0.5mm, title={\textbf{\footnotesize Functionality $\Fmal$}}]
    \linespread{1.4}
    \footnotesize
    
    Parties: clients $1, \ldots, N$ and a server.
    
    Parameters: corrupted rate $\eta$, dropout rate $\delta$, number of per-round participating clients $n$. 
    
    \begin{myitemize2}

        \item $\Fmal$ receives from a malicious adversary $\Adv$ a set of corrupted parties,
                denoted as $\mathcal{C} \subset[N]$, where $|\mathcal{C}|/N \le \eta$.
        \item For each round $t\in [T]$:
            \begin{enumerate}
                \item  $\Fmal$ receives a sized-$n$ random subset $S_t\subset [N]$ and a dropout set $\mathcal{O}_t \subset S_t$, 
                where $|\mathcal{O}_t|/|S_t| \le \delta$, and $|\mathcal{C}|/|S_t| \le \eta$,
                and inputs $\vec{x}_{i,t}$ from client $i \in S_t \backslash (\mathcal{O}_t \cup \mathcal{C})$.

                \item $\Fmal$ sends $S_t$ and $\mathcal{O}_t$ to $\Adv$, and asks $\Adv$ for a set $M_t$: if $\Adv$ replies with $M_t \subseteq S_t\backslash \mathcal{O}_t$ such that $|M_t|/|S_t| \ge 1-\delta$, then $\Fmal$ computes $\vec{y}_t = \sum_{i\in M_t \backslash \mathcal{C}} \vec{x}_{i,t}$ and continues; otherwise $\Fmal$ sends \texttt{abort} to all the honest parties.

                \item Depending on whether the server is corrupted by $\Adv$:
                \begin{myitemize2}
                    \item If the server is corrupted by $\Adv$, then $\Fmal$ outputs $\vec{y}_t$ to all the parties corrupted by $\Adv$.
                    \item If the server is not corrupted by $\Adv$, then $\Fmal$ asks $\Adv$ for a shift $\vec{a}_t$ and outputs $\vec{y}_t + \vec{a}_t$ to the server.
                \end{myitemize2}
                \end{enumerate}
  
    \end{myitemize2}
    \end{tcolorbox}
\caption{Ideal functionality for Flamingo.}
\label{fig:F-ideal-mal-weak}
\end{figure}

\begin{theorem}[Dropout resilience of $\Phi_T$]\label{thm:dropout-resilience}
Let $\delta, \delta_D, \eta, \eta_D$ be threat model parameters as defined (\S\ref{s:param:setup},\S\ref{s:param:collection}). 
If $2\delta_D + \eta_D < 1/3$, 
then protocol $\Phi_T$ satisfies dropout resilience: 
when all parties follow the protocol $\Phi_T$,
for every round $t \in [T]$, and given a set of dropout clients $\mathcal{O}_t$ in the report step
 where $|\mathcal{O}_t| / |S_t| \le \delta$, 
protocol $\Phi_T$ terminates and outputs $\sum_{i\in S_t \backslash \mathcal{O}} \vec{x}_{i, t}$, except probability $2^{-\kappa}$.
\end{theorem}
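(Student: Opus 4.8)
\textbf{Proof plan for Theorem~\ref{thm:dropout-resilience}.}

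The plan is to trace through an honest execution of $\Phi_T$ step by step and argue that each step terminates and produces the right intermediate state, so that the final output is exactly $\sum_{i \in S_t \backslash \mathcal{O}_t} \vec{x}_{i,t}$. First I would observe that since all parties follow the protocol, the only ``adversarial'' behavior is the honest dropout of regular clients (bounded by $\delta$) and of \coms{} (bounded by $\delta_D$), plus the fact that $\eta$-fraction of clients exist but also behave honestly here (so $\mathcal{C}$ plays no active role in this statement; the $\eta_D$ term enters only through the threshold arithmetic). In the setup phase, Theorem~\ref{thm:security-setup} already gives us that under $\delta_D + \eta_D < 1/3$ the DKG completes and a unique $(SK, PK)$ is fixed with all honest online \coms{} holding consistent shares; since $2\delta_D + \eta_D < 1/3$ implies $\delta_D + \eta_D < 1/3$, this precondition is met.

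Next I would walk through round $t$. In the report step, every client $i \in S_t \backslash \mathcal{O}_t$ sends its masked vector $Vec_i$ together with the $L$ symmetric ciphertexts of the shares $m_{i,u,t}$ and the $|A_t(i)|$ ElGamal ciphertexts of $h_{i,j,t}$. The server labels $S_t \backslash \mathcal{O}_t$ as online and $\mathcal{O}_t$ as offline. I would then invoke Lemma~\ref{lemma:graph-connectivity}: with $\epsilon \ge \epsilon^*$, after deleting the $\le (\delta+\eta)n_t$ dropout/malicious vertices the residual graph is connected except with probability $2^{-\kappa}$ — this is the single place the $2^{-\kappa}$ slack in the statement comes from. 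Conditioned on that event, the three cross-check conditions (at least $(1-\delta)n_t$ online clients, online subgraph connected, each online client has $\ge k$ online neighbors) all pass for honest \coms{}, so no honest \com{} aborts. Since at most $\delta_D$ of the \coms{} dropped during setup and at most another $\delta_D$ drop this round (a disjoint bound, as noted in \S\ref{s:param:collection}), and at most $\eta_D$ are corrupt, the number of honest online \coms{} holding valid shares is at least $(1 - 2\delta_D - \eta_D)L > \tfrac{2}{3}L \ge \ell+1$, so every signed-label check collects enough matching signatures and the threshold ElGamal decryption and the symmetric decryptions in the reconstruction step all succeed.

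Then I would finish by the unmasking algebra: the server subtracts, for each online $i$, the individual mask $\PRG(m_{i,t})$ reconstructed from the recovered shares $\{m_{i,u,t}\}$ via $\Recon$, and for each offline $i \in \mathcal{O}_t$, the pairwise-mask contributions $\pm\PRG(h_{i,j,t})$ for its online neighbors $j$ (recovered from the decrypted $h_{i,j,t}$), which cancels the corresponding terms in the online neighbors' $Vec_j$. Because $h_{i,j,t} = \PRF(r_{i,j},t)$ is a symmetric function of the pair, the pairwise masks between two online clients cancel in the sum $\sum_{i \in S_t \backslash \mathcal{O}_t} Vec_i$, and every pairwise mask touching an offline client is removed by the reconstruction, leaving exactly $\sum_{i \in S_t \backslash \mathcal{O}_t} \vec{x}_{i,t}$. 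I would then note that a union bound over the $T$ rounds is absorbed into the $2^{-\kappa}$ by choosing $\epsilon$ (equivalently $\epsilon^*$) for a slightly larger effective security parameter, or simply state the per-round guarantee as written.

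The main obstacle is the bookkeeping around the \com{} population across phases: one has to be careful that the $\delta_D$-fraction that dropped out during setup (and hence never received a share of $SK$) and the $\delta_D$-fraction that drops during round $t$ are counted additively, and that the cross-check threshold $2L/3$, the reconstruction threshold $\ell+1$, and the DKG threshold $2\ell+1$ are all simultaneously satisfied under the single hypothesis $2\delta_D + \eta_D < 1/3$ — this is where choosing $\ell \approx L/3$ and tracking the inequalities precisely does the real work. Everything else (the connectivity invocation, the PRF/PRG cancellation, the $\Recon$ correctness) is routine given the earlier lemmas.
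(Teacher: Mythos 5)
Your proposal is correct and its core argument is the same as the paper's: the paper's own proof is just the threshold bookkeeping you describe — at most a $\delta_D$ fraction of decryptors drop in setup and another $\delta_D$ fraction in the round, so with $2\delta_D+\eta_D<1/3$ and $L=3\ell+1$ the online decryptors always suffice to reconstruct the secrets. Your additional details (invoking Lemma~\ref{lemma:graph-connectivity} to account for the $2^{-\kappa}$ slack and spelling out the mask-cancellation algebra) are consistent elaborations of what the paper leaves implicit, not a different route.
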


\begin{theorem}[Security of $\Phi_T$]\label{thm:security-main}
Let the security parameter be $\kappa$.
Let $\delta, \delta_D, \eta, \eta_D$ be threat model parameters as defined (\S\ref{s:param:setup},\S\ref{s:param:collection}). 
Let $\epsilon$ be the graph generation parameter (Fig.~\ref{fig:gengraph}). 
Let $N$ be the total number of clients and $n$ be the number of clients for summation in each round. 
% Assume $\Pi_{SK}$ is a secure secret key transfer protocol. 
Assuming the existence of a PKI, a trusted source of initial randomness, a PRG, a PRF, an asymmetric encryption $\AsymEnc$, a symmetric authenticated encryption $\SymEnc$, and a signature scheme, 
if $2\delta_D + \eta_D < 1/3$ and $\epsilon \ge \epsilon^*(\kappa)$ (Lemma~\ref{lemma:graph-connectivity}), then under the communication model defined in Section~\ref{s:comm-model}, protocol $\Phi_T$ securely realizes the ideal functionality $\Fmal$ given in Figure~\ref{fig:F-ideal-mal-weak} in the presence of a static malicious adversary controlling $\eta$ fraction of $N$ clients (and the server),\footnote{We assume that in each round, the corrupted fraction of $n$ clients is also $\eta$; see Section~\ref{s:pb}.}  except with probability at most $T n \cdot 2^{-\kappa+1}$.  
\end{theorem}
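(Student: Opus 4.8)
The plan is to exhibit a PPT simulator $\Sim$ and prove, through a sequence of hybrids, that the real execution of $\Phi_T$ is indistinguishable from the ideal execution with $\Fmal$ and $\Sim$, the statistical part of the gap being at most $Tn\cdot 2^{-\kappa+1}$. First I would invoke Theorem~\ref{thm:security-setup} together with a composition argument to replace the real $\Pi_{\text{setup}}$ by the ideal $\Fsetup$; from then on $\Sim$ works in the $\Fsetup$-hybrid world, knows $PK$, controls the distribution of $v$, and --- since it plays $\Fsetup$ toward the honest decryptors --- can arrange to know (equivalently, to consistently simulate) the honest decryptors' shares of $SK$. Now $\Sim$ runs all honest clients and honest decryptors on \emph{dummy} inputs: in round $t$ it receives $(S_t,\mathcal{O}_t)$ from $\Fmal$, derives $G_t$ and the neighbourhoods $A_t(\cdot)$ exactly as prescribed, and for each honest $i\in S_t$ emits a report message $Vec_i=\vec r_{i,t}+\sum_{j\in A_t(i)}\pm\PRG(h_{i,j,t})+\PRG(m_{i,t})$ for a fresh uniform $\vec r_{i,t}$, with honestly-formed symmetric and ElGamal ciphertexts of the corresponding seeds. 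It watches the corrupted server's online/offline labelling and its decryption requests, answers with correctly distributed partial decryptions only for ciphertexts matching the labels agreed in the cross-check of the current round (and aborts the honest decryptors exactly when the cross-check mandates), extracts the net contribution of the malicious clients from their submitted vectors, and hands $\Fmal$ the committed online set $M_t$ (and, when the server is honest, the shift $\vec a_t$). Finally, when the server is corrupted, $\Sim$ reprograms one fixed honest client $i^*\in M_t\setminus\mathcal{C}$ (nonempty since $\delta+\eta<1$) so that the vectors the server reconstructs sum to the value $\vec y_t$ returned by $\Fmal$.

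Next I would move through the hybrids. (i) Replace $\Pi_{\text{setup}}$ by $\Fsetup$. (ii) For every pair of honest clients replace the long-term value $r_{i,j}=g^{a_ia_j}$ by a uniform group element; this is the sole use of DDH, via a (multi-instance) reduction that embeds the challenge into the PKI values $g^{a_i}$. (iii) Replace every per-round seed $h_{i,j,t}=\PRF(r_{i,j},t)$ for honest--honest pairs, and treat the fresh individual seeds $m_{i,t}$ of honest clients, as independent uniform strings (PRF security, plus the fact that $m_{i,t}$ is never leaked because fewer than $\ell+1$ decryptors are corrupted). (iv) Replace the ElGamal ciphertexts of $h_{i,j,t}$ carried by honest clients for those honest--honest pairs whose pairwise mask is never reconstructed by encryptions of $0$; the reduction to ElGamal semantic security must keep answering the corrupted server's threshold-decryption requests for \emph{other} ciphertexts, which it does by observing that a partial decryption $c_1^{s_u}$ of a ciphertext with a known plaintext is fully determined by that plaintext together with the at-most-$\ell$ corrupted key shares --- so no honest key share is needed. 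Here I use the cross-check (and the round number bound into each ciphertext and into the signed labels) to pin down that for an honest--honest pair with both endpoints in $M_t$ the corresponding ciphertexts are never opened by a quorum, and that the cross-round mix-and-match attack is ruled out. (v) Replace each $\PRG(h_{i,j,t})$ and $\PRG(m_{i,t})$ used above by a uniform vector (PRG security). After (i)--(v) the honest report messages, from the adversary's view, are a uniform vector plus a sum of independent uniform pairwise masks plus a uniform individual mask, and the only constraint is that once the server strips the masks it is entitled to --- individual masks of online-labelled clients and pairwise masks touching an offline-labelled client --- the residual vectors of the honest online-labelled clients sum to $\sum_{i\in M_t\setminus\mathcal{C}}\vec x_{i,t}$.

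The last step is information-theoretic. Conditioned on the cross-check passing, the subgraph of $G_t$ induced on $M_t\setminus\mathcal{C}$ is exactly $G_t$ with the $\le\delta n$ offline-labelled and $\le\eta n$ malicious vertices deleted, so Lemma~\ref{lemma:graph-connectivity} makes it connected except with probability $2^{-\kappa}$, while the cross-check requirement that every online client have $\ge k$ online neighbours with $\eta^k<2^{-\kappa}$ excludes, except with probability $n\cdot 2^{-\kappa}$, the event that some honest online-labelled client has only corrupted online neighbours (which would expose its input). On a connected graph of independent pairwise masks, the joint distribution of the honest residual vectors depends on the honest inputs only through their sum, so replacing the true inputs by $\Sim$'s dummies and reprogramming $i^*$ to hit $\vec y_t$ is undetectable; a union bound over the $T$ rounds and the $\le n$ honest clients per round yields the $Tn\cdot 2^{-\kappa+1}$ term (the additional $2^{-\Theta(\kappa)}$-type slack from the DKG and from $\eta_D$ is already absorbed by Theorem~\ref{thm:security-setup} and the parameter choice). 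The step I expect to be the main obstacle is (iv): pushing the reduction to ElGamal semantic security through while the corrupted server adaptively asks honest decryptors to threshold-decrypt many correlated ciphertexts --- this forces one to (a) characterize exactly which honest--honest pairwise ciphertexts can never be opened, using the cross-check's consistency and the round-binding of ciphertexts, and (b) simulate every other honest partial decryption from plaintexts alone, which is precisely where the concrete choice of ElGamal threshold decryption (as opposed to a generic CCA-secure scheme) is both what makes the argument possible and what makes it delicate.
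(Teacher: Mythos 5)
Your proposal follows essentially the same route as the paper's proof: a per-round simulator interacting with $\Fmal$, a hybrid sequence that successively replaces the shared key material, the pairwise secrets/per-round seeds, their ciphertexts, and the PRG masks, the use of Lemma~\ref{lemma:graph-connectivity} plus the $k$-online-neighbour cross-check to bound the per-round statistical failure by $n\cdot 2^{-\kappa+1}$, a union bound over $T$ rounds, a PRF reduction for reuse of $r_{i,j}$ across rounds, and the honest-server case handled by composing the adversary's shifts into $\vec{a}_t$. The differences are presentational: the paper works in a programmable random-oracle model (for the PRG and the group-to-string map), invokes Lemma~6.1 of Bonawitz et al.~\cite{bonawitz17practical} for the connected-graph masking step and a small explicit agreement lemma for the cross-check, and handles your delicate step (iv) more bluntly by letting the simulator hold shares of $SK$ and switching to shares of zero, rather than via your interpolation-from-known-plaintext reduction.
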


\ifthenelse{\boolean{longver}}
{Additionally, if the asymmetric encryption is instantiated by ElGamal cryptosystem,
  the security can be based on the DDH assumption and the other primitives stated above.}
{}

The final complication is how to choose $L$ to ensure $2\delta_D + \eta_D < 1/3$ holds;
  note that $\eta_D$ depends on $\eta$ and $L$. 
One can choose $L$ to be $N$ but it does not give an efficient protocol;
on the other hand, choosing a small $L$ may result in
  all the \coms{} being malicious. 
\ifthenelse{\boolean{longver}}
{In Appendix~\ref{app:param}, we give a lower bound of $L$ to ensure a desired $\eta_D$ (w.h.p.), given $N, \eta $, and $\delta_D$. }
{In Appendix~\ref{app:param} of the full version~\cite{ma23flamingo}, we give a lower bound of $L$ to ensure a desired $\eta_D$ (w.h.p.), given $N, \eta $, and $\delta_D$.}

\section{Implementation}\label{s:impl}
\label{s:crypto-instantiation}

We implement \sys{} in 1.7K lines and BBGLR in 1.1K lines of Python.
For $\PRG$, we use AES in counter mode, for authenticated encryption
  we use AES-GCM, 
  and signatures use ECDSA over curve P-256.
Our code is available at \href{https://github.com/eniac/flamingo}{\color{black}{\texttt{https://github.com/eniac/flamingo}}}.
\begin{figure*}
\footnotesize
 \renewcommand{\arraystretch}{1.5} % Default value: 1
\begin{tabular*}{\textwidth}{@{\extracolsep{\fill}}l l c c c | c c c}
 & & \multicolumn{3}{ c }{\textbf{BBGLR}} & \multicolumn{3}{ c }{\textbf{\sys}} \\
 \textbf{Phase} & & \textbf{Steps} & \textbf{Server} & \textbf{Client} &
                    \textbf{Steps} &\textbf{Server} & \textbf{Client} \\
 \cline{1-8}
  Setup & & --- & --- & --- & 4 & $O(L^3)$ & $O(L^2)$ \\
 \cline{1-8}
 %\midrule
  \multirow{3}*{$T$ sums} & Round setup & $3T$ & $O(TA n_t)$ & $O(TA)$ & --- & --- & --- \\
 \cline{2-8}
  & \multirow{2}*{Collection} & \multirow{2}*{$3T$} & \multirow{2}*{$O( T n_t
(d+ A ))$} & \multirow{2}*{$O( T(d + A))$} & \multirow{2}*{$3T$} & 
\multirow{2}*{$O(T n_t (d+L +  A))$} & Regular client: $O(T(d + A))$ \\
                                     &&&&&&& \Coms: $O(T(L + \delta A n_t  + (1-\delta)n_t ))$ \\
 \bottomrule
\end{tabular*}
\caption{Communication complexity and number of steps (client-server round-trips) of \sys and BBGLR 
    for $T$ rounds of aggregation. 
$N$ is the total number of clients and $n_t$ is the number of clients chosen 
  to participate in round $t$. 
The number of \coms is $L$, and the dropout rate of clients in $S_t$ is $\delta$.
Let $A$ be the upper bound on the number of neighbors of a client, and let
$d$ be the dimension of client's input vector. }

\label{fig:asym-comm}
\end{figure*}

\heading{Distributed decryption.}\label{s:distributed-decryption}
We build the distributed decryption scheme discussed in Section~\ref{s:intuition} as follows.
We use ElGamal encryption to instantiate the asymmetric encryption. It consists of three algorithms $(\AsymGen, \AsymEnc, \AsymDec)$.
$\AsymGen$ outputs a secret and public key pair $SK \in_R \mathbb{Z}_q$ and $PK := g^{SK} \in \mathbb{G}$.
$\AsymEnc$ takes in $PK$ and plaintext $h \in \mathbb{G}$, and outputs 
  ciphertext $(c_0 , c_1) := (g^w, h \cdot PK^w)$, where 
  $w \in_R \mathbb{Z}_q$ is the encryption randomness.
$\AsymDec$ takes in $SK$ and ciphertext $(c_0, c_1)$ and outputs $h = (c_0^{SK})^{-1}\cdot c_1$.

In threshold decryption~\cite{gennaro08threshold, desmedt89threshold,
  shoup02securing}, the secret key $SK$ is shared among $L$ parties such that
  each party $u\in [L]$ holds a share $s_u$, but no single entity knows
  $SK$, i.e., $(s_1, \ldots, s_L) \leftarrow \Share(SK, \ell, L)$.
Suppose Alice wants to decrypt the ciphertext $(c_0, c_1)$ using the secret-shared $SK$.\@
To do so, Alice sends $c_0$ to each party in $[L]$, and gets back 
  $c_0^{s_u}$ for $u \in U \subseteq [L]$.
If $|U| > \ell$, Alice can compute from $U$ a set of combination 
  coefficients $\{\beta_u\}_{u\in U}$, and
\[c_0^{SK} = \prod_{u \in U} (c_0^{s_u})^{\beta_u }. \]

Given $c_0^{SK}$, Alice can get the plaintext $h = (c_0^{SK})^{-1} \cdot c_1$.
Three crucial aspects of this protocol are that: (1) $SK$ is never reconstructed; 
  (2) the decryption is dropout resilient (Alice can obtain $h$ as long as more than
  $\ell$ parties respond); (3) it is non-interactive: Alice communicates 
  with each party exactly once.

We implement ElGamal over elliptic curve 
  group $\mathbb{G}$ and we use curve P-256.
To map the output of $\PRF(r_{i,j}, t)$, which is a binary string, 
  to $\mathbb{G}$, we first hash it to an element in the field of 
  the curve using the \emph{hash-to-field} algorithm from the 
  IETF draft~\cite[\S5]{hashec}.
We then use the SSWU algorithm~\cite{swu,sswu} to map the field 
  element to a curve point $P \in \mathbb{G}$.
A client will encrypt $P$ with ElGamal,
  and then hash $P$ with SHA256 to obtain $h_{i,j,t}$---the input
  to the pairwise mask's $\PRG$.
When the server decrypts the ElGamal ciphertexts and obtains $P$, it 
  uses SHA256 on $P$ to obtain the same value of $h_{i,j,t}$.

\heading{Optimizations.}
In \sys's reconstruction step, we let the server do reconstruction using partial shares.
That is, if the interpolation threshold is 15, and the server collected shares from 20 \coms, 
  it will only use 15 of them and ignore the remaining 5 shares.
Furthermore, as we only have a single set of \coms, when the server collects shares from $U \subseteq \comset$,
  it computes a single set of interpolation coefficients from $U$ and uses it to do linear combinations
  on all the shares.  
This linear combination of all the shares can be done in parallel.
In contrast, BBGLR requires the server to compute different sets of interpolation coefficients to reconstruct the pairwise
  masks (one set for each client).
  
\heading{Simulation framework.}
We integrade all of \sys's code into 
  ABIDES~\cite{byrd20abides, abides}, which is an open-source 
  high-fidelity simulator designed for AI research in financial markets (e.g., stock exchanges).
ABIDES is a great fit as it supports tens 
  of thousands of clients interacting with a server to 
  facilitate transactions (and in our case to compute sums). 
It also supports configurable pairwise network latencies.

\section{Asymptotic Costs}\label{s:cost}
\label{s:delay}

An advantage of \sys over BBGLR is that \sys requires fewer round trips for one summation.
BBGLR requires six round trips for one summation; in contrast, \sys{} requires three: 
  the report step involves all clients selected in a round, and for the remaining two 
  steps the server contacts the \coms.
Meanwhile, the number of round trips has a significant impact on the overall 
  runtime of the aggregation, as we show experimentally in Section~\ref{s:perf}.
The reasons for this are two-fold: (1) the latency of an RTT over the wide area network 
  is on the order of tens of milliseconds; and (2) the server has to wait for enough clients in each step to send their messages, so tail latency plays a big role.
Depending on the setting, client message arrival can vary widely, 
  with some clients potentially being mobile devices on slow networks.

In addition to fewer round trips, \sys{} is expected to wait for less time during the reconstruction step. 
This is because the server in BBGLR has to wait for the vast majority of clients to 
  respond in order to reconstruct secrets for dropout clients, while the server in 
  \sys{} only needs responses from $1/3$ of the \coms.

\heading{Client and server costs.}
Figure~\ref{fig:asym-comm} compares \sys's communication costs
  with those of BBGLR.  
In short, the total asymptotic cost for $T$ aggregation rounds between BBGLR and \sys{} does not vary much;
  but the number of round trips differ much.  
The computation cost is analyzed as follows.
For the setup phase, if a client is a \com, it has $O(L^2)$ computation for both DKG,
and secret key transfer. 
In the collection phase at round $t$, each client computes $O(A + L)$ 
  encryptions (assuming that for all clients $i$,  $A(i) \leq A$). 
If a client is a \com, it additionally has a 
  $O(\delta A n_t  + (1-\delta)n_t + \epsilon n_t^2)$ computation cost.

\section{Experimental Evaluation}\label{s:eval}

In this section we answer the following questions:
\begin{myitemize2}
    \item What are \sys's concrete server and client costs,
      and how long does it take \sys{} to complete one and multiple rounds
      of aggregation?
    \item Can \sys{} train a realistic neural network?
    \item How does \sys{} compare to the state of the art in terms of
     the quality of the results and running time? 
\end{myitemize2}

\noindent We implement the following baselines:

\heading{Non-private baseline.}
We implement a server that simply sums up the inputs it 
receives from clients. During a particular round, each of the clients sends a vector to the server. These vectors are in the clear, and may be any sort of value (e.g. floating points), unlike \sys, which requires data to be masked positive integers. 
The server tolerates dropouts, as Flamingo does, and aggregates only the vectors from clients who respond before the timeout.

\heading{BBGLR.}
For BGGLR, we implement Algorithm 3 in their paper~\cite{bell20secure} with a 
  slight variation that significantly improves BBGLR's running time, 
  although that might introduce security issues (i.e., we are making this baseline's 
  performance better than it is in reality, even if it means it is no longer secure). 
Our specific change is that we allow clients to drop out during the graph 
  establishment procedure and simply construct a graph with the clients that
  respond in time. BBGLR (originally) requires that no client drops out during graph 
  establishment to ensure that the graph is connected.
Without this modification, BBGLR's server has to wait for all the 
  clients to respond and is severely impacted by the long tail of the 
  client response distribution---which makes our experiments take 
  an unreasonable amount of time.

\subsection{Experimental environment}\label{s:simulate}
 
Prior works focus their evaluation on the server's costs.
While this is an important aspect (and we also evaluate it), a key 
  contributor to the end-to-end completion time of the aggregation (and 
  of the federated learning training) is the number of round-trips between clients and the server.
This is especially true for geodistributed clients.

To faithfully evaluate real network conditions, we run the ABIDES simulator~\cite{abides} on a server with 40 Intel Xeon E5-2660 v3 (2.60GHz) CPUs and
  200 GB DDR4 memory.
Note that in practice, client devices are usually less powerful than the experiment machine.
ABIDES supports the cubic network delay model~\cite{ha08cubic}: the latency 
  consists of a base delay (a range), plus a jitter that controls the
  percentage of messages that arrive within a given time (i.e., the shape of 
  the delay distribution tail).
We set the base delay to the ``global'' setting in ABIDES's default parameters 
  (the range is 21 microseconds to 53 milliseconds), and use the default 
  parameters for the jitter.

Both \sys and BBGLR work in steps (as defined in \S\ref{s:threat}).
We define a \textit{waiting} period for each step of the protocol.
During the waiting period, the server receives messages from clients and puts 
  the received messages in a message pool.
When the waiting period is over, a \emph{timeout} is triggered 
  and the server processes the messages in the pool, and proceeds to the next 
  step.
The reason that we do not let the server send and receive messages 
  at the same time is that, in some steps (in both BBGLR and \sys), the results 
  sent to the clients depend on all the received messages and cannot be 
  processed in a streaming fashion.
For example, the server must decide on the set of online clients before sending the request to reconstruct the shares.

\subsection{Secure aggregation costs and completion time}\label{s:perf}
This section provides microbenchmarks for summation tasks performed by BBGLR and \sys{}.
Client inputs are 16K-dimensional vectors with 32-bit entries.
\ifthenelse{\boolean{longver}}{%
For parameters, unless specified differently later, we set $N$ to 10K and
  the number of decryptors to 60 in \sys{}; and set
  the number of neighbors to $4\log n_t$ for both \sys{} and BBGLR 
  (for BBGLR, this choice satisfies the constraints in Lemma 4.7~\cite{bell20secure}). 
In Figures~\ref{fig:comm} and~\ref{fig:microbenchmark-comp}, ``keyad'' is the step for
  exchanging keys, ``graph'' is the step for clients to send their choices of
  neighbors, ``share'' is the step for clients to shares their secrets
  to their neighbors (marked as 1--8 in their Algorithm 3). 
The steps ``report'', ``check'' and ``recon'' in \sys{} are 
  described in Section~\ref{s:detail:collection}; in BBGLR,
  these steps correspond to the last three round trips in a summation marked as 8--14 in their Algorithm 3.
}{
}

\begin{figure}
\begin{subfigure}[b]{0.48\linewidth}
\centering
 \includegraphics[width=\linewidth]{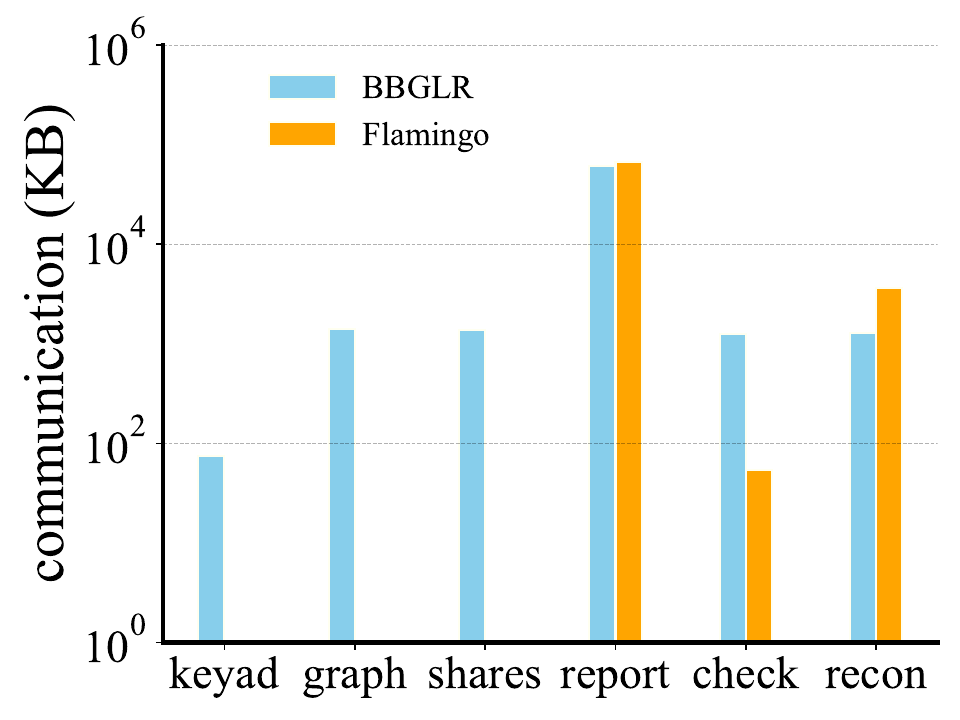}
 \caption{Server communication}
    \label{fig:multisum-acc}
\end{subfigure}
 \begin{subfigure}[b]{0.48\linewidth}
 \centering
 \includegraphics[width=\linewidth]{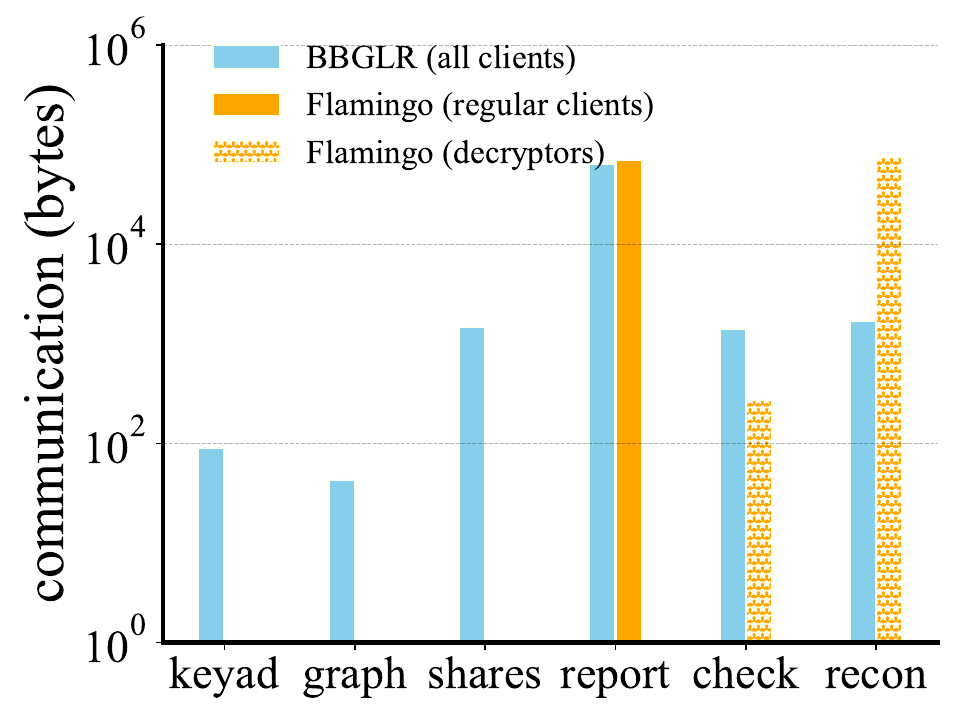}
 \caption{Client communication}
    \label{fig:multisum-acc}
\end{subfigure}
  \caption{Communication costs for different steps in a single summation over 1K clients for \sys and BBGLR.
   }
 \label{fig:comm}
\end{figure}

\heading{Communication costs.}
Figure~\ref{fig:comm} gives the communication cost for 
  a single summation.
The total cost per aggregation round for BBGLR and \sys{} 
  are similar.
This is because \sys{}'s extra cost over BBGLR at the report step
  is roughly the message size that BBGLR has in their three-step setup; this is also reflected in the asymptotic cost analysis of Figure~\ref{fig:asym-comm}. 
In short, compared to BBGLR, \sys{} has fewer round trips with roughly the same total server communication. 
For clients, the story is more nuanced: each client 
has a slightly higher cost in \sys{} than in BBGLR during 
  the report step, as clients in \sys{} need to append ciphertexts to the vector.
However, in the reconstruction step, clients who are not \coms will not need to send or receive any messages. Each 
  \com incurs communication that is slightly larger than sending one input vector.
Note that the vector size only affects the report step.

\begin{figure}
{\footnotesize
\begin{tabular*}{\columnwidth}{@{\extracolsep{\fill}}l r r r r r r}
\textbf{CPU costs} & keyad & graph & share & report & check & recon \\
\toprule
\textbf{Server (sec)}  \\
BBGLR & 0.11 & 0.27  & 0.09  &  0.09  & 0.08  & 0.76  \\
Flamingo & --- & --- & --- & 0.24  & --- &  2.30 \\
\midrule
\textbf{Client (sec)}  \\
BBGLR & $<$0.01 & $<$0.01  & $<$0.01 & 0.02  & 0.01 & $<$0.01  \\
Flamingo \\
\quad Regular clients & ---  & ---  & --- & 0.22 & --- & --- \\
\quad \Coms & ---  & --- & --- & --- & 0.10 & 0.56 \\
\bottomrule
\end{tabular*}
}
\caption{Single-threaded microbenchmarks averaged over 10 runs for server and client computation for a single summation over 1K clients. ``$<$'' means less than. } 
\label{fig:microbenchmark-comp}
\end{figure}

\heading{Computation costs.}
We first microbenchmark a single summation with 
  1K clients, and set $\delta$ to 1\% (i.e., up to
  1\% of clients can drop out in any given step).
This value of $\delta$ results in a server waiting time 
  of 10 seconds.
Figure~\ref{fig:microbenchmark-comp} gives the results.
The report step in \sys has slightly larger server and client costs than BBGLR
because clients need to generate the graph ``on-the-fly''.
In BBGLR, the graph is already established in the 
  first three steps and stored for the report and reconstruction step.
For server reconstruction time, \sys is slightly more costly than BBGLR because of the additional elliptic curve operations. 
The main takeaway is that while \sys's computational costs are slightly higher than BBGLR, these
  additional costs have negligible impact on completion time owing to the much larger
  effect of network delay, as we show next.

\heading{Aggregation completion time.}
To showcase how waiting time $w$ affects dropouts (which consequently affects the
  sum accuracy), we use two waiting times, 5 seconds and 10 seconds.  
The runtime for an aggregation round depends on the timeout for each step,
  the simulated network delay, and the server and client computation time.
Figure~\ref{fig:multisum-time-long} and~\ref{fig:multisum-time-short} show
  the overall completion time across 10 rounds of aggregations.
A shorter waiting time makes the training faster, but it also means that there are more dropouts, 
  which in turn leads to more computation during reconstruction.
As a result, the overall runtime for the two cases are similar.
On 1K clients, \sys{} achieves a $3\times$ improvement over BBGLR; for \sys's cost we included its one-time setup 
  and one secret key transfer. 
If the key transfer is performed less frequently, the improvement will be more significant.

\begin{figure}
\begin{subfigure}[b]{0.45\linewidth}
 \includegraphics[width=\linewidth]{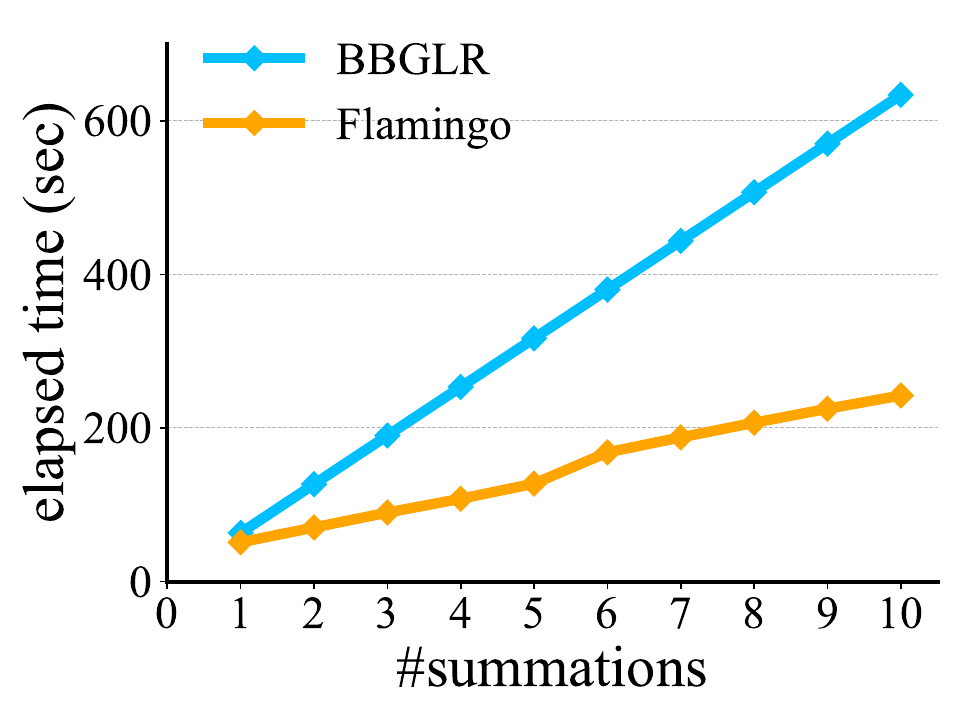}
 \caption{Runtime with $w=10$.}
    \label{fig:multisum-time-long}
\end{subfigure}
 \begin{subfigure}[b]{0.45\linewidth}
 \includegraphics[width=\linewidth]{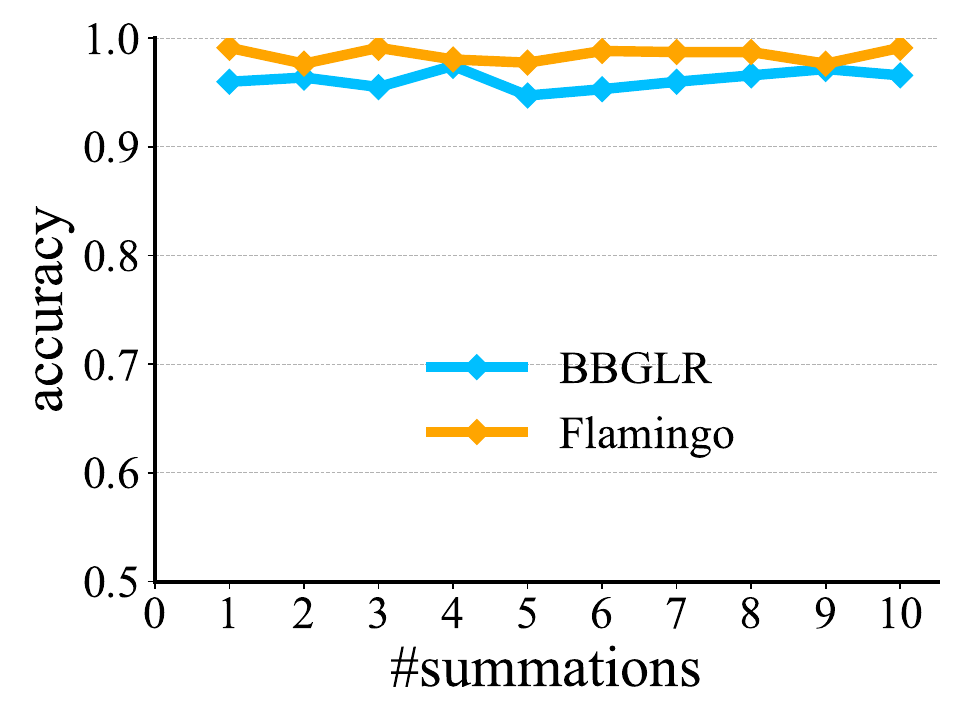}
 \caption{Sum accuracy $\tau$; $w=10$.}
    \label{fig:multisum-acc-long}
\end{subfigure}
    
\begin{subfigure}[b]{0.45\linewidth}
 \includegraphics[width=\linewidth]{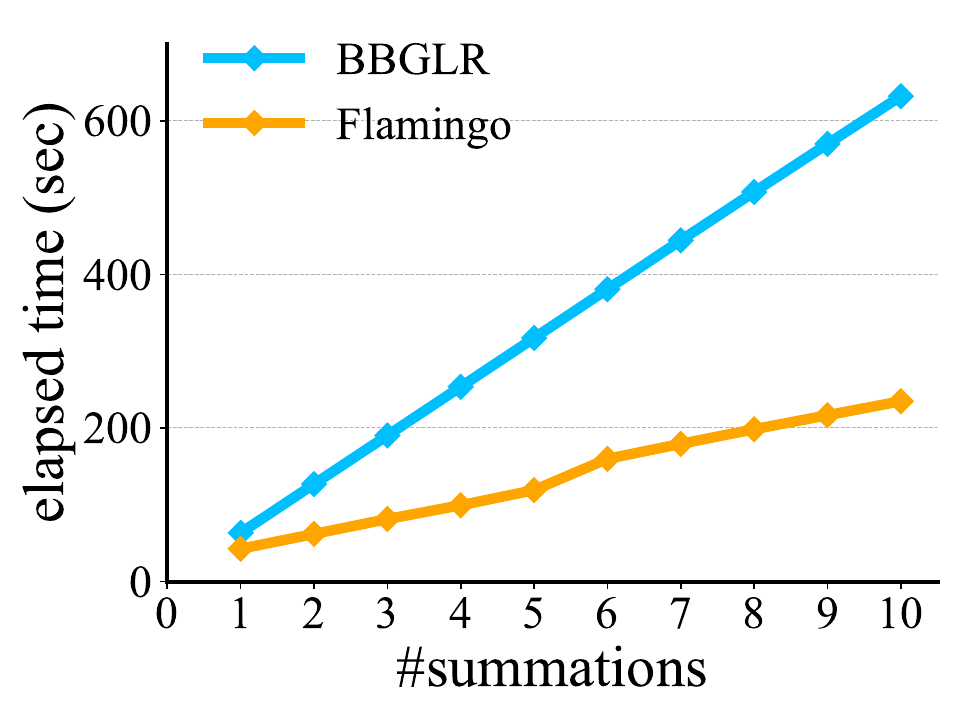}
 \caption{Runtime with $w=5$. }
    \label{fig:multisum-time-short}
\end{subfigure}
 \begin{subfigure}[b]{0.45\linewidth}
 \includegraphics[width=\linewidth]{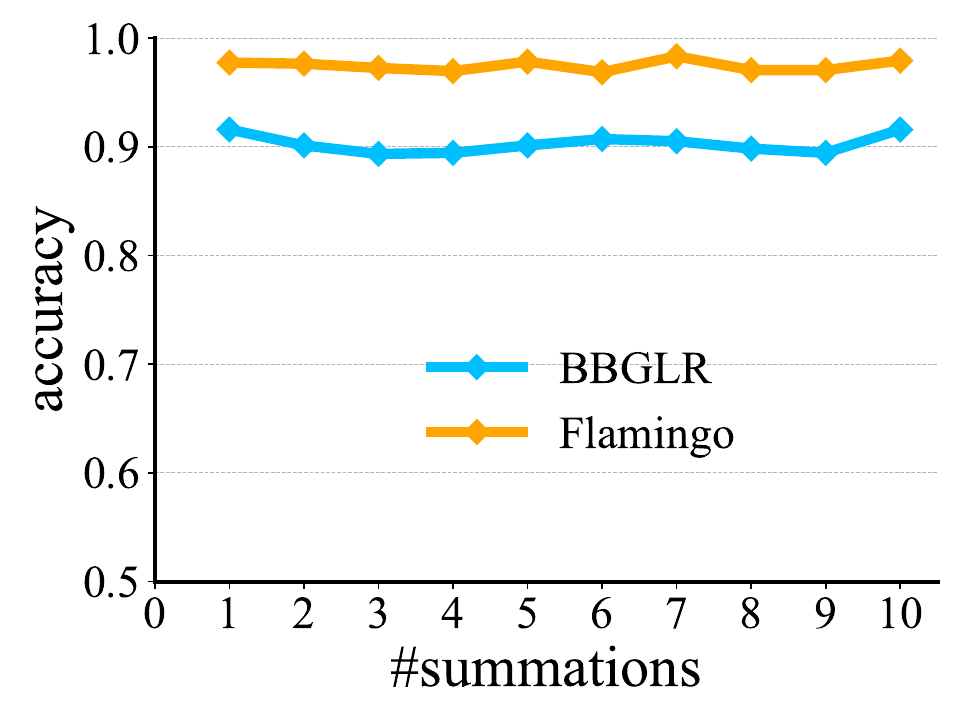}
 \caption{Sum accuracy $\tau$; $w=5$. }
    \label{fig:multisum-acc-short}
\end{subfigure}
  \caption{End-to-end completion time and accuracy of 10 secure aggregation rounds with 1K clients.
  The elapsed time is the finishing time of round $t$.
  For \sys{}, round 1 includes all of the costs of its 
  one-time setup, and between round 5 and 6
  \sys performs a secret key transfer.}
 \label{fig:multisum}
\end{figure}

The cost of the DKG procedure (part of the setup and which we also added to the first round in 
  Figure~\ref{fig:multisum}) is shown in Figure~\ref{fig:dkg}.
A complete DKG takes less than 10 seconds as the number of \coms is not large and we allow \com dropouts.
For 60 \coms, the local computation performed by each \com during the DKG is 2 seconds.

\begin{figure}
\centering
 \includegraphics[width=\linewidth]{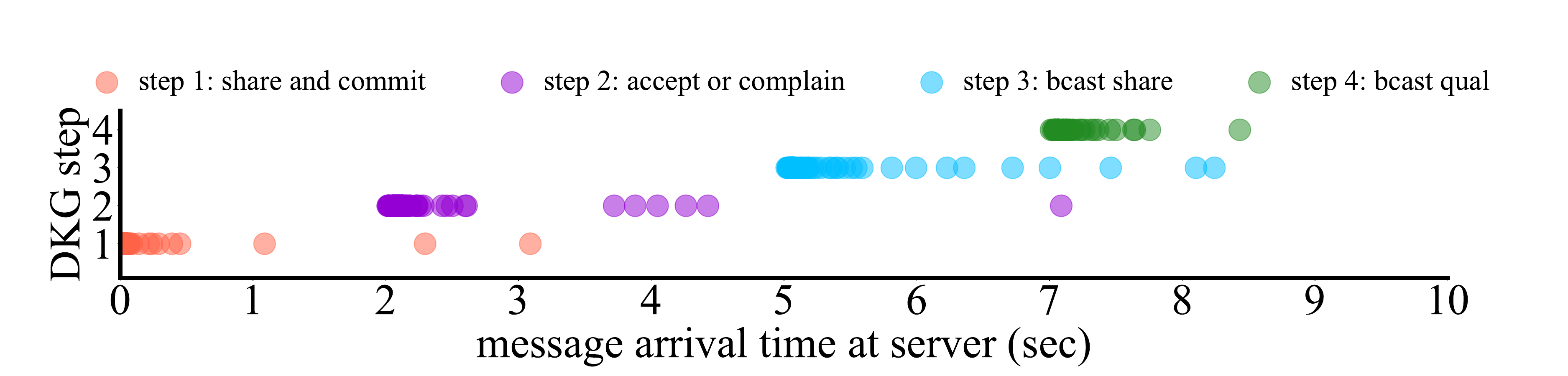}
  \caption{\small Generating shares of the secret key among 60 \coms.
  The four steps are described in Section~\ref{s:overview} and given as part~(1) in $\Pi_{\text{DKG}}$ in Appendix~\ref{app:dkg}. 
  } 
  \label{fig:dkg}
\end{figure}

\heading{Summation accuracy.}
Figure~\ref{fig:multisum-acc-long} and~\ref{fig:multisum-acc-short} 
  show that \sys{} achieves better sum accuracy $\tau$ (defined in \S\ref{s:properties}) 
  than BBGLR when they start a summation
  with the same number of clients.
When the waiting time is shorter, as in Figure~\ref{fig:multisum-acc-short}, 
  in each step there are more clients excluded from the summation and therefore 
  the discrepancy between \sys and BBGLR grows larger.

\subsection{Feasibility of a full private training session}\label{s:eval:mnist}

We implement the federated learning algorithm \texttt{FedAvg}~\cite{mcmahan17communication} on the non-private baseline.
We also use this algorithm for \sys{} and BBGLR but replace 
  its aggregation step with either \sys{} or BBGLR 
  to produce a secure version.
Inside of \texttt{FedAvg}, we use a multilayer perceptron for  
  image classification. 
Computation of the weights is done separately by each client on local data, and then aggregated by the server to update a global model. 
The server then sends the global model back to the clients. 
The number of training iterations that clients perform on their local data is referred to as an \textit{epoch}. 
We evaluated \sys{} on epochs of size 5, 10, and 20. 
We found that often, a larger epoch was correlated with faster convergence of \texttt{FedAvg} to some ``maximum" accuracy score. 
Additionally, because our neural network model calculations run very fast---there was, on average, less than a second difference between 
  clients' model fitting times for different epochs---and because \sys{} and the baselines were willing to wait for clients' inputs for up 
  to 10 seconds, the epoch size did not affect their overall runtime.

\begin{figure}
\begin{subfigure}[b]{0.48\linewidth}
\centering
 \includegraphics[width=\linewidth]{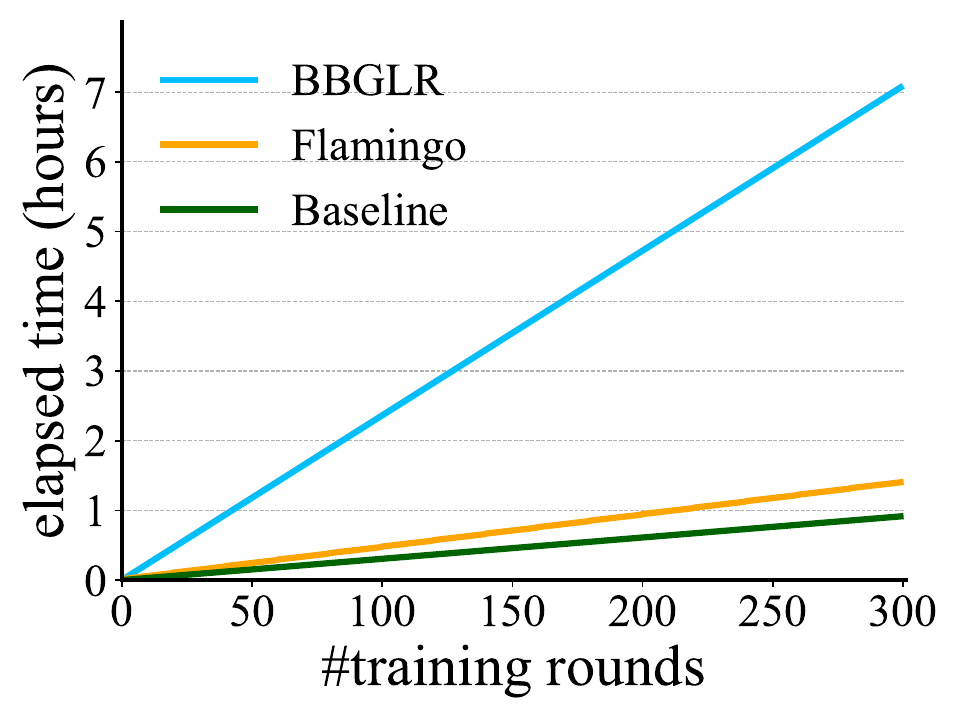}
 \caption{Run time. \texttt{EMNIST}}
    \label{fig:sgd-runtime-mnist}
\end{subfigure}
 \begin{subfigure}[b]{0.48\linewidth}
 \centering
 \includegraphics[width=\linewidth]{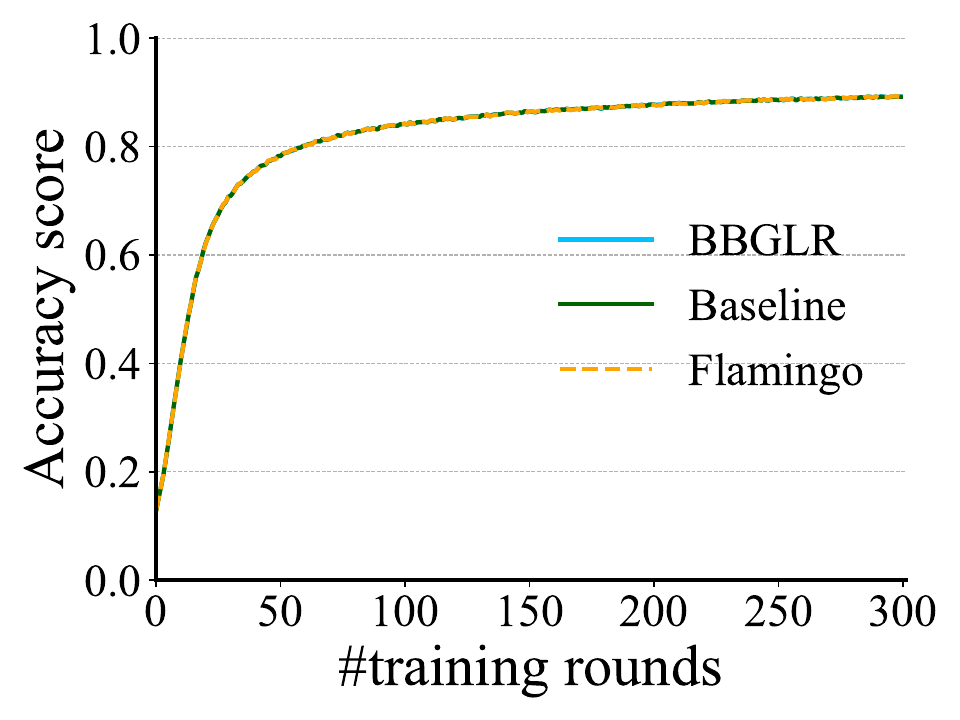}
 \caption{Accuracy. \texttt{EMNIST}}
    \label{fig:sgd-acc-mnist}
\end{subfigure}

\begin{subfigure}[b]{0.48\linewidth}
\centering
 \includegraphics[width=\linewidth]{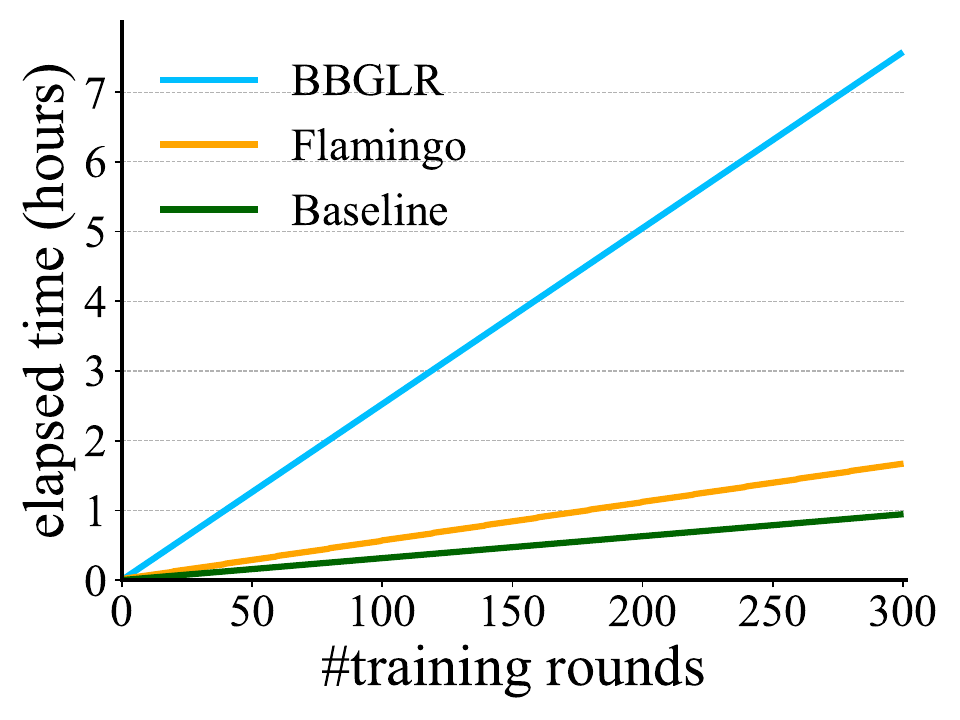}
 \caption{Run time. \texttt{CIFAR100}}
    \label{fig:sgd-runtime-letter}
\end{subfigure}
 \begin{subfigure}[b]{0.48\linewidth}
 \centering
\includegraphics[width=\linewidth]{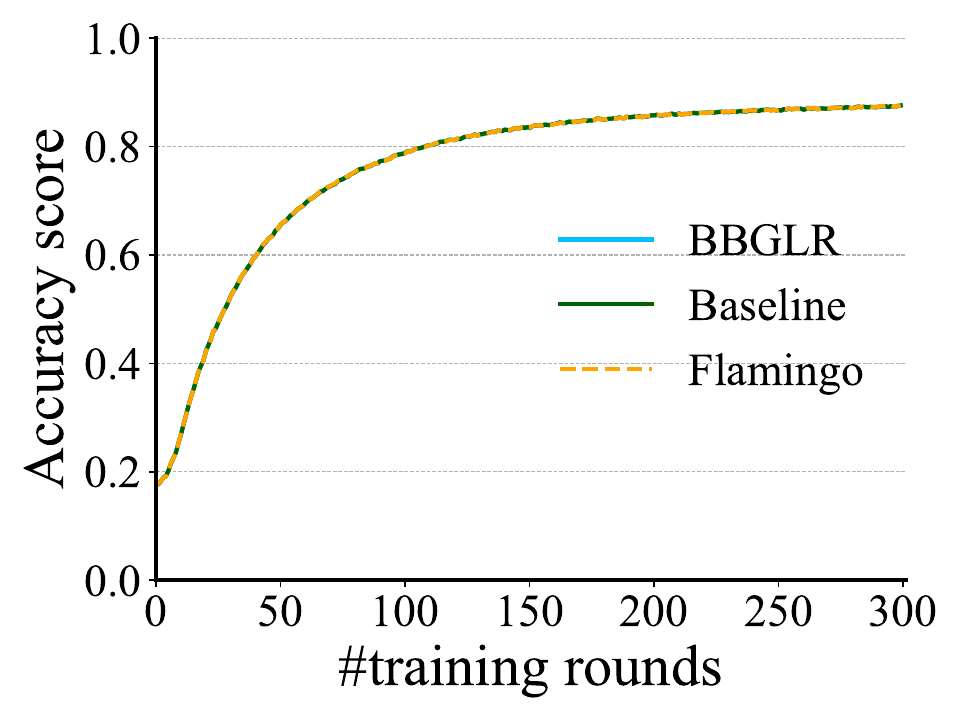}
 \caption{Accuracy. \texttt{CIFAR100}}
    \label{fig:sgd-acc-letter}
\end{subfigure}

  \caption{Evaluation for full training sessions on \texttt{EMNIST} and \texttt{CIFAR100} datasets. The number of clients per round is 128, the batch and epoch sizes for \texttt{FedAvg} are 10 and 20, respectively. \sys's setup cost is included during the first round, and it performs a secret key transfer every 20 rounds,  which adds to the total run time.
  The accuracy score is TensorFlow's sparse categorical accuracy score~\cite{tensorflow2015-whitepaper}.}
 \label{fig:training}
\end{figure}

We use two of TensorFlow's federated datasets~\cite{tensorflow2015-whitepaper}: (1) \texttt{EMNIST}, the Extended MNIST letter dataset from the Leaf repository~\cite{caldas2018leaf, caldasleafgithub}; and (2) \texttt{CIFAR100}, from the CIFAR-100 tiny images dataset~\cite{krizhevskycifar, krizhevsky2009learning}.
The \texttt{EMNIST} dataset has $\sim$340K training/$\sim$40K test samples, each with a square of $28\times 28$ pixels and 10 classes (digits). Each client has $\sim$226 samples.
During training, we use weight vectors with 8K 32-bit entries. 
The \texttt{CIFAR100} dataset has 50K training/10K test samples, each with a square of $32\times 32$ pixels and 100 classes. Each pixel additionally has red/blue/green values. Each client has 100 samples. To achieve good accuracy for the \texttt{CIFAR100} dataset, we use a more complex convolutional neural network than we do for the \texttt{EMNIST} dataset, with extra layers to build the model, normalize inputs between layers, and handle activation functions and overfitting. This results in longer weight vectors, with 500K 32-bit entries.

We randomly divide the datasets equally among 128 clients to create local data. Local models are trained with small batch sizes.
In \sys  and BBGLR, all weights (often floating point numbers) are encoded as positive integers. 
We do this by adding a large positive constant, multiplying by $2^{12}$, and truncating the weight to an unsigned 32-bit integer. 
Figure~\ref{fig:training} shows the result with $\delta = 1\%$. 

\heading{Running time.}
From Figures~\ref{fig:sgd-acc-mnist} and~\ref{fig:sgd-acc-letter}, we see that the \texttt{EMNIST} and \texttt{CIFAR100} datasets do not converge until about round 150 and 200, respectively, though their accuracy continues to improve slightly after that. Figures~\ref{fig:sgd-runtime-mnist} and~\ref{fig:sgd-runtime-letter} show \sys's running time is about 5.5$\times$ lower (i.e., better) than BBGLR for \texttt{EMNIST} and 4.8$\times$ for \texttt{CIFAR100} and about 1.4$\times$ higher (i.e., worse) than the non-private baseline for \texttt{EMNIST} and 1.7$\times$ for \texttt{CIFAR100}.
We believe these results provide evidence that \sys{} is an effective secure aggregation protocol for multi-round
  settings such as those required in federated learning.

\heading{Training accuracy.}
We measure training accuracy with TensorFlow's sparse categorical accuracy score, which is derived based on the resulting model's performance on test data.
Due to the way in which we encode floating points as integers, a small amount of precision from the weights is lost each round. 
We compare the accuracy of \sys{} and BBGLR's final global model to a model trained on the same datasets with the baseline version of \texttt{FedAvg} (which works over floating points) in Figures~\ref{fig:sgd-acc-mnist} and~\ref{fig:sgd-acc-letter}. 
We find that the encoding itself does not measurably affect accuracy.

\section{Extension with robustness}\label{s:enhanced}
Recall that our threat model (\S\ref{s:threat}) assumes that the server is controlled
  by a malicious adversary.
However, in some cases, the server is actually honest and wants to obtain
  a meaningful result---this motivates the need of having the protocol
  be \emph{robust} against malicious clients
  in addition to the other security properties. 
Here we discuss two types of robustness: 1) guaranteed output delivery, where 
  the protocol always outputs the correct result even if some clients are malicious;
  2) detect-and-abort, where incorrect output (if already computed) can be detected
  but the protocol will not output the correct result.
Clearly the first notion is stronger.

In Section~\ref{s:bbglr}, we discussed that in BBGLR, when the server is honest, 
  a malicious client may cause the output
  to be wrong or the protocol to abort: a malicious client changes the received share of a secret, 
  and the server may not reconstruct the secret (in which case the protocol aborts) 
  or the server may reconstruct to different secret. Or a malicious client does 
  not use the established pairwise secret in the setup phase to mask the input vector.
The protocol we give in Section~\ref{s:overview} has neither type of robustness 
  for the same reason as in BBGLR, but we provide an extension such that 
  malicious behaviors of the clients can be detected, though it does
  not have guaranteed output delivery.  
The full description of the extension is given in 
\ifthenelse{\boolean{longver}}
{Appendix~\ref{app:robust-protocol}.}
{Appendix E in the full version~\cite{ma23flamingo}.}

As a result, the extended protocol, if terminates, always output the correct sum, i.e., 
  the sum of the inputs from clients who participated; or it aborts, in which case the server
  detects that the output is incorrectly computed (but it cannot detect which
  client is malicious). 
That said, we want to emphasize that this notion of security is not 
  practically meaningful at the moment since malicious clients
  are free to provide any input they want (see the paragraph
  on input correctness in \S\ref{s:properties}).
However, it could become important in the future: if there is ever 
  some mechanism that could confirm the validity of clients' inputs (e.g.,~\cite{lycklama23rofl,bell22acorn}), 
  then this additional guarantee would ensure that an honest server always 
  gets valid outputs even if malicious clients exist in the system.

\section{Related Work}\label{s:related}

In this section we discuss alternative approaches to compute
  private sums and the reasons why they do not fit well in the 
  setting of federated learning.
Readers may also be interested in a recent survey of this 
  area~\cite{mansouri23sok}.

\heading{Pairwise masking.}
Bonawitz et al.~\cite{bonawitz17practical} and follow up works~\cite{bell20secure, so21turbo}, of which BBGLR~\cite{bell20secure} is the state-of-the-art, adopt the idea of DC networks~\cite{chaum88dining} in which pairwise masks are
used to hide individuals' inputs.
Such a construction is critical for both client-side and server-side efficiency:
first, since the vectors are long, one-time pad is the most efficient way to encrypt a 
vector; second, the server just needs to add up the vectors,
which achieves optimal server computation (even without privacy, the server at least has to do a similar sum). 
Furthermore, pairwise masking protocols support flexible input 
vectors, i.e., one can choose any $b$ (the number of bits for each 
component in the vector) as desired.
\sys{} improves on this line of work by reducing the overall round trip complexity for multiple sums.

\heading{MPC.}
Works like FastSecAgg~\cite{kadhe20fastsecagg} use a secret-sharing based
MPC to compute sums, which tolerates dropouts, but it has high communication
  as the inputs in federated learning are large.
Other results use non-interactive MPC protocols for addition~\cite{so22lightsecagg, so21turbo} where 
  all the clients establish shares 
  of zero during the setup.
  And then when the vectors of clients are requested,
  each client uses the share to mask the vector and sends it to the server.
However, to mask long vectors, the clients need to establish many shares of zeros, which is communication-expensive. Such 
  shares cannot be reused over multiple summation rounds (which is precisely what we address with \sys). 
Furthermore, the non-interactive protocols are not resilient against even one dropout client.

\heading{Additively homomorphic encryption.}
One can construct a single-server aggregation protocol using threshold additive 
  homomorphic encryption~\cite{elahi14privex,melis16efficient,popa09vpriv,popa11privacy, truex19hybrid, dasu22prov}. 
Each client encrypts its vector as a ciphertext under the public key of the threshold scheme, 
  and sends it to the committee.
The committee adds the ciphertexts from all the clients and gives the result to the server.
However, this does not work well for large inputs (like the large weight vectors found in federated learning)
  because encrypting the vector (say, using Paillier or a lattice-based scheme) and performing the threshold decryption will be very expensive.

A recent work~\cite{stevens22efficient} uses LWE-based homomorphic
  PRGs. 
This is an elegant approach but it has higher computation 
  and communication costs than works based on pairwise masking, including \sys.
The higher cost stems from one having to choose parameters (e.g., 
  vector length and the size of each component) that satisfy the 
  LWE assumption, and particularly the large LWE
  modulus that is required.

\heading{Multi-round setting.}
Recent work~\cite{guo22microfedml} designs a new multi-round 
secure aggregation protocol with reusable secrets that is very
  different from \sys's design. 
The protocol works well for small input domains (e.g., vectors with 
  small values) but cannot efficiently handle large domains as it requires brute forcing a discrete log during decryption.
In contrast, \sys{} does not have any restriction on the input domain.
A variant of the above work can also accommodate arbitrary input domains by 
 relying on DDH-based class groups~\cite{castagnos15linearly} (a more involved assumption than DDH). 
% No implementation of this variant has been realized so far to assess its performance relative to \sys.

\section{Discussion}

We have focused our discussion on computing sums, but \sys{} can also 
  compute other functions such as max/min using affine aggregatable encodings~\cite{beimel14non, corrigan-gibbs17prio, halevi18best, AddankiGJOP22}.
  
\heading{Limitations.}
\sys{} assumes that the set of all clients $(N)$ involved in a
  training session is fixed before the training starts and
  that in each round $t$ some subset $S_t$ from $N$ is chosen.
We have not yet explored the case of handling clients that 
  dynamically join the training session.

Another aspect that we have not investigated in this work is
  that of handling an \emph{adaptive} adversary that can 
  dynamically change the set of parties that it compromises
  as the protocol executes.
In BBGLR, the adversary can be adaptive across rounds but not
  within a round; in \sys{} the adversary is static across all the rounds.
To our knowledge, an adversary that can be adaptive within a single round 
  has not been considered before in the federating learning setting.
It is not clear that existing approaches from other 
  fields~\cite{gentry21yoso} can be used due to different communication models.

Finally, secure aggregation reduces the leakage of
  individuals' inputs in federated learning but does not 
  fully eliminate it.
It is important to understand what information continues to leak.
Two recent works in this direction are as follows.
Elkordy et al.~\cite{elkordy23how} utilize tools from 
  information theory to bound the leakage with secure aggregation:
  they found that the amount of leakage reduces linearly with the
  number of clients.
Wang et al.~\cite{wang23eavesdrop} then show a new 
  inference attack against federated learning systems that use secure aggregation in which they are able to obtain the proportion
  of different labels in the overall training data.
While the scope of this attack is very limited, it may 
  inspire more advanced attacks.

\subsection*{Acknowledgments}
We thank the S\&P reviewers for their comments which improved the content of this paper.
We also thank Fabrice Benhamouda for discussion on elliptic curve scalar multiplication efficiency,
  Varun Chandrasekaran for advice on machine learning datasets,
  Yue Guo for answering questions about the ABIDES simulator,
  and Riad Wahby for pointers on hashing to elliptic curves. 
We thank Mariana Raykova and Adrià Gascón for helpful 
  discussions about the BBGLR paper.
Finally, we thank Dario Pasquini for making us aware of model
  inconsistency attacks in federated learning.
This work was partially funded by NSF grant CNS-2045861, DARPA contract HR0011-17-C0047, and a 
JP Morgan Chase \& Co Faculty Award.

This paper was prepared in part for information purposes by Artificial Intelligence Research Group and the AlgoCRYPT CoE of JPMorgan Chase \& Co and its affiliates (``JP Morgan'') and is not a product of the Research Department of JP Morgan. JP Morgan makes no representation and warranty whatsoever and disclaims all liability, for the completeness, accuracy, or reliability of the information contained herein. This document is not intended as investment research or investment advice, or a recommendation, offer, or solicitation for the purchase or sale of any security, financial instrument, financial product, or service, or to be used in any way for evaluating the merits of participating in any transaction, and shall not constitute a solicitation under any jurisdiction or to any person, if such solicitation under such jurisdiction or to such person would be unlawful.

\frenchspacing

{
\footnotesize
\begin{flushleft}
\balance
\setlength{\parskip}{0pt}
\setlength{\itemsep}{0pt}
\bibliographystyle{abbrv}
\bibliography{conferences,paper}
\end{flushleft}
}

\appendices

\section{Failure and threat model details}\label{app:failure-model}
In this section, we give the full details of the dropout rate and the corruption rate (\S\ref{s:threat}).

\heading{Dropout rate.}
Recall that we upper bound the dropout rate of the sum contributors ($S_t$)
  in one round as $\delta$.
For decryptors, we consider the dropout rate  
  in one summation round
  and assume it is at most $\delta_D$. 
Note that $\delta$ and $\delta_D$ are individually determined by
  the server timeout at those steps (recall that in each round, clients in $S_t$ only participate in
  the first step; the following two steps only involve decryptors).

\heading{Corruption rate.}
For corruption, we denote the corrupted rate in $S_t$ as $\eta_{S_t}$ 
  and the corrupted rate in \coms as $\eta_D$. 
In the \sys{} system, $\eta$ is given; $\eta_{S_t}$ and $\eta_D$ depends on $\eta$. 
Note that the fraction of malicious clients in a chosen
  subset of $[N]$ (e.g., $S_t$, $\comset$) may not be exactly $\eta$, but rather a random variable $\eta^*$
  from a distribution that is parameterized by
  $\eta$, $N$ and the size of the chosen set.
Since the expectation of $\eta^*$ is equal to $\eta$,
  and when the size of the chosen set is large (e.g., $S_t$),
  the deviation of $\eta^*$ from $\eta$ is negligible (i.e., $\eta^*$ is almost
  equal to $\eta$). 
Therefore, $\eta_{S_t}$ can be considered as equal to $\eta$.
On the other hand, since $\mathcal{D}$ is a small set, we cannot
  assume $\eta_D$ is equal to $\eta$.
Later in Appendix~\ref{app:param} we show how to choose $L$ to ensure $\eta_D$ 
  satisfies the inequality required in Theorem~\ref{thm:security-main} with overwhelming probability.

\heading{Security parameters.}
In the following definitions and proofs, we use $\kappa$ for the information-theoretic security parameter and $\lambda$ for the computational security parameter.

\section{Full Protocol Description}\label{app:full-mal}

\begin{figure}
\begin{tcolorbox}[enhanced, boxsep=1mm, left= 0mm, right=0.5mm, title={\textbf{\footnotesize Protocol $\Pi_{\text{setup}}$.}}]
\linespread{1.4}
\footnotesize
\textbf{Parties.} Clients $1, \ldots, N$ and a server.

\textbf{Parameters.} Number of pre-selected decryptors $L$. Let $L=3\ell+1$.

\textbf{Protocol outputs.} A set of $t$ clients ($2\ell+1 \le t \le 3\ell+1$) hold secret sharing of a secret key $SK$. All the clients in $[N]$ and the server hold the associated public key $PK$. 

\begin{myitemize2}
    \item The server and all the clients in $ [N]$ invoke $\mathcal{F}_{\text{rand}}$ and receive a binary string $v \xleftarrow{\$}\{0,1\}^\lambda$.
    \item The server and all the clients in $ [N]$ computes
    
    $\comset_0 \leftarrow \ChooseSet(v, 0, L, N)$.

    \item All the clients $u\in \mathcal{D}_0$ and the server run $\Pi_{\text{DKG}}$ (Fig.~\ref{fig:dkg-dropout}). 

    \item The server broadcasts the signed $PK$s received from the clients in $\mathcal{D}_0$ to all the clients in $[N]$. 
    
    \item A client in $[N]$ aborts if it received less than $2\ell+1$ valid signatures on $PK$s signed by the parties defined by $\ChooseSet(v, 0, L, N)$.

\end{myitemize2}

\end{tcolorbox}
\caption{Setup phase with total number of clients $N$. 
$\Frand$ is modeled as a random beacon service.} 
\label{fig:protocol-setup}
\end{figure}

\subsection{Definition of cryptographic primitives}\label{app:primitives}
In this section, we formally define the cryptographic primitives used in \sys{} protocol
   that are not given in Section~\ref{s:blocks}.

\begin{definition}[DDH assumption]\label{def:ddh}
    Given a cyclic group $\mathbb{G}$ with order $q$, and let the generator of $\mathbb{G}$ be $g$. 
    Let $a, b, c$ be uniformly sampled elements from $\mathbb{Z}_q$. 
    We say that DDH is hard if the two distributions $(g^a, g^b, g^{ab})$ and $(g^a, g^b, g^c)$ are computationally indistinguishable. 
\end{definition}

\begin{definition}[ElGamal encryption]\label{def:elgamal}
    Let $\mathbb{G}$ be a group of order $q$ in which DDH is hard. 
    ElGamal encryption scheme consists of the following three algorithms. 
    \begin{myitemize2}
        \item  $\AsymGen(1^\lambda) \rightarrow (SK, PK)$: sample a random element $s$ from $\mathbb{Z}_q$, and output $SK=s$ and $PK=g^s$. 
        \item $\AsymEnc (PK, h) \rightarrow (c_0, c_1)$: sample a random element $y$ from $\mathbb{Z}_q$ and compute $c_0 = g^y$ and $c_1 = h \cdot PK^y$. 
        \item $\AsymDec (SK, (c_0, c_1)) \rightarrow h$: compute $h=(c_0^{SK})^{-1} \cdot c_1$. 
    \end{myitemize2}
    We say that ElGamal encryption is secure if it has CPA security. 
    Note that if DDH assumption (Def.\ref{def:ddh}) holds, then ElGamal encryption is secure. 
\end{definition}

\begin{definition}[Authenticated encryption]\label{def:authenc}
    An authenticated encryption scheme consists of the following algorithms:
    \begin{myitemize2}
        \item $\SymGen(1^\lambda) \rightarrow k$: sample a key $k$ uniformly random from $\{0,1\}^\lambda$. 
        \item $\SymEnc(k, m) \rightarrow c$: take in a key $k$ and a message $m$, output a ciphertext $c$. 
        \item $\SymDec(k, c)$: take in a key $k$ and a ciphertext $c$, output a plaintext $m$ or $\perp$ (decryption fails). 
    \end{myitemize2}
    We say that the scheme is secure if it has CPA security and ciphertext integrity. 
\end{definition}

For simplicity, we use $\AsymEnc$ and $\SymEnc$ to refer to the encryption schemes.

\begin{definition}[Signature scheme]\label{deef:sig}
    A signature scheme consists of the following algorithms:
    \begin{myitemize2}
        \item $SGen(1^\lambda) \rightarrow (sk, pk)$: generate a pair of siging key $sk$ and verfication key $pk$. 
        \item $Sign(sk, m) \rightarrow \sigma$: take in a signing key $sk$ and message $m$, outputs a signature $\sigma$. 
        \item $VerSig(pk, m, \sigma) \rightarrow b$: take in a verification key $pk$, a messagee $m$ and a signature $\sigma$,
        output valid or not as $b = 1, 0$. 
    \end{myitemize2}
    We say that the signature scheme is secure if the probability that, given $m_1, \ldots, m_z$, an attacker who can query the
    signing challenger and finds a valid $(m', \sigma')$ where $m' \not\in \{m_1, \ldots, m_z\}$ is negligible. 
\end{definition}

\iffalse
\heading{Setup of PKI.}
Our protocol assumes the existence of a PKI. 
Below we clarify what the PKI needs to store.
  \begin{myitemize2}
      \item $g^{a_i}$ of client $i$ for its secret $a_i$; this is for client $j$ to derive
        the pairwise secret $r_{i,j}$ with client $i$ by computing $(g^{a_j})^{a_i}$. 
      \item $g^{b_i}$ of client $i$ for deriving symmetric encryption key for $\SymEnc$; this is for encryption of messages when a client sends messages to another client via the server. 
      Later when we say client $i$ sends a message to client $j$ via the server in the protocol,
      we implicitly assume the messages are encrypted using the symmetric keys.
      
      We denote the symmetric encryption key between client $i$ and $j$ as $k_{i,j}$. 
      \item $pk_i$ of client $i$ for verifying $i$'s signature on messages signed by $sk_i$. 
  \end{myitemize2}
\fi

\subsection{Setup phase and distributed key generation}\label{app:dkg}
The setup protocol is conceptually simple, as shown in Figure~\ref{fig:protocol-setup}.
A crucial part of the setup phase is the distributed key generation (DKG).
We first describe the algorithms used in DKG. 

\heading{Algorithms.}
Let $\mathbb{G}$ be a group with order $q$ in which discrete log is hard. 
The discrete-log based DKG protocol builds on Feldman verifiable secret sharing~\cite{feldman87practical},
which we provide below.
The sharing algorithm takes in the threshold parameters
$L, \ell$, and a secret $s \in \mathbb{Z}_q$, 
chooses a polynomial with random coefficients except the constant term, i.e.,
$p(X) = a_0 + a_1 X + \ldots + a_\ell X^\ell ( a_0 = s)$,
and outputs the commitments
$ A_{k} = g^{a_{k}} \in \mathbb{G} $ for $k=0,1,\ldots, \ell.$
The $j$-th share $s_j$ is $ p(j)$ for $j=1, \ldots, L$. 

To verify the $j$-th share against the commitments,
the verification algorithm takes in $s_j$ and a set of commitments $\{A_k\}_{k=0}^\ell$,
  and checks if
\[g^{s_j} = \prod_{k=0}^\ell (A_{k})^{j^k}.\]

We define the above algorithms as 
\begin{itemize}
    \item $FShare(s, \ell, L) \rightarrow \{s_{j}\}_{j=1}^L, \{A_k\}_{k=0}^\ell$,  
    \item $FVerify( s_j, \{A_k\}_{k=0}^\ell ) \rightarrow b$ where $b\in\{0,1\}$. 
\end{itemize}

The GJKR-DKG uses a variant of the above algorithm, $PShare$ and $PVerify$ based on Pedersen
  commitment, for security reason~\cite{gennaro06secure}. 
The $PShare$ algorithm chooses two random polynomials 
\[p(X) = a_0 + a_1 X + \ldots + a_\ell X^\ell, \quad a_0 = s\]
\[p'(X) = b_0 + b_1 X + \ldots + b_\ell X^\ell\]
and outputs 
\[\{p(j)\}_{j=1}^L, \{p'(j)\}_{j=1}^L, C_k := g^{a_{k}} h^{b_{k}} \text{ for } k=0,\ldots, \ell,\]
  where $g, h\in \mathbb{G}$. 

To verify against the share $s_j = p(j)$, $PVerify$ takes in
$s_j' = p'(j)$ and $\{C_k\}_{k=0}^\ell$, and checks if 
\[g^{s_j} h^{s_j'} = \prod_{k=0}^\ell (C_{k})^{j^k} . \] 

The algorithms $PShare$ and $PVerify$ can be defined analogously to $Fshare$ and $FVerify$:
\begin{itemize}
    \item $PShare(s, \ell, L) \rightarrow \{s_{j}\}_{j=1}^L, \{s_j'\}_{j=1}^L, \{C_k\}_{k=0}^\ell$,  
    \item $PVerify( s_j, s_j', \{C_k\}_{k=0}^\ell ) \rightarrow b$ where $b\in\{0,1\}$. 
\end{itemize}

\heading{Protocol.}
We give the modified DKG protocol $\Pi_{\text{DKG}}$ from GJKR-DKG in Figure~\ref{fig:dkg-dropout}. 
The participating parties can drop out, as long as $\eta_D + \delta_D < 1/3$.

\heading{Correctness and security.}
\ifthenelse{\boolean{longver}}
{We analyze the properties of $\Pi_{\text{DKG}}$ in this section.
We start by revisiting the correctness and security definitions of GJKR-DKG,
  and then discuss how our definition differs from theirs because of
  a weakening of the communication model.
In GJKR-DKG, correctness has three folds:
\begin{myenumerate2}
    \item All subsets of honest parties define the same unique secret key.
    \item All honest parties have the same public key.
    \item The secret key is uniformly random.
\end{myenumerate2}
Security means that no information about the secret key  
  can be learned by the adversary except for what
  is implied by the public key.
}
{}
For $\Pi_{\text{DKG}}$,
if the server is honest, then our communication model (\S\ref{s:threat}) is equivalent to having
a fully synchronous channel, hence in this case the correctness and security 
properties in the prior work hold.
When the server is malicious, we show that $\Pi_{\text{DKG}}$ satisfies the following 
  correctness (C1, C2, C3, C4) and security (S). 
\ifthenelse{\boolean{longver}}
{}
{The proof of Lemma~\ref{lemma:dkg} is given in the full version~\cite{ma23flamingo}.}
\begin{itemize}
    \item[C1.] Each honest party either has no secret at the end
               or agrees on the same QUAL with other honest parties.
    \item[C2.] The agreed QUAL sets defines a unique secret key.
    \item[C3.] The secret key defined by QUAL is uniformly random. 
    \item[C4.] Each honest party, either has no public key, or outputs the same public key with other honest parties. 
    \item[S.] Malicious parties learns no information about the secret key except for what is implied by the public key.
\end{itemize}

\begin{lemma}\label{lemma:dkg}
Let the participants in DKG be $L$ parties and a server. 
If $\delta_D + \eta_D < 1/3$, then under the communication model defined in Section~\ref{s:comm-model}, protocol $\Pi_{\text{DKG}}$ (Fig.~\ref{fig:dkg-dropout}) has properties C1, C2, C3, C4 and S in the presence of a malicious adversary controlling the server and up to $\eta_D$ fraction of the parties.
\end{lemma}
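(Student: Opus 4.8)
The plan is to prove Lemma~\ref{lemma:dkg} by reduction to the correctness and secrecy of GJKR-DKG~\cite{gennaro06secure} in its native synchronous-broadcast model, isolating the two features of our star-topology model that break that model: the server, being the sole relay, can (i) drop, delay, or replay messages, and (ii) \emph{equivocate}, i.e., hand different honest parties different (but individually well-formed) sets of relayed messages. First I would record the easy direction: when the server is honest it implements a faithful synchronous broadcast/point-to-point channel, so C1--C4 and S hold verbatim by the analysis of~\cite{gennaro06secure}; the rest of the argument assumes a malicious server together with up to an $\eta_D$ fraction of malicious parties and up to a $\delta_D$ fraction of dropouts. Since $L=3\ell+1$ and $\delta_D+\eta_D<1/3$, at most $\ell$ parties are malicious or offline, so at least $2\ell+1$ honest parties remain online throughout the protocol.

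The first step is to neutralize server tampering. Because every protocol message is signed and bound to the current DKG session identifier, existential unforgeability of the signature scheme implies that, except with negligible probability, any message an honest party accepts as coming from party $k$ was in fact produced by party $k$ in this session---ruling out forgery, cross-session reuse, and replay. Hence the server's only residual power over honest parties is to deliver a subset of the genuinely sent, genuinely signed messages in an order of its choice; together with the protocol's abort-on-missing-message rule, the only remaining threat is equivocation, whereby two honest parties see different complaints, complaint-responses, or round-$1$ messages and therefore compute different \emph{local} $\QUAL$ sets. I would then use the invariant that follows from the GJKR disqualification rules: honest parties never complain against one another and always rebut complaints correctly, and a party is disqualified only for receiving more than $\ell$ complaints or for returning a bad rebuttal; since at most $\ell$ parties are corrupt, no honest party is ever disqualified in any honest party's view, so local $\QUAL$ sets of honest parties can disagree only on the membership of corrupt parties (and of honest parties whose round-$1$ message the server withheld from some honest parties).

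The core of the argument is C1, obtained from the added $\QUAL$-agreement round by quorum intersection. If honest party $i$ does not abort, it holds $\ge 2\ell+1$ session-bound valid signatures on a single set $Q$, i.e., $\ge 2\ell+1$ distinct parties signed $Q$; similarly a non-aborting honest party $j$ saw $\ge 2\ell+1$ signatures on some $Q'$. Any two $(2\ell+1)$-subsets of the $3\ell+1$ parties meet in at least $\ell+1$ parties, and since at most $\ell$ are corrupt this intersection contains an honest party, who signs exactly one $\QUAL$ set; hence $Q=Q'$, proving C1. Because an honest party's local $\QUAL$ always contains itself, each of the $\ge \ell+1$ honest signers of the agreed $\QUAL$ lies in it, so the agreed $\QUAL$ contains at least $\ell+1$ honest parties. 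C2 then follows as in GJKR: for each member $k$ of the agreed $\QUAL$, at least $\ell+1$ honest parties hold shares passing verification against $k$'s signed (hence common) Pedersen commitments, so $k$'s contribution $z_k$ is a well-defined field element and $SK=\sum_{k\in\QUAL} z_k$ is uniquely determined. C3 is the GJKR uniformity argument applied to those $\ge\ell+1$ honest $\QUAL$-members, whose contributions remain hidden (Pedersen hiding) until the adversary has committed to the corrupt members' contributions. C4 follows from C1 plus agreement on the signed per-member public values $g^{z_k}$: a corrupt party $k$ that signed two conflicting such values in one session produces two validly signed conflicting messages, which honest parties treat as an abort, so all non-aborting honest parties reconstruct the same $PK=\prod_{k\in\QUAL} g^{z_k}$. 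Finally, S is the standard ``signed-compiler'' reduction: a simulator runs the GJKR broadcast-model simulator, produces honest parties' signatures itself (forging none), and translates the malicious server's delivery schedule into drop/abort events; indistinguishability reduces to signature unforgeability and to DDH (via the hiding of Pedersen commitments and the semantic security used inside GJKR).

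I expect the main obstacle to be making precise which messages honest parties must receive ``consistently or else abort'' so that they provably agree not merely on the $\QUAL$ index set but on all public data attached to its members---the commitments and the exposed $g^{z_k}$ values---given that a corrupt member can sign mutually inconsistent copies of its own messages; the clean fix is to fold a digest of the full round-$1$/exposure transcript into the object signed in the $\QUAL$-agreement round (or to have honest parties abort on any two validly signed conflicting messages from one party), and this is exactly the place where the $2\ell+1$ threshold and the $L=3\ell+1$ choice are used.
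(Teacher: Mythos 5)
Your proposal is correct and follows essentially the same route as the paper: the heart of both is the counting/quorum-intersection argument on the $2\ell+1$ signed QUAL sets (any two such quorums of the $3\ell+1$ parties share an honest signer, who signs only one set), with C2--C3 following from the agreed QUAL containing enough honest parties and C4/S deferred to the GJKR analysis augmented by the signed-agreement step. Your extra care about signature binding and equivocation on the public values $g^{z_k}$ is a reasonable elaboration of what the paper compresses into ``C4 follows exactly from GJKR,'' but it does not change the structure of the argument.
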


\ifthenelse{\boolean{longver}}
{
\heading{Proof.}
Since $L=3\ell+1$, and by $\delta_D+\eta_D< 1/3$, at most $\ell$ are malicious 
(the dropouts can be treated as malicious parties who do not send the prescribed messages).
We first show under which cases the honest parties will have no share.
Note that the parties that are excluded from $\mathcal{D}_2$ are those
  who are honest but did not receive a complaint against a malicious 
  party who performed wrong sharing; 
  and parties that are excluded from $\mathcal{D}_3$ are those 
  who have complained at $i$ in step (b) but did not receive shares from $i$ at this step.
If the server drops messages (sent from one online honest party to another) in the above two cases,
  then in the cross-check step, the honest parties will not receive more than
  $2\ell+1$ QUALs, and hence will abort.
In this case, honest parties in the end has no share.

Now we prove C1 by contradiction.
Suppose there are two honest parties $P_1$ and $P_2$ at the end of the protocol who holds secrets (not abort) and they have different QUAL. 
Then this means $P_1$ received at least $2\ell + 1$ same QUAL sets $S_1$ and $P_2$ received at least same $2\ell+1$ QUAL sets $S_2$.
W.l.o.g., assume that there are $\ell-v$ malicious parties ($v\ge 0$). 
In the $2\ell+1$ sets $S_1$, at least $\ell+1+v$ of them are from honest parties.
Similarly, for the $2\ell+1$ sets $S_2$, at least $\ell+1+v$ are from other honest parties different than above (since an honest party cannot send both $S_1$ and $S_2$). 
However, note that we have in total $2\ell+1+v$ honest parties, which derives a contradiction.

Recall that at most $\ell$ parties are malicious, so the QUAL set has at least $\ell+1$ parties,
  and since we have C1, now C2 is guaranteed. 
Moreover, since QUAL contains at least one honest party, the secret key is uniform (C3). 
C4 follows exactly from the work GJKR.
The proof for S is the same as GKJR, except that the simulator additionally simulates the agreement protocol in Lemma~\ref{lemma:agreement-malicious}.

\heading{Remark.}
An important difference between $\Pi_{\text{DKG}}$ and standard DKG protocols is that 
we allow aborts and allow honest parties to not have shares at the end of the protocol.
When some honest parties do not have shares of the secret key, the server is still able to 
get sum (decryption still works) because malicious parties hold the shares. 
}
{}

\begin{figure*}[h!]
\begin{tcolorbox}[enhanced, boxsep=1mm, left= 0mm, right=0.5mm, title={\textbf{\footnotesize Protocol $\Pi_{\text{DKG}}$ based on discrete log}}]
\linespread{1.4}
\footnotesize
\textbf{Parameters.} A set of $L$ parties (denoted as $\mathcal{D}_0$), threshold $\ell$ where $3\ell+1= L$. $\delta_D + \eta_D < 1/3$. 

\textbf{Protocol outputs.} A subset of the $L$ parties hold secret sharing of a secret key $s \in \mathbb{Z}_q$; the server holds the public key $g^s$ signed by all the clients.

\textbf{Notes.} The parties have access to PKI (Section~\ref{s:detail:pki}). All messages sent from one party to another via the server are signed and end-to-end encrypted. 

    \begin{myenumerate2}[1.]
    \item Each party $i$ performs verifiable secret sharing (VSS) as a dealer:
        \begin{myenumerate2}
            \item \emph{Share}:
            
            Party $i\in \mathcal{D}_0$ randomly chooses $s_i \in \mathbb{Z}_q$, computes 
            $\{s_{i,j}\}_{j=1}^L, \{s_{i,j}'\}_{j=1}^L, \{C_{i,k}\}_{k=0}^\ell  \leftarrow PShare(s_i, \ell, L)$.
            
            It also computes $\{A_k\}_{k=0}^\ell$ from $FShare(s, \ell, L)$ and stores it locally. 
            
            Send $s_{i, j}$ and $s_{i, j}'$ to each party $j$, and $\{C_{i,k}\}_{i=0}^\ell$ to all parties $j\in\mathcal{D}_0$ via the server.
            
            \Comment{Denote the set of parties who received all the prescribed messages after this step as $\mathcal{D}_{1}$.}
            
            \item \emph{Verify and complain}:
            
            Each party $j\in \mathcal{D}_1$ checks whether it received at least $(1-\delta_D)L$ valid signed shares. If not, abort; otherwise continue. 
            
            Each party $j\in \mathcal{D}_1$, for each received share $s_{i, j}$, runs $b \leftarrow PVerify(j, s_{i,j}, s_{i,j}', \{C_{i,k}\}_{k=0}^\ell)$.
            
            If $b$ is 1, then party $i$ does nothing; otherwise party $i$ sends (\texttt{complaints}, $j$) to all the parties
            in $\mathcal{D}_0$ via the server.
            
            \Comment{Denote the set of parties who received all the prescribed messages after this step as $\mathcal{D}_{2}$.}

            \item \emph{Against complaint}: 
        
            Each party $i \in \mathcal{D}_2$, who as a dealer, if received a valid signed (\texttt{complaint}, $i$) from $j$, sends 
            $s_{i,j}, s_{i,j}'$ to all parties in $\mathcal{D}_0$ via the server.
            
            \Comment{Denote the set of parties who received all the prescribed messages after this step as $\mathcal{D}_{3}$.}

            \item \emph{Disqualify}:
              
            Each party $i \in \mathcal{D}_3$ marks any party $j$ as disqualified if 
            it received more than $2\ell+1$ valid signed 
            (\texttt{complaints}, $j$),
            or party $j$ answers with $s_{j,i}, s_{j,i}'$ such that 
            $PVerify(s_{j, i}, s_{j, i}', \{C_{j,k}\}_{k=1}^\ell)$ outputs 0.
            The non-disqualified parties form a set QUAL.
            
            Each party $i\in \mathcal{D}_3$ signs the QUAL set and sends to all parties in $\mathcal{D}_0$ to the server. 
            The server, on receiving a valid signed QUAL, signs and sends it to all parties in $\mathcal{D}_3$. 
            
            \item \emph{Cross-check QUAL}: 
            
            Each party $i\in \mathcal{D}_3$ checks whether it receives at least $2\ell+1$
            valid signed QUAL, if so, they sum up the shares in QUAL and derive a share of secret key. 
            If not, abort.

        \end{myenumerate2}
   
    \item Compute public key:
    
        \begin{myenumerate2}
            \item Each party $i \in \QUAL $ sends $ \{A_{i,k}\}_{k=1}^\ell$ to all parties via the server.
            
            \item Each party $i$ runs $b' \leftarrow FVerify(s_{j,i}, \{A_{j,k}\}_{k=1}^\ell)$ for $j \in \QUAL$.
            If $b'$ is 0, then party $i$ sends to all the parties in $\mathcal{D}_3 \cap \QUAL$ via the server
            a message
            (\texttt{complaint}, $j, s_{j,i}, s_{j,i}'$) 
            for those $(s_{j,i}, s_{j,i}')$ such that $b$ is 1 and $b'$ is 0. 
            
            \Comment{For $b=1$ and $b'=0$: The check in step 1.d) passes but fails this step}
            
            \item For each $j$ such that (\texttt{complaint}, $j, s_{j,i}, s_{j,i}'$) is valid, parties reconstruct $s_j$.
                For all parties in $\QUAL$, set $y_i=g^{s_i}$, and compute $PK=\prod_{i\in QUAL} y_i$.
                Parties in $\QUAL$ sign $PK$ using their own signing keys and send the signed $PK$ to the server. 
        \end{myenumerate2}

    \end{myenumerate2}
\end{tcolorbox}
\caption{Protocol $\Pi_{\DKG}$. 
} 
\label{fig:dkg-dropout}
\end{figure*}

\subsection{Collection phase}
The detailed protocol for each round in the collection phase is described in Figure~\ref{fig:collection}.
At the beginning of round $t$, the server notifies the clients who should be involved,
  namely $S_t$.
A client who gets a notification can download public keys
  of its neighbors $A_t(i)$ from PKI server (the server should tell clients how to map client
IDs to the names in PKI).
To reduce the overall run time, clients can pre-fetch public keys used in the coming rounds.

\begin{figure*}

\begin{tcolorbox}[enhanced, boxsep=1mm, left= 0mm, right=0.5mm, title={\textbf{\footnotesize Collection phase: $\Pi_{\text{sum}}$ for round $t$ out of $T$ total rounds}}]
\linespread{1.4}
\footnotesize
% The public keys can be pre-fetched and $sk_{ij}$ and $r_{ij}$ can be pre-computed, if the server does the following step 1 in advance.
% The system has a public key $PK$ and a secret key $SK$ which is shared among $C$ committee members (clients).
Initial state from setup phase:
each client $i\in [N]$ holds a value $v$ and public key $PK=g^s$ where $SK=s$;
each decryptor $u \in \mathcal{D}$ additionally holds a Shamir share of $SK$ (threshold $\ell$ with $3\ell+1 = L$).
We require $ \delta_D + \eta_D < 1/3$.  

Steps for round $t$: 
\begin{enumerate}[1.]
    \item \textbf{Report step.}
    
    \textbf{Server performs the following:} 
    
    \quad Compute a set $\mathcal{Q}_{graph} \leftarrow \ChooseSet(v, t, n_t, N)$ and a graph $G_t \leftarrow \GenGraph(v, t, \mathcal{Q}_{graph})$; store $\{A_t(i)\}_{i \in \mathcal{Q}_{graph}}$ computed from $G_t$.
    
    \quad Notify each client $i\in \mathcal{Q}_{graph}$ that collection round $t$ begins.
    
    \textbf{Each client $i\in \mathcal{Q}_{graph}$ performs the following:} 
    
    \quad Compute $\mathcal{Q}^{local}_{graph} \leftarrow \ChooseSet(v, t, n_t, N)$, and if $i \not\in \mathcal{Q}^{local}_{graph}$, ignore this round.  

    \quad Read from PKI $g^{a_j}$ for $j\in A_t(i)$, and compute $r_{i, j}$ by computing $(g^{a_j})^{a_i}$ and mapping it to $\{0,1\}^\lambda$. 
    
    \quad Sample $m_{i,t} \xleftarrow{\$} \{0,1\}^{\lambda}$ and compute $\{h_{i, j, t}\}_{j\in A(i)} \leftarrow \PRF(r_{ij}, t)$ for $j\in A_t(i)$.

    \quad Send to server a message $msg_{i,t}$ consisting of 

    \qquad \qquad $Vec_{i,t} = \vec{x}_{i,t} + \PRG(m_{i,t}) + \sum_{j\in A_t(i)} \pm \PRG(h_{i,j,t}), \quad
        \SymEnc(k_{i,u}, m_{i,u,t} \| t), \text{ for } u\in \comset, \quad
        \AsymEnc(PK, h_{i,j,t})$ for $j \in A_t(i)$  

    \qquad \qquad where $m_{i,u,t} \leftarrow Share(m_{i,t},\ell, L)$,
    $A_t(i) \leftarrow \FindNeighbors(v, S_t, i)$, 
    
    \qquad \qquad and $\AsymEnc$ (ElGamal) and $\SymEnc$ (authenticated encryption) are defined in Appendix~\ref{app:primitives}. 
    
    \qquad along with the signatures $\sigma_{i, j, t} \leftarrow Sign(sk_{i}, c_{i,j,t} \| t)$ 
      for all ciphertext $c_{i,j,t} =\AsymEnc(PK, h_{i,j,t}) \ \forall j\in A_t(i)$.

    \item \textbf{Cross check step.}
    
    \textbf{Server performs the following:} 
    
    \quad Denote the set of clients that respond within timeout as $\mathcal{Q}_{vec}$. 
    
    \quad Compute partial sum $\Tilde{z_t} = \sum_{i\in \mathcal{Q}_{vec}} Vec_{i,t}$.
    
    \quad Build decryption request $req$ ($req$ consists of clients in $S_t$ to be labeled):
    
    \quad Initialize an empty set $\mathcal{E}_i$ for each $i\in \mathcal{Q}_{graph}$, and
        
    \quad \quad if $i\in \mathcal{Q}_{vec}$, label $i$ with ``online'', 
    
    \quad \quad \quad and add $\SymEnc(k_{i,u}, m_{i,u,t} \| t)$ to $\mathcal{E}_i$, where $k_{i,j}$ is derived from PKI (Appendix~\ref{app:primitives});
    
    \quad \quad else label $i$ with ``offline'',
    
    \quad \quad \quad and add $\{ (\AsymEnc(PK, h_{i,j,t}), \sigma_{i, j, t}\}_{j\in A_t(i) \cap \mathcal{Q}_{vec})}$ to $\mathcal{E}_i$.
    
    \quad Send to each $u\in \mathcal{D}$ the request $req$ and $\mathcal{E}_i$ of all clients $i\in \mathcal{Q}_{graph}$.

    \textbf{Each \com $u\in \mathcal{D}$ performs the following:} 
    
    \quad Upon receiving a request $req$, compute $\sigma_u^* \leftarrow Sign(sk_u, req\| t)$,
    and send $(req, \sigma_u^*)$ to all other decryptors via the server.

    \item \textbf{Reconstruction step.}
    
    \textbf{Each \com $u\in \mathcal{D}$ performs the following:} 
    
    \quad Ignore messages with signatures ($\sigma_{i,j,t}$ or $\sigma_u^*$) with round number other than $t$.
    
    \quad Upon receiving a message $(req, \sigma_{u'}^*)$, run $b\leftarrow VerSig(pk_i, req, \sigma_{u'}^*)$. 
    Ignore the message if $b= 0$. 
    
    \quad Continue only if $u$ received $2\ell+1$ or more same $req$ messages that were not ignored. Denote such message as $req^*$. 
    
    \quad For $req^*$, continue only if 
    
    \quad \quad each client $i \in S_t$ is either labeled as ``online'' or ``offline'';
    
    \quad \quad the number of ``online'' clients is at least $(1-\delta) n_t$;
    
    \quad \quad all the ``online'' clients are connected in the graph;

    \quad \quad each online client $i$ has at least $k$ online neighbors such that $\eta^k < 2^{-\kappa}$.

    \quad For each $i \in \mathcal{Q}_{graph}$, 

    \quad \quad For each $\SymEnc(k_{i,u}, m_{i,u,t} \| t)$ in $\mathcal{E}_i$, use $k_{i,u}$ (derived from PKI) to decrypt; send $m_{i,u,t}$ to the server if the decryption succeeds;
    
    \quad \quad For each $(\AsymEnc(PK, h_{i,j,t}), \sigma_{i,j,t}) \in \mathcal{E}_i$, parse as $((c_0, c_1), \sigma)$ and
    send $c_0^{s_u}$ to the server if $VerSig((c_0, c_1), \sigma)$ outputs 1;

    \textbf{Server completes the sum:} 
    
    \quad Denote the set of decryptors whose messages have been received as $U$. Compute
    a set of interpolation coefficients $\{\beta_u\}_{u\in U}$ from $U$. 
    
    \quad For each $i \in \mathcal{Q}_{graph}$,
           reconstruct the mask $m_{i,t}$ or $\{h_{i,j,t}\}_{j\in A_t(i) \cap \mathcal{Q}_{vec}}$:

    \qquad For each parsed $(c_0, c_1)$ meant for $h_{i,j,t}$ in $\mathcal{E}_i$, 
           compute $h_{i,j,t}$ as $c_1\cdot (\prod_{u\in U} (c_0^{s_u})^{\beta_u})^{-1} $;

    \qquad For each set of received shares $\{m_{i,u,t}\}_{u\in U}$, compute $m_{i,t}$ as $Recon(\{ m_{i,u,t}\}_{u\in U})$. 

    \quad Output $z_t = \Tilde{z_t} - \PRG(m_{i, t}) + \sum_{j\in A_t(i) \cap \mathcal{Q}_{vec}} \pm \PRG(h_{i,j,t})$.

\end{enumerate}
\end{tcolorbox}
\caption{Collection protocol $\Pi_{\text{sum}}$.
}\label{fig:collection}
\end{figure*}

\subsection{Transfer shares}\label{app:transfer}

Every $R$ rounds, the current set of decryptors $\comset$ transfer shares of $SK$
  to a new set of decryptors, $\comset_{new}$.
To do so, each $u\in \comset$ computes a destination \coms set $\comset_{new}$ for round $t$,
   by $\ChooseSet(v, \ceil{t/R}, L, N)$. 
Assume now each decryptor $u \in \comset$ holds share $s_u$ of $SK$ (i.e., there is a polynomial $p$ such that
  $p(u) = s_u$ and $p(0) = SK$).
To transfer its share, 
  each $u\in \comset$ acts as a VSS dealer exactly the same as the first part in $\Pi_{\text{DKG}}$
  (Fig.\ref{fig:dkg}) to share $s_u$ to new \com $j\in \comset_{new}$.
In detail, $u$ chooses a polynomial $p^*_u$ of degree $\ell$ and sets $p^*_u(0)= s_u$ and all other 
  coefficients of $p^*_u$ to be random.
Then, $u$ sends $p^*_u(j)$ to each new \com $j \in \comset_{new}$.

Each new \com $j\in \comset_{new}$ receives the evaluation of the polynomials at point $j$ (i.e., $p^*_u(j)$ for all $u \in \comset$).
The new share of $SK$ held by $j$, $s_j'$, is defined to be a linear combination of the received shares:
$s_j' := \sum_{u \in \comset}  \beta_u \cdot p^*_u(j)$, where the combination coefficients $\{\beta_u\}_{u \in \comset}$ are constants 
  (given the set $\mathcal{D}$, we can compute $\{\beta_u\}_{u \in \comset}$).
Note that the same issue about communication model for DKG also exists here, 
  but the same relaxation applies.

Here we require that $\eta_D + \delta_D < 1/3$ for both $\mathcal{D}$ and $\mathcal{D}_{new}$. 
As a result, each client $j$ in a subset $\mathcal{D} \subseteq \comset_{new}$ holds
            a share $s_{u, j}$.
For each receiving \com $j \in \comset$, it computes
  $s_j' = \sum_{u \in \comset_{old}} \beta_u \cdot s_{u, j}$, 
  where each $\beta_u$ is some fixed constant.

\ifthenelse{\boolean{longver}}
{Since the sharing part is exactly the same as $\Pi_{\text{DKG}}$ and the share combination happens locally,
  the same correctness and security argument of DKG applies. 
Specifically, for correctness, each honest party either has no secret at the end or
agrees on the same QUAL with other honest parties; and the QUAL defines the unique same secret key
before resharing.
For security, a malicious adversary will not learn any information, but can cause the protocol aborts.}
{}

\section{Requirements on Parameters}\label{app:param}

\ifthenelse{\boolean{longver}}
{
\subsection{The number of decryptors}\label{app:nbr-dec}
In this section, we show how to choose $L$ such that, given $N, \eta, \delta_D$, 
  we can guarantee less than $1/3$ of the $L$ chosen decryptors are malicious.
Note that $\delta_D$ is given because this can be controlled by the server, i.e.,
  the server can decide how long to wait for the decryptors to respond. 
On the other hand, $\eta_D$ is a random variable.
  
To guarantee $2\delta_D + \eta_D < 1/3$ (Theorem~\ref{thm:security-main}),
  a necessary condition is that $\eta < 1/3 - 2\delta_D $.
This can be formalized as a probability question:
  given $N$ clients where $\eta$ fraction of them are malicious, 
  randomly sample $L$ clients (decryptors) from them; 
  the number of malicious clients $X$ in the decryptors should follow
  the tail bound of hypergeometric distribution~\cite[Section 2]{bell20secure}:
\[\Pr[X  \ge (\eta + (1/3 - 2\delta_D -\eta ))L  ] \le   e^{-2L(1/3 - \eta- 2\delta_D)^2},\]

For $\eta$ and $\delta_D$ both being 1\%, the choice of $L=60$ (which we used for 
  benchmarks in \S\ref{s:perf}) gives $1.6\times 10^{-5}$ probability.
If we double $L$ to be 120, then this guarantees $2.6 \times 10^{-10}$ probability.
}
{
\heading{The number of decryptors.}
The full version~\cite{ma23flamingo} gives detailed analysis; but briefly, for $\eta$ and $\delta_D$ both being 1\%, 
the choice of $L=60$ (which we used for benchmarks in \S\ref{s:perf}) gives $1.6\times 10^{-5}$ probability 
(that more than $1/3$ selected decryptors are malicious) 
and $L=120$ gives $2.6\times 10^{-10}$ probability.
}

\ifthenelse{\boolean{longver}}
{
\subsection{Proof of Lemma~\ref{lemma:graph-connectivity}}
The algorithm in Figure~\ref{fig:gengraph} gives a random graph 
$G(n, \epsilon)$.
A known result in random graphs is, when the edge probability $\epsilon > \frac{(1+\omega) \ln n}{n}$,
  where $\omega$ is an arbitrary positive value, the graph is almost surely connected when $n$ is large.
Note that in BBGLR, they also use this observation to build the graph that has significant less number of neighbors
  than a prior work by Bonawitz et al.~\cite{bonawitz17practical},
  but in their work since the clients chooses their neighbors, the resulting graph is a biased one; and they guarantee that
  there is no small isolated components in the graph.  
  
Concretely, from Gilbert~\cite{gilbert59random}, let $g(n, \epsilon)$ be 
the probability that graph $G(n,\epsilon)$ has disconnected components, and it
can be recursively computed as 
\[
g(n, \epsilon)=1-\sum_{i=1}^{n-1}  \binom{n-1}{i-1} g(i, \epsilon) (1-\epsilon)^{i(n-i)}.
\]

\begin{figure}[h!]
{\footnotesize
\begin{tabular*}{\columnwidth}{@{\extracolsep{\fill}}l r r r r }
\toprule
Number of nodes $n$ & 128 & 512 & 1024\\
\midrule
Parameter $\epsilon$ (failure probability $10^{-6}$) &  0.11 & 0.03 & 0.02 \\
\midrule
Parameter $\epsilon$ (failure probability $10^{-12}$) & 0.25 & 0.06 & 0.03\\
\bottomrule
\end{tabular*}
}
\caption{\small Parameters $\epsilon$ to ensure random graph connectivity.}\label{fig:graph-param}
\end{figure}
We numerically depict the above upper bound of the probability $g(n, \epsilon)$
  for different $\epsilon$ in Figure~\ref{fig:graph-param}. 
For example, when $n=1024$, to ensure less than $10^{-6}$ failure probability, 
  we need $\epsilon \ge 0.02$, hence the number of neighbors a client needs when $\delta = \eta=0.01$
  is at least $ \lceil (\epsilon + \delta + \eta) n \rceil = 41$.  
}
{
\heading{The number of neighbors.}
The full version~\cite{ma23flamingo} gives detailed analysis; but briefly, for 1K clients, $\epsilon = 0.02$ guarantees $10^{-6}$ probability (that the graph is disconnected) and $\epsilon = 0.03$ guarantees $10^{-12}$ probability. 
For example, in the former case, when $\eta$ and $\delta$ are both 1\%, each client needs $ \lceil n_t(\epsilon + \delta+\eta) \rceil = 41 $ neighbors.  
}

\ifthenelse{\boolean{longver}}
{

\section{Security Proofs}\label{app:proofs}

\subsection{Security definition}\label{app:security-def}

We say \emph{a protocol $\Pi$ securely realizes ideal functionality $\mathcal{F}$ 
in the presence of a malicious adversary $\Adv$}
if there exists a probabilistic polynomial time algorithm, or simulator, $\Sim$,
such that for any probabilistic polynomial time $\Adv$, the distribution
of the output of the real-world execution of $\Pi$ is (statistically or computationally) indistinguishable 
from the distribution of the output of the ideal-world execution invoking $\mathcal{F}$:
  the output of both worlds' execution includes the inputs and outputs of honest parties
  the view of the adversary $\Adv$. 
In our proof, we focus on the computational notion of security.
Note that $\Sim$ in the ideal world has one-time access to $\mathcal{F}$,
  and has the inputs of the corrupted parties controlled by $\Adv$.

\subsection{Ideal functionalities}\label{app:ideal}
We provide ideal functionality for \sys{} in Figure~\ref{fig:F-ideal-mal-weak}.
Looking ahead in the proof, we also need to define an ideal functionality for
  the setup phase and an ideal functionality for a single round in the collection phase,
  which we give as Figure~\ref{fig:Fsetup} and Figure~\ref{fig:Fsumt-weak}.  

We model the trusted source of initial randomness as a functionality $\Frand$;
  that is, a party or ideal functionality on calling $\Frand$ will
  receive a uniform random value $v$ in $\{0,1\}^\lambda$, where $\lambda$ is the
  length of the PRG seed.

\begin{figure}
 \begin{tcolorbox}[enhanced, boxsep=1mm, left= 0mm, right=0.5mm, title={\textbf{\footnotesize Functionality $\Fsumt^{(t)}$}}]
    \linespread{1.4}
    \footnotesize
    
    Parties: clients in $S_t$ and a server.
    
    Parameters: dropout rate $\delta$ and malicious rate $\eta$ over $n_t$ clients.
    
    \begin{myitemize2} 

        \item $\Fsumt^{(t)}$ receives a set $\mathcal{O}_{t}$ 
        such that $|\mathcal{O}_{t}| / |S_t| \le \delta$,
        and from the adversary $\mathcal{A}$ a set of corrupted parties, $\mathcal{C}$; % such that $|\mathcal{C}|/|S_t| \le \eta$;
        and $\vec{x}_{i, t}$ from client $i\in S_t \backslash  (\mathcal{O}_t \cup \mathcal{C})$.
          
        \item $\Fmal$ sends $S_t$ and $\mathcal{O}_t$ to $\Adv$, and asks $\Adv$ for a set $M_t$: if $\Adv$ replies with $M_t \subseteq S_t\backslash \mathcal{O}_t$ such that $|M_t|/|S_t| \ge 1-\delta$, then $\Fmal$ computes $\vec{z}_t = \sum_{i\in M_t \backslash \mathcal{C}} \vec{x}_{i,t}$ and continues; otherwise $\Fmal$ sends \texttt{abort} to all the honest parties.

        \item Depending on whether the server is corrupted by $\Adv$:
        \begin{myitemize2}
            \item If the server is corrupted by $\Adv$, then $\Fmal$ outputs $\vec{z}_t$ to all the parties corrupted by $\Adv$.
            \item If the server is not corrupted by $\Adv$, then $\Fmal$ asks $\Adv$ for a shift $\vec{a}_t$ and outputs $\vec{z}_t + \vec{a}_t$ to the server.
        \end{myitemize2}

    \end{myitemize2}
    
    \end{tcolorbox}
\caption{Ideal functionality for round $t$ in collection phase.}
\label{fig:Fsumt-weak}
\end{figure}

\subsection{Proof of Theorem~\ref{thm:security-setup}}\label{app:proof-setup}

The ideal functionality $\mathcal{F}_{\text{setup}}$ for the setup phase is defined in Figure~\ref{fig:Fsetup}. 
Depending on whether the server is corrupted or not, we have the following two cases. 
\begin{myenumerate2}
    \item When the server is not corrupted, then the communication model is equivalent
    to a secure broadcast channel. By security of GJKR, $\Pi_{\text{setup}}$ securely realizes $\mathcal{F}_{\text{setup}}$. 
    \item When the server is corrupted, we build a simulator for the adversary $\Adv$.
    We start by listing the messages that $\Adv$ sees throughout the setup phase:
        \begin{myitemize}
            \item A random value $v$ from $\Frand$;
            \item All the messages in $\Pi_{\DKG}$ that are sent via the server;
            \item All the messages in $\Pi_{\DKG}$ that are seen by the corrupted decryptors. 
        \end{myitemize}
    The simulator first calls $\Frand$ and receives a value $v$.
    Then the simulator interacts with $\Adv$ acting as the honest decryptors.
    The simulator aborts if any honest decryptors would abort in $\Pi_{\DKG}$. 
    There are two ways that $\Adv$ can cheat: 1) $\Adv$ cheats in $\Pi_{\DKG}$, and
    in Appendix~\ref{app:dkg}, we show the simulator can simulate the view of $\Adv$;
    2) $\Adv$ cheats outside $\Pi_{\DKG}$, this means $\Adv$ chooses a wrong set of
    decryptors, or it broadcasts wrong signatures. So the simulator aborts as long as it
    does not receive $2\ell+1$ valid signatures on $PK$s signed by the set defined by $v$. 
\end{myenumerate2}
Finally, note that our threat model (\S\ref{s:threat}) assumes that the server
  is also controlled by the adversary, i.e., the first case is not needed here;
  but it will be useful when we analyze the robust version in Appendix~\ref{app:robust-protocol}.

\subsection{Proof of Theorem~\ref{thm:dropout-resilience}}\label{app:proof-dropout-resilience}
The proof for dropout resilience is rather simple:
  in the setup phase, at most $\delta_D$ fraction of $L$ selected decryptors drop out;
  then in one round of the collection phase, another $\delta_D$ fraction of decryptors can drop out.
Since $2\delta_D+\eta_D<1/3$, and $3\ell +1= L$ (Fig.~\ref{fig:dkg-dropout}), the online decryptors can always help the server to reconstruct
  the secrets.

\subsection{Proof of Theorem~\ref{thm:security-main}}\label{app:proof-security-flamingo}

We first present the proof for a single round: the collection protocol $\Pi_{\text{sum}}$ (Fig.~\ref{fig:collection})
  for round $t$ securely realizes the ideal functionality $\mathcal{F}_{\text{sum}}^{(t)}$ (Fig.~\ref{fig:Fsumt-weak})
  in the random oracle model.
From the ideal functionality $\mathcal{F}_{\text{sum}}^{(t)}$ we can see that the output sum is not determined by the actual dropout set $\mathcal{O}_t$, but instead $M_t$ sent by the adversary.

In the proof below for a single round, for simplicity, we omit the round number $t$ 
  when there is no ambiguity. 
We assume the adversary $\Adv$ controls a set of clients in $[N]$, with the constraint
  $2\delta_D + \eta_D < 1/3$. 
Denote the set of corrupted clients in $[N]$ as $\mathcal{C}$
  and as before the set of the decryptors is $\mathcal{D}$;
  the malicious decryptors form a set $\mathcal{C} \cap \mathcal{D}$ 
  and $|\mathcal{C} \cap \mathcal{D}| < L / 3$.
From the analysis in Appendix~\ref{app:failure-model}, we have $|\mathcal{C}| / |S_t| \le \eta$. 

\medskip 

\heading{Case 1.} We start with the case where the server is corrupted by $\Adv$. 
Now we construct a simulator $\Sim$ in the ideal world that runs $\Adv$ as 
  a subroutine.
We assume both the simulator $\Sim$ and ideal functionality $\Fsumt^{(t)}$ (Fig.~\ref{fig:Fsumt-weak})
  have access to an oracle $\mathcal{R}_{\text{drop}}$ that provides the dropout sets $\mathcal{O}_t$. 
In other words, the dropout set is not provided ahead of the protocol but instead provided
  during the execution (similar notion appeared in prior work~\cite{bonawitz17practical}). 
Assume that in the ideal world, initially a secret key $SK$ is shared among
  at least $2L/3$ clients in $\mathcal{D}$. 
The simulation for round $t$ is as follows. 
\begin{enumerate}
    \item $\Sim$ received a set $\mathcal{O}_t$ from the oracle $\mathcal{R}_{\text{drop}}$. 
    \item $\Sim$ receives a set $M_t$ from the adversary $\Adv$. 
    \item $\Sim$ obtains $\vec{z}_t$ from $\Fsumt^{(t)}$. 
    
    \item (Report step) $\Sim$ interacts with $\Adv$ as in the report step acting 
          as the honest clients $i\in M_t \backslash \mathcal{C}$
          with masked inputs $\vec{x}_i'$, 
          such that $\sum_{i\in M_t \backslash \mathcal{C}} \vec{x}_i'
          = \vec{z}_t$.

          Here the input vector $\vec{x}_i'$ and the mask $m_{i}$ are
          generated by $\Sim$ itself, and the pairwise secrets are obtained by querying the PKI. 
    \item (Cross-check step) $\Sim$ interacts with $\Adv$ acting as honest decryptors as in the cross-check step. 
    
    \item (Reconstruction step) $\Sim$ interacts with $\Adv$ acting as honest decryptors in the reconstruction step,
          where $\Sim$ uses the shares of the secret key $SK$ to perform decryption of honest parties. 
          
    \item In the above steps, if all the honest decryptors would abort in the protocol
          prescription then $\Sim$ sends \texttt{abort} to $\Fsumt^{(t)}$, 
          outputs whatever $\Adv$ outputs and halts. 

\end{enumerate}

We construct a series of hybrid execution, starting from the real world to the ideal world execution. 

\heading{\small Hybrid 1.}
The view of $\Adv$ in the real world execution is the same as the view of $\Adv$ in the ideal world when $\Sim$ would have the actual inputs of honest parties, $\{\vec{x}_i\}_{i\in S_t \backslash (\mathcal{C}\cup \mathcal{O}_t)}$, the pairwise and individual masks, and the shares of the secret key $SK$. ($\Sim$ in fact would know $SK$ in full 
  because $3\ell + 1 = L$ and that the number of honest parties is $2\ell + 1$ or more.)

\heading{\small Hybrid 2.}
$\Sim$ now instead of using the actual secret key $s$, it replaces $s$ with 0
and sends the corresponding $|\mathcal{C}\cap \mathcal{D}| < L/3 $ shares of 0 in $\mathbb{Z}_q$ to $\Adv$. 
The joint distribution of less than $L/3$ shares (recall that the threshold is $\ell$ where $3\ell+1=L$),
  from the property of Shamir secret sharing, for $s$ and $0$ are the same.
Hence this hybrid has identical distribution to the previous hybrid.

\heading{\small Hybrid 3.}
$\Sim$ now instead of using the actual pairwise masks between honest parties,
  it samples a random pairwise mask $r_{i,j}'$ from $\{0,1\}^\lambda$ and computes 
  the corresponding ElGamal ciphertext as $(c_0', c_1')$. 
$\Sim$ does not change the pairwise mask between a client controlled by $\Adv$ and an honest client
  (such pairwise mask can be obtained by querying PKI to get $g^{a_j}$ for an honest client $j$,
  and compute $(g^{a_j})^{a_i}$ for malicious client $i$).   
We argue that $\Adv$'s view in this hybrid is
  computationally indistinguishable from the previous one as below.

First, we need to assume the mapping from $\mathbb{G}$
  to $\{0,1\}^{\lambda}$ is a random oracle.
To specify, 
  in the real world, the mask $r_{i,j}$ is computed from the mapping on $g^{a_i a_j} $; 
  and in the ideal world the mask $r_{i,j}'$ is randomly sampled.
Let $M_t$ be the set of online clients that $\Adv$ labels in the real world (recall the server is controlled by $\Adv$). 
$\Adv$ in both worlds observes $\PRF(r_{i,j}, t)$ between a client $i$ out of $M_t$
  and a client $j$ in $M_t$, hence we require $r_{i,j}$ to be random as a PRF key. 

Second, $\Adv$ in the ideal world does not observe the pairwise masks between clients
  in $M_t$, but only the ciphertexts generated from $r_{i,j}'$ for those clients;
  and the distribution of the ciphertexts is computationally indistinguishable 
  from what $\Adv$ observed from the real world by the security of ElGamal encryption (Definition~\ref{def:elgamal}). 

\heading{\small Hybrid 4.}
$\Sim$ now instead of using symmetric encryption ($\SymEnc$) of the shares of the actual individual mask $m_i$,
  it uses the symmetric encryption of a randomly sampled $m_i'$ from $\{0,1\}^\lambda$ as the individual mask.
Looking ahead in the proof, we also need to model the PRG as a random oracle $\mathcal{R}_{\PRG}$ that can
  be thought of as a ``perfect PRG'' (see more details in a prior work~\cite{bonawitz17practical}).
For all $i\in M_t/ \mathcal{C}$, $\Sim$ samples $Vec_i$ at random and programs $\mathcal{R}_{\PRG}$ to set $\PRG(m_i')$ as 
\[ \PRG(m_i') = Vec_i - \vec{x}_i - \sum \pm\PRG(r_{i,j}'),\]
where the vectors $Vec_i$'s are vectors observed in the real world. 
The view of $\Adv$ regarding $Vec_i$'s in this hybrid is statistically indistinguishable to that in the previous hybrid.

Moreover, $\Adv$ learns the $m_i$ in the clear for those $i \in M_t$ in both worlds, 
  and the distributions of those $m_i$'s in the ideal and real world are identical; 
for those $m_i$'s where $i\not\in M_t$ that $\Adv$ should not learn, 
from the semantic security of the symmetric encryption scheme (Definition~\ref{def:authenc}), 
  and the threshold $\ell<L/3$, $\Adv$'s view in this hybrid is
  computationally indistinguishable from the previous one.

\heading{\small Hybrid 5.}
$\Sim$ now instead of programming the oracle $\mathcal{R}_{\PRG}$ as in the previous hybrid,
  it programs the oracle as
  \[ \PRG(m_i') = Vec_i - \vec{x}_i' - \sum \pm\PRG(r_{i,j}'),\]
  where $\vec{x}_i'$'s are chosen such that 
  $\sum_{i\in M_t \backslash \mathcal{C}} \vec{x}_i = \sum_{i\in M_t \backslash \mathcal{C}} \vec{x}_i'$.
          % = z_t - \sum_{i\in \mathcal{C} \cap S_t}x_i - \sum_{i\in  \mathcal{O}_t \cap S_t} \vec{x}_i$. 
  % $\sum_{i\in S_t \backslash \mathcal{C}} \vec{x}_i' = z_t - \sum_{i\in\mathcal{C} \cap S_t} \vec{x}_i$.
From Lemma 6.1 in a prior work~\cite{bonawitz17practical} under the same setting,
   the distribution of $\Adv$'s view in this hybrid is statistically indistinguishable to that in the previous hybrid,
   except probability $2^{-\kappa}$:
   if the graph becomes disconnected or there is an isolated node
  after removing $\mathcal{O}_t$ and $\mathcal{C}$ from $S_t$,
  then the server learns $x_i$ in the clear and thus can 
  distinguish between the two worlds.
When $\Adv$ cheats by submitting $M_t$ to $\Sim$ where the graph formed by nodes
  in $M_t$ is not connected, $\Sim$ simulates 
  honest decryptors and output abort. 
In this case, the distribution of $\Adv's$ view in the ideal world 
  is the same as that in the real world.

\heading{\small Hybrid 7.}
Same as the previous hybrid, except that the label messages $req$ from honest decryptors in the last step 
  are replaced with the offline/online labels obtained from the oracle. 
In all steps, $\Adv$ would cheat by sending invalid signatures to $\Sim$; in this case $\Sim$ will abort.
In the cross-check and reconstruction steps, there are following ways that $\Adv$ would cheat here: 
\begin{enumerate}
    \item $\Adv$ sends multiple different $M_t$'s to $\Fsumt$. $\Sim$ in the ideal world will simulate
    the protocol in Lemma~\ref{lemma:agreement-malicious} below, 
    % either honest decryptors abort, which means $\Sim$ outputs abort; or honest decryptors reach an agreement and $\Adv$ receives $z_t$. 
    and outputs whatever the protocol outputs. 
    \item $\Adv$ sends to $\Fsumt$ a set $M_t$ with less than $(1-\delta)n_t$ clients,
          or the clients in $M_t$ are disconnected,
          or there is a client in $M_t$ with less than $\eta^k$ online neighbors.
          In this case, $\Sim$ will abort, which is the same as in the real-world execution. 
\end{enumerate}
 
The last hybrid is exactly the ideal world execution.
To better analyze the simulation succeeding probability, we use $\kappa_1$ to denote the
  security parameter for the graph connectivity (Lemma~\ref{lemma:graph-connectivity}) and use $\kappa_2$ to denote
  the security parameter for the third checking in the cross-check round (\S\ref{s:detail:collection}).
The simulation can fail in two ways:
1) The graph gets disconnected (even when the server is honest);
2) There exists a client in $S_t$ such that all of its online neighbors are malicious. 
The former happens with probability $2^{-\kappa_1}$.
The latter is bounded by $ n \cdot 2^{-\kappa_2}$:
  the probability that the opposite event of 2) happens is $(1-\eta^k)^n \approx 1-n\eta^k $ (assuming $\eta^k$ is very small).
  Thus the failure probability $ n \eta^k \leq n\cdot 2^{-\kappa_2}$. 

\heading{Case 2.} For the case where the server is not corrupted by $\Adv$, the simulation is the same as Case 1, except that the simulator needs to compose the ``shifts'' added by $\Adv$ in each step to hit the value $\vec{a}_t$. 

\medskip

This completes the proof that for any single round $t\in[T]$, the protocol $\Pi_{\text{sum}}$ for round $t$
securely realizes $\Fsumt^{(t)}$ when $\delta_D+\eta_D<1/3$,
  except probability $2^{-\kappa_1} + n \cdot 2^{-\kappa_2} \le n \cdot 2^{-\kappa+1}$, where $\kappa = \min\{\kappa_1, \kappa_2\}$.

\begin{lemma}\label{lemma:agreement-malicious}
    Assume there exists a PKI and a secure signature scheme; 
    there are $3\ell+1$ parties with at most $\ell$ colluding malicious parties. 
    Each party has an input bit of 0 or 1 from a server.
    Then there exists a one-round protocol for honest parties to decide 
    if the server sent the same value to all the honest parties. 
\end{lemma}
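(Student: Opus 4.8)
The plan is to exhibit the natural one-round ``echo'' protocol and prove two guarantees about it: \emph{completeness} --- if the server actually sent the same bit to every honest party and relayed all messages, each honest party outputs that bit --- and \emph{agreement} --- no two honest parties output different bits. Agreement is precisely what the proof of Theorem~\ref{thm:security-main} invokes in Hybrid~7: it makes the output of the sub-protocol a well-defined function of the transcript, which is what $\Sim$ must reproduce when $\Adv$, playing the server, feeds inconsistent $M_t$'s to different honest decryptors.

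\textbf{The protocol.} Party $i$, holding the bit $b_i$ it received from the server, computes $\sigma_i \leftarrow Sign(sk_i, b_i)$ and sends $(i, b_i, \sigma_i)$ to every other party, with all messages relayed by the (possibly malicious) server as in Section~\ref{s:comm-model}. Upon receipt, party $j$ discards any tuple $(i, b, \sigma)$ for which $VerSig(pk_i, b, \sigma) = 0$, keeps at most one surviving tuple per index $i$, and then checks whether at least $2\ell+1$ of the surviving tuples (counting its own) carry the same bit value $b$; if so it outputs $b$, otherwise it outputs $\bot$. This is a single round of party-to-party messages, hence one-round as claimed, and a server that drops or reorders messages can only shrink the set of valid tuples a party collects, which at worst yields the permitted output $\bot$. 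Completeness is then immediate: among the $3\ell+1$ parties at most $\ell$ are corrupt, so at least $2\ell+1$ are honest; if all of them received the common value $b$ and the server forwarded their echoes, every honest $j$ collects $\ge 2\ell+1$ valid tuples on $b$ and outputs $b$.

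\textbf{Agreement.} Suppose honest $P_1$ outputs $b_1$ and honest $P_2$ outputs $b_2$ with $b_1 \ne b_2$. Write the number of corrupt parties as $\ell - v$ with $v \ge 0$, so there are $2\ell+1+v$ honest parties. By EUF-CMA security of the signature scheme (Definition~\ref{deef:sig}), except with negligible probability the only valid tuple bearing an honest party's index is the one that party actually produced, and each honest party signs a single bit. Hence of the $\ge 2\ell+1$ valid tuples on $b_1$ seen by $P_1$, at least $2\ell+1-(\ell-v) = \ell+1+v$ come from honest parties, and symmetrically $P_2$ sees $\ge \ell+1+v$ honest signers on $b_2$; since $b_1 \ne b_2$ these two honest-signer sets are disjoint, forcing at least $2(\ell+1+v) = 2\ell+2+2v$ honest parties and contradicting the count $2\ell+1+v$. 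Thus all non-aborting honest parties output the same value, and if in addition each party requires its $2\ell+1$ agreeing tuples to match its own input (as the cross-check of \S\ref{s:detail:collection} does), then whenever the server sent different bits to two honest parties at least one of them outputs $\bot$, i.e., the inconsistency is detected.

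\textbf{Main obstacle.} The reduction to unforgeability is routine; the delicate point is the counting in the agreement argument, where the slack $v$ must be tracked so that the bound holds for \emph{every} corruption level up to $\ell$ and not merely the worst case. This exactly parallels the proof of property~C1 in Lemma~\ref{lemma:dkg}, and the same simulator-side bookkeeping --- charging one-time signing queries to $\Adv$ and aborting on any missing or ill-signed echo --- carries over to show the sub-protocol can be simulated within the $\Sim$ of Theorem~\ref{thm:security-main}.
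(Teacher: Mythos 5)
Your proposal is correct, and the protocol you exhibit (each party signs its received bit, echoes it via the server, and accepts a value only upon seeing at least $2\ell+1$ matching valid signatures) is exactly the protocol the paper has in mind, as instantiated in the cross-check step of $\Pi_{\text{sum}}$. The difference is in what is argued and how carefully. The paper's proof of Lemma~\ref{lemma:agreement-malicious} is a two-line assertion, and its first sentence ("if an honest party receives $2\ell+1$ or more messages with the same value, then the server sent the same value to all honest parties") is, taken literally, too strong: up to $\ell$ of those matching echoes may come from corrupt parties, so only $\ell+1$ honest recipients of that value are guaranteed, and when fewer than $\ell$ parties are actually corrupt the server can equivocate toward a few honest parties without every accepting party noticing. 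What the application (Hybrid~7 of Theorem~\ref{thm:security-main}) actually needs is what you prove: no two non-aborting honest parties accept different values, and any equivocation among honest parties is detected by at least one of them via $\bot$. Your disjoint-honest-signers counting with the slack $v$ is precisely the argument the paper uses for property C1 in Lemma~\ref{lemma:dkg}, transplanted to this lemma, together with the (routine but correctly flagged) reduction to EUF-CMA security to rule out forged echoes. In short, you prove the accurate form of the guarantee with the counting the paper only performs elsewhere; this is a sharper statement of the same underlying approach rather than a different route, and it is the version one would want to cite when justifying the $1/3$ threshold.
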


\begin{proof}
If an honest party receives $2\ell+1$ or more messages with the same value, then it means
  the server sends to all honest parties the same value. 
If an honest party receives less than $2\ell+1$ messages with the same value,
  it will abort; in this case the server must have sent different messages to different honest parties. 
\end{proof}

\heading{Remark.}
The above analysis of the agreement protocol shows where the threshold $1/3$ comes from.
Consider the case where the threshold is 1/2 and $2\ell+1 = L$. 
For a target client, the (malicious) server can tell $\ell/2$ decryptors that the client is online
  and tell another $\ell/2+1$ decryptors that the client is offline.
Then combined with the $\ell$ malicious decryptors' shares, the server has $\ell/2 + \ell$ shares to reconstruct
  individual mask, and $\ell/2+1 + \ell$ shares to reeconstruct the pairwise mask.

\heading{Multi-round security.}
Our threat model assumes that $\Adv$ controls $\eta N$ clients throughout $T$ rounds (\S\ref{s:threat}).
There are two things we need to additionally consider on top of the single-round proof:
  1) the set $S_t$ is generated from $\PRG$, and 2) the pairwise mask $h_{i, j, t}$ computed from $\PRF(r_{i,j}, t)$. 
For the former, we program $\mathcal{R}_{\PRG}$ (like the single-round proof) such that the 
  $\ChooseSet$ outputs $S_t$. 

Now we analyze the per-round pairwise masks. 
Let the distribution of the view of $\Adv$ in round $t$ be $\Delta_t$. 
We next show that if there exists an adversary $\mathcal{B}$, and two round number 
  $t_1, t_2\in [T]$ such that $\mathcal{B}$ can distinguish 
  between $\Delta_{t_1}$ and $\Delta_{t_2}$,
  then we can construct an adversary $\mathcal{B}'$ who can break PRF security.
We call the challenger in PRF security game simply as challenger. 
There are two worlds (specified by $b=0$ or $1$) for the PRF game.
When $b=0$, the challenger uses a random function; when $b=1$, the
  challenger uses PRF and a random key for the PRF.
We construct $\mathcal{B}'$ as follows.
On input $t_1, t_2$ from $\mathcal{B}$, $\mathcal{B}'$ asks challenger for
  $h_{i,j,t_1}$ for all clients $i$ and $j$, and round $t_1, t_2$.
Then $\mathcal{B}'$ creates the messages computed from $h_{i,j,t}$'s as 
  protocol $\Pi_{\text{sum}}$ prescribed; it generates 
  two views $\Delta_{t_1}, \Delta_{t_2}$ and sends to $\mathcal{B}$. 
$\mathcal{B'}$ outputs whatever $\mathcal{B}$ outputs.

\heading{Failure probability for $T$ rounds.}
For a single round, we already showed that protocol $\Pi_{\text{sum}}^{(t)}$
  securely realizes $\Fsumt^{(t)}$ except probability $p = n \cdot 2^{-\kappa+1}$. 
The probability that for all the $T$ rounds the protocol is secure is therefore $1-(1-p)^T$,
  which is approximately $1- T\cdot p$ when 
  $T \cdot p \ll 1$. 
Therefore, the probability of failure (there exists a round that fails the simulation) is $Tn 2^{-\kappa+1}$.

\ifthenelse{\boolean{longver}}{%
\section{Extension for Robustness}
\subsection{Robust protocol}\label{app:robust-protocol}

As discussed in Section~\ref{s:enhanced}, we want to additionally detect incorrect output when the server is honest.
Figure~\ref{fig:F-ideal-mal} gives the ideal functionality of our robust version (detect-and-abort). 
Specifically, the protocol $\Pi_{\text{sum-robust}}$ has two key properties that $\Pi_{\text{sum}}$ does not have:
\begin{myenumerate2}
    \item the inputs of clients are fixed after being submitted in the report step (i.e., malicious \coms have no way to change the sum of those inputs); 
    \item when the server is honest, the incorrect output caused by malicious clients can always be detected by the server.
\end{myenumerate2}
Moreover, we present a proof-of-decryption technique which allows the server to identify malicious decryptors.

We next describe key techniques we use in $\Pi_{\text{sum-robust}}$; the full description is given as Figure~\ref{fig:collection-robust} where we highlight the changes from $\Pi_{\text{sum}}$. 

\heading{Checking aggregated masks.}
One can make a blackbox use of a technique in a very recent work by Bell et al.~\cite{bell22acorn} published after our work. To specify, each client $i$ computes the sum of its (expanded) masks, denoted as $\vec{s}_i$ which has the same length as its input vector. The client also generates a \emph{proof of masking} that it correctly computed $\vec{x}_i + \vec{s}_i$. Each client $i$ sends to the server the commitment to $\vec{s}_i$ and the proof of masking, besides what is supposed to send in protocol $\Pi_{\text{sum}}$. 

The server first verifies the received proofs of masking, and then reconstruct the masks exactly as in $\Pi_{\text{sum}}$. Denote the sum of the recovered mask as $\vec{s}$. For those clients whose proofs are valid, the server and the clients engage in a \emph{distributed key correctness} (DKC) protocol~\cite[Section 5]{bell22acorn}, where the server checks if recovered mask $\vec{s}$ is the sum of the $\vec{s}_i$'s underlying the commitments. If the server in the DKC protocol aborts, then it also aborts in $\Pi_{\text{sum-robust}}$; otherwise the server continues executing the rest of steps as in $\Pi_{\text{sum}}$.

Now that the server can detect incorrect output; but it does not identify the malicious clients or decryptors. If desired, we can further detect malicious decryptors using the following two techniques. 

\heading{ElGamal encryption for both types of seeds.}
Instead of using symmetric encryption for the shares of $m_{i,t}$, each client
  encrypts $m_{i,t}$ directly (but not the shares) using ElGamal encryption.
This reduces the number of ciphertexts appended to the masked vector compared to $\Pi_{\text{sum}}$;
  but more importantly, later we will see how it combines with
  another technique (zero-knowledge proof for discrete log) 
  to bring robustness.
See the details in the report step of Figure~\ref{fig:collection-robust}. 
This change in the report step will bring a change in the reconstruction step that
  the decryptors do threshold decryption on ciphertext $m_{i,t}$, 
  instead of decrypting the shares of $m_{i,t}$
  using symmetric keys. 

\heading{Proof of decryption.} 
In the reconstruction step, in addition to decrypting,
each decryptor $u$, for each ciphertext $(c_0, c_1)$, proves to the server that 
\[ \log_{c_0} (c_0)^{s_u} = \log_g g^{s_u} ,\]
  where $g^{s_u}$ is known to the server. 
Note that $g^{s_u}$ can be generated in the DKG along the way in the setup phase and stored by the server.

To prove that $\log_{c_0} (c_0)^{s_u} = \log_g g^{s_u} $,
each \com chooses a random $\beta \in \mathbb{Z}_q$ 
  and sends to the server $c_0^{\beta}, g^{\beta}$. 
The server then sends to the \com a challenge $e \in \mathbb{Z}_q$.
The \com computes $z = s_u \cdot e + \beta$
and sends $z$ to the server.
The sever checks that $(c_0^{s_u})^e \cdot (c_0)^\beta = c_0^z$ and $(g^{s_u})^e \cdot g^{\beta} = g^z$.
We use Fiat-Shamir transform to make the proof non-interactive. 

Due to this proof of decryption, the server can exclude
  bogus partial decryptions from malicious decryptors;
  since the sharing polynomial for secret key $s$ is of degree $\ell$
  and there are at least $2\ell+1$ honest online decryptors, 
  the server is always able to reconstruct the result.

\subsection{Security proof}\label{app:proof-enhance}
 
\begin{theorem}[Security of \sys extension]\label{thm:security-extend}
Let $\Psi_T$ be the sequential execution of $\Pi_{\text{setup}}$ (Fig.\ref{fig:protocol-setup}) and the $T$ rounds of $\Pi_{\text{sum-robust}}$ (Fig.~\ref{fig:collection-robust}).
Let $\kappa$ be a security parameter.
Let $\delta, \delta_D, \eta, \eta_D$ be threat model parameters as defined (\S\ref{s:param:setup},\S\ref{s:param:collection}). 
Let $\epsilon$ be the graph generation parameter (Fig.\ref{fig:gengraph}). 
Let $N$ be the total number of clients and $n$ be the number of clients in summation in each round.
Assuming the existence of a PKI, a trusted source of initial randomness, a PRG, a PRF, an asymmetric encryption $\AsymEnc$, a symmetric $\SymEnc$, and a signature scheme, if $2\delta_D + \eta_D < 1/3$ and $\epsilon \ge \epsilon^*(\kappa)$ (Lemma~\ref{lemma:graph-connectivity}),
then under the communication model defined in \S\ref{s:comm-model}, protocol $\Psi_T$ securely computes functionality $\Fmalrb$ (Fig.\ref{fig:F-ideal-mal}) 
in the presence of a static malicious adversary controlling $\eta$ fraction of the total $N$ clients (and the server) and per-round $\eta$ fraction of $n$ clients, except probability 
$T n \cdot 2^{-\kappa+1}$.
\end{theorem}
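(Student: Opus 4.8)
The plan is to mirror the proof of Theorem~\ref{thm:security-main} and treat the three additions of $\Pi_{\text{sum-robust}}$ --- ElGamal encryption of $m_{i,t}$ (rather than symmetric encryption of its shares), the ACORN-style proof of masking together with the distributed key correctness (DKC) sub-protocol~\cite[Section~5]{bell22acorn}, and the Fiat--Shamir proof of decryption --- as modular augmentations of the existing simulator. First I would reduce to a single round: show that one execution of $\Pi_{\text{sum-robust}}$ for round $t$ securely realizes the natural ``detect-and-abort'' single-round restriction of $\Fmalrb$, and then lift to $T$ rounds by exactly the hybrid argument of Appendix~\ref{app:proof-security-flamingo} (programming $\mathcal{R}_{\PRG}$ so that $\ChooseSet$ produces the prescribed $S_t$, and reducing distinguishability of per-round views to PRF security for the seeds $h_{i,j,t}=\PRF(r_{i,j},t)$). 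Since $\Pi_{\text{setup}}$ is unchanged, Theorem~\ref{thm:security-setup} and Lemma~\ref{lemma:dkg} carry over verbatim, and sequential composition gives the statement for $\Psi_T$.

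For the single round I would again split on whether the server is corrupted. When the server is corrupted, the robustness additions give the adversary nothing new, so the simulator and the chain of hybrids from Appendix~\ref{app:proof-security-flamingo} go through with two local changes: (i) the hybrid that replaced the symmetric encryptions of the shares of $m_{i,t}$ by encryptions of a random $m_i'$ is replaced by a hybrid that replaces the single ElGamal ciphertext of $m_{i,t}$ by an ElGamal ciphertext of a random value, indistinguishable by CPA security of ElGamal (Definition~\ref{def:elgamal}), i.e.\ DDH; and (ii) the simulator additionally produces the transcripts of the proof of masking, the DKC protocol, and the proofs of decryption for the honest parties by invoking their zero-knowledge simulators and programming the (Fiat--Shamir) random oracle --- these transcripts reveal nothing beyond the already-simulated public values $g^{s_u}$ and the commitments, which can themselves be simulated as commitments to random vectors by the hiding property. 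Programming $\mathcal{R}_{\PRG}$ to hit the observed $Vec_i$'s, and handling a cheating choice of $M_t$ (disconnected graph, too few online clients, a node with all-malicious online neighbours, or inconsistent $M_t$'s) exactly as before --- via Lemma~\ref{lemma:agreement-malicious} and the connectivity guarantee of Lemma~\ref{lemma:graph-connectivity} --- yields simulation failure probability at most $2^{-\kappa_1}+n\cdot 2^{-\kappa_2}\le n\cdot 2^{-\kappa+1}$ for the round.

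The genuinely new content is the honest-server case, where $\Fmalrb$ additionally demands that the server either outputs the sum of the inputs actually submitted in the report step or aborts. Here I would establish two soundness claims. \textbf{Inputs are fixed at report time:} each client publishes a binding commitment to $\vec{s}_i$ (the sum of its expanded masks) together with a proof of masking that it correctly formed $\vec{x}_{i,t}+\vec{s}_i$; by binding of the commitment and soundness of the proof of masking, no later message (in particular from malicious \coms) can change the effective input of a client whose proof verified, except with negligible probability. \textbf{Incorrect reconstruction is always caught:} after reconstructing the aggregate mask $\vec{s}$ the server runs DKC against the product of the committed $\vec{s}_i$'s, whose soundness forces $\vec{s}$ to be the true sum or an abort; and each partial decryption carries a proof that $\log_{c_0} c_0^{s_u}=\log_g g^{s_u}$, so the server discards bogus shares from malicious decryptors --- and since the sharing polynomial has degree $\ell$ with at least $2\ell+1$ honest online decryptors (using $2\delta_D+\eta_D<1/3$), reconstruction from the surviving valid shares always succeeds and equals the correct mask. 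Combining, the honest server's output is either $\sum_{i\in M_t\setminus\mathcal{C}}\vec{x}_{i,t}$ or $\bot$, matching $\Fmalrb$; the simulator for this case is the Case~2 simulator of Appendix~\ref{app:proof-security-flamingo} augmented with the zero-knowledge simulators above.

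The main obstacle I anticipate is the modular black-box use of the ACORN proof of masking and DKC protocol inside the simulation: I need them to come with their own simulator (so honest parties' messages in those sub-protocols can be produced without honest inputs) and with knowledge/soundness and binding errors that are at most $2^{-\Theta(\kappa)}$, so that all the new failure events fold into the per-round bound $n\cdot 2^{-\kappa+1}$ and hence $Tn\cdot 2^{-\kappa+1}$ over $T$ rounds. Checking that composing these sub-protocol simulators with the Fiat--Shamir random-oracle programming already used elsewhere in the proof does not create circular dependencies (e.g.\ the extractor for the proof of masking versus the programming for $\PRG$ and for $\ChooseSet$) is the delicate part of the argument.
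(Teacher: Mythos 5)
Your proposal is correct and follows the paper's top-level structure exactly: reuse the Theorem~\ref{thm:security-main} simulation for the corrupted-server case and treat the honest-server case separately, composing with the unchanged setup analysis. The differences are in how the honest-server case is discharged. The paper's proof there is much lighter than what you anticipate: because the server is honest, $\Adv$ (controlling only clients and some \coms) never sees the honest clients' masked vectors, their commitments and proofs of masking, or the honest \coms' partial decryptions, so the paper's $\Sim$ simply plays the honest server and honest \coms with all-zero inputs, freshly sampled random masks, and shares of $0$ in place of $SK$, and its hybrids are identical or statistically close distributions with no random-oracle programming, no graph-connectivity argument, and no zero-knowledge simulators needed; $\Sim$ sends \texttt{abort} to $\Fmalrb$ exactly when the honest \coms or the honest server (in reconstruction or in the DKC check) would abort. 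What your write-up makes explicit --- binding of the commitments, soundness of the proof of masking and of DKC, and the proof of decryption forcing out bogus shares so that a non-aborting honest server outputs the true sum --- is precisely the correctness content the paper leaves implicit in its simulator steps, so including it strengthens rather than contradicts the argument. Conversely, for the corrupted-server case the paper only says the proof is ``essentially the same'' as Theorem~\ref{thm:security-main}; your two local modifications (replacing the symmetric-encryption hybrid for the shares of $m_{i,t}$ by an ElGamal/DDH hybrid for the single ciphertext of $m_{i,t}$, and simulating the honest parties' proofs via their zero-knowledge simulators with Fiat--Shamir programming) are exactly the adjustments that phrase glosses over. Your worry about circularity between the sub-protocol simulators and the programming of $\mathcal{R}_{\PRG}$ does not arise in the paper because it never invokes those simulators in the honest-server case and treats them as black boxes in the corrupted-server case; your accounting of their soundness errors inside the $n\cdot 2^{-\kappa+1}$ per-round budget is a reasonable, slightly more careful bookkeeping than the paper performs.
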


\begin{figure}
 \begin{tcolorbox}[enhanced, boxsep=1mm, left= 0mm, right=0.5mm, title={\textbf{\footnotesize Functionality $\Fmalrb$}}]
    \linespread{1.4}
    \footnotesize
    
    Parties: clients $1, \ldots, N$ and a server.
    
    Parameters: corrupted rate $\eta$, dropout rate $\delta$.
    
    \begin{myitemize2}

        \item $\Fmalrb$ receives from the adversary $\Adv$ a set of corrupted parties,
                denoted as $\mathcal{C} \subset[N]$, where $|\mathcal{C}|/N \le \eta$.
        \item For each round $t\in [T]$:
            \begin{enumerate}
                \item  $\Fmalrb$ receives a random subset $S_t\subset [N]$, a set of dropout clients $\mathcal{O}_t \subset S_t$, 
                where $|\mathcal{O}_t|/|S_t| \le \delta$ and $|\mathcal{C}|/|S_t| \le \eta$,
                and inputs $\vec{x}_{i,t}$ for client $i \in S_t \backslash (\mathcal{O}_t \cup \mathcal{C})$.
                    
                \item $\Fmal$ sends $S_t$ and $\mathcal{O}_t$ to $\Adv$, and asks $\Adv$ for a set $M_t$: if $\Adv$ replies with $M_t \subseteq S_t\backslash \mathcal{O}_t$ such that $|M_t|/|S_t| \ge 1-\delta$, then $\Fmal$ computes $\vec{z}_t = \sum_{i\in M_t \backslash \mathcal{C}} \vec{x}_{i,t}$ and continues; otherwise $\Fmal$ sends \texttt{abort} to all the honest parties.

                \item Depending on whether the server is corrupted by $\Adv$ or not:
                \begin{enumerate}
                    \item If the server is corrupted by $\Adv$, then $\Fmalrb$ outputs $\vec{z}_t$ to all the parties corrupted by $\Adv$;
                    \item If the server is not corrupted by $\Adv$, then $\Fmalrb$ asks $\Adv$ whether it should continue or not: 
                if $\Adv$ replies with \texttt{continue}, then $\Fmalrb$ outputs $\vec{z}_t$ to the server; otherwise it outputs $\texttt{abort}$ to all the honest parties.
                \end{enumerate}
                
            \end{enumerate}

    \end{myitemize2}
    
    \end{tcolorbox}
\caption{Ideal functionality for Flamingo with robustness (detect-and-abort).}
\label{fig:F-ideal-mal}
\end{figure}

\begin{figure*}

\begin{tcolorbox}[enhanced, boxsep=1mm, left= 0mm, right=0.5mm, title={\textbf{\footnotesize Collection phase with robustness: $\Pi_{\text{sum-robust}}$ for round $t$}}]
\linespread{1.4}
\footnotesize
% The public keys can be pre-fetched and $sk_{ij}$ and $r_{ij}$ can be pre-computed, if the server does the following step 1 in advance.
% The system has a public key $PK$ and a secret key $SK$ which is shared among $C$ committee members (clients).
Initial state from setup phase:
each client $i\in [N]$ holds a value $v$ and public key $PK=g^s$;
each decryptor $u \in \mathcal{D}$ additionally holds a Shamir share of $SK=s$ (threshold $\ell$ with $3\ell+1 = L$).
The server has $g^{s_u}$ for each decryptor $u \in \mathcal{D}$. 

Parameters: $ \delta_D + \eta_D < 1/3$. 

\begin{enumerate}[1.]
    \item \textbf{Report step.}
    
    \textbf{Server performs the following:} 
    
    \quad Compute a set $\mathcal{Q}_{graph} \leftarrow \ChooseSet(v, t, n_t, N)$ and a graph $G_t \leftarrow \GenGraph(v, t, \mathcal{Q}_{graph})$; store $\{A_i(t)\}_{i \in \mathcal{Q}_{graph}}$ computed from $G_t$.

    \quad Notify each client $i\in \mathcal{Q}_{graph}$ that collection round $t$ begins.
    
    \textbf{Each client $i\in \mathcal{Q}_{graph}$ performs the following:} 
    
    \quad Compute $\mathcal{Q}^{local}_{graph} \leftarrow \ChooseSet(v, t, n_t, N)$, and if $i \not\in \mathcal{Q}^{local}_{graph}$, ignore this round.

    \quad Sample $m_{i,t} \xleftarrow{\$} \{0,1\}^{\kappa}$ and compute $\{h_{i, j, t}\}_{j\in A(i)} \leftarrow \PRF(r_{ij}, t)$ for $j\in A(i)$, where $r_{ij}$ is derived from PKI.

    \quad Send to server a message $msg_{i,t}$ consisting of 

    \qquad \qquad $Vec_{i,t} = \vec{x}_{i,t} + \PRG(m_{i,t}) + \sum_{j\in A_t(i)} \pm \PRG(h_{i,j,t})$, \quad
         {\color{magenta}  $\AsymEnc(PK, m_{i,t}), \qquad  
        \AsymEnc(PK, h_{i,j,t})$ for $j \in A_t(i) $ }
        
    \qquad where $A_t(i) \leftarrow \FindNeighbors(v, S_t, i)$, and $\AsymEnc$ is ElGamal encryption (Def.\ref{def:elgamal}), 
           {\color{magenta} along with the signatures for each ciphertext}. 
    
    \item \textbf{Cross check step.}
    
    \textbf{Server performs the following:} 
    
    \quad Denote the set of clients that respond within timeout as $\mathcal{Q}_{vec}$. 
    
    \quad Compute partial sum $\Tilde{z_t} = \sum_{i\in \mathcal{Q}_{vec}} Vec_{i,t}$.
    
    \quad Build decryption request $req$ ($req$ consists of clients in $S_t$ to be labeled):
    
    \quad for each $i\in \mathcal{Q}_{graph}$,
        
    \quad \quad if $i\in \mathcal{Q}_{vec}$, label $i$ with ``online'', 
    
    \quad \quad \quad and {\color{magenta} attach $\AsymEnc(PK, m_{i,t})$};
    
    \quad \quad else label $i$ with ``offline'',
    
    \quad \quad \quad and attach $\{ \AsymEnc(PK, h_{i,j,t})\}_{j\in A_t(i) \cap \mathcal{Q}_{vec}}$.
    
    \quad Send to each $u\in \mathcal{D}$ the request $req$ and a set of the attached ciphertexts $\mathcal{E}_i$ (in order to recover the mask(s) of client $i$).

    \textbf{Each \com $u\in \mathcal{D}$ performs the following:} 
    
    \quad Upon receiving a request $req$, compute $\sigma_u^* \leftarrow Sign(sk_u, req\| t)$,
    and send $(req, \sigma_u^*)$ to all other decryptors via the server.

    \item \textbf{Reconstruction step.}
    
    \textbf{Each \com $u\in \mathcal{D}$ performs the following:} 
    
    \quad Ignore messages with signatures ($\sigma_i$ or $\sigma_u^*$) with round number other than $t$.
    
    \quad Upon receiving a message $(req, \sigma_u^*)$, run $b\leftarrow Vrf(pk_i, req, \sigma_u^*)$. 
    Ignore the message if $b= 0$. 
    
    \quad Abort if $u$ received less than $2\ell+1$ same messages that were not ignored. Denote such message as $req^*$. 
    
   \quad For $req^*$, continue only if 
    
    \quad \quad each client $i \in S_t$ is either labeled as ``online'' or ``offline'';
    
    \quad \quad the number of ``online'' clients is at least $(1-\delta) n_t$;
    
    \quad \quad all the ``online'' clients are connected in the graph;

    \quad \quad each online client $i$ has at least $k$ online neighbors such that $\eta^k < 2^{-\kappa}$.

    \quad For each $i \in \mathcal{Q}_{graph}$, 
    
    \quad \quad For each $(c_0, c_1) \in \mathcal{E}_i$, send $c_0^{s_u}$, {\color{magenta} along with the zero-knowledge proof for $\log _g (g^{s_u}) = \log_{c_0} (c_0)^{s_u}$}.

    \textbf{Server completes the sum:} 
    
    \quad Denote the set of decryptors whose messages have been received as $U$. Compute
    a set of interpolation coefficients $\{\beta_u\}_{u\in U}$ from $U$. 
    
    \quad For each $i \in \mathcal{Q}_{graph}$, {\color{magenta} verify the zero-knowledge proof}
          and reconstruct the mask $m_{i,t}$ or $\{h_{i,j,t}\}_{j\in A_t(i) \cap \mathcal{Q}_{vec}}$:

    \qquad for each ElGamal ciphertext $(c_0, c_1)$ of the mask, compute $c_1\cdot (\prod_{u\in U} (c_0^{s_u})^{\beta_u})^{-1} $;
    
    \quad Output $z_t = \Tilde{z_t} - \PRG(m_{i, t}) + \sum_{j\in A_t(i) \cap \mathcal{Q}_{vec}} \pm \PRG(h_{i,j,t})$.

\end{enumerate}
\end{tcolorbox}
\caption{Detect-and-abort version of collection protocol without DKC (Appendix~\ref{app:robust-protocol}). One can run DKC in parallel to the collection phase.}\label{fig:collection-robust}
\end{figure*}

\begin{proof}

When the server is corrupted by $\Adv$, the proof is essentially the same as the proof of Theorem~\ref{thm:security-main} (Appendix~\ref{app:proof-security-flamingo}).
Below we prove for the case where the server is not corrupted.
We construct a simulator $\Sim$ in the ideal world that runs $\Adv$ as 
  a subroutine.
We denote the ideal functionality for each round $t$ as $\mathcal{F}_{\text{sum-robust}}^{(t)}$, which executes steps 1), 2), and 3) in functionality $\mathcal{F}_{\text{sum-robust}}$.

\begin{enumerate}
    \item $\Sim$ obtains $z_t$ from $\mathcal{F}_{\text{sum-robust}}^{(t)}$.
    \item $\Sim$ received a set $\mathcal{O}_t$ from $\mathcal{R}_{\text{drop}}$ (see Appendix~\ref{app:proof-security-flamingo}). 
    \item (Report step) $\Sim$ computes offline/online labels from the set $\mathcal{O}_t$, and
          interact with $\Adv$ acting as the server.
          Here the collected masked inputs are all zeros and the seeds for the masks are generated by $\Sim$ itself.
    \item (Cross-check step) $\Sim$ acts as honest decryptors in the cross-check step, 
          and if all the honest decryptors would abort in the protocol
          prescription then $\Sim$ outputs whatever $\Adv$ outputs,
          and sends \texttt{abort} to $\Fsumt^{(t)}$ and halts.
    \item (Reconstruction step) $\Sim$ acts as the server in the reconstruction step, if the server would 
          abort in the protocol prescription, then $\Sim$ outputs 
          whatever $\Adv$ outputs, and sends \texttt{abort} to $\mathcal{F}_{\text{sum-robust}}^{(t)}$ and halts. 
    \item (DKC) $\Sim $ acts as the server in the DKC protocol~\cite[Figure 2]{bell22acorn}, if the server would 
          abort in DKC, then  $\Sim$ outputs 
          whatever $\Adv$ outputs, and sends \texttt{abort} to $\mathcal{F}_{\text{sum-robust}}^{(t)}$ and halts. 
\end{enumerate}

We construct a series of hybrid execution, starting from the real world to the ideal world execution. 

\heading{\small Hybrid 1.}
The view of $\Adv$ in the real world execution is the same as the view of $\Adv$ in the ideal world when $\Sim$ would have the actual inputs of honest parties, $\{\vec{x}_i\}_{i\in S_t \backslash \mathcal{C}}$, the masks, and the shares of the secret key $SK=s$ ($\Sim$ in fact would know 
the $s$ in full because of the threshold requirement $3\ell + 1 = L$ and $2\delta_D+\eta_D < 1/3$).

\heading{\small Hybrid 2.}
$\Sim$ now instead of using the actual secret key $s$, it replaces $s$ with 0
and sends the corresponding $|\mathcal{C}\cap \mathcal{D}| < L/3$ shares of 0 in $\mathbb{Z}_q$ to $\Adv$. 
This hybrid has identical distribution to the previous hybrid from the property of Shamir secret sharing. 

\heading{\small Hybrid 3.}
$\Sim$ now instead of using the actual masks of honest parties,
  it replaces each mask with a uniformly random mask generated by $\Sim$ itself. 
This hybrid has identical distribution to the previous hybrid. 
  
\heading{\small Hybrid 4.}
$\Sim$ now instead of using the actual inputs of honest parties
  $\{\vec{x}_i\}_{i\in S_t \backslash \mathcal{C}}$,
  it replaces each input vector with all zeros. 
Note that neither the masked vectors nor the vectors in the clear are
  seen by $\Adv$. 
This hybrid has identical distribution to the previous hybrid. 

\heading{\small Hybrid 5.}
Same as the previous hybrid, except that the label messages from honest decryptors in the last step 
  are replaced with the offline/online labels obtained from $\mathcal{R}_{\text{drop}}$. 
This hybrid is the ideal world execution, and the view of $\Adv$ in this hybrid is identical to the previous hybrid. 

\medskip

Combined with our analysis for the setup phase when the server is honest (Appendix~\ref{app:dkg}),
  we conclude that $\Psi_T$ securely realizes $\Fmalrb$ in the random oracle model. 
\end{proof}

}{%
}
}
{}

\label{lastpage}
\end{document}